\documentclass{amsart}
%\documentclass[dvipdfmx]{amsart}

%%%%%%%%%%%%%%%%%%%%%%%%%%% setlength %%%%%%%%%%%%%%%%%%%%%%%%%%%
\setlength{\textwidth}{170mm} \setlength{\textheight}{8.0in} \setlength{\oddsidemargin}{-0.225cm} \setlength{\evensidemargin}{-0.225cm} \setlength{\footskip}{30pt} \addtolength{\textheight}{.695in} \addtolength{\voffset}{-.55in} % delete this line for 10pt
\setlength{\parindent}{15pt} %no indent for paragraph
%%%%%%%%%%%%%%%%%%%%%%%%%%% setlength %%%%%%%%%%%%%%%%%%%%%%%%%%%

%%%%%%%%%%%%%%%%%%%%%%%%%%% usepackage %%%%%%%%%%%%%%%%%%%%%%%%%%
%\usepackage[notref,notcite]{showkeys}
\usepackage{amssymb}
\usepackage{amsthm}
\usepackage{amsfonts}
\usepackage{amsmath}
\usepackage{pmboxdraw}
\usepackage{verbatim} % for comment
\usepackage{graphicx}
\usepackage{color}
\usepackage[colorlinks=true, citecolor=blue, filecolor=black, linkcolor=black, urlcolor=black]{hyperref}
\usepackage{cite}
\usepackage[normalem]{ulem}
\usepackage{subcaption}
\usepackage{bbm}
\usepackage{bm}
\usepackage{mathtools}
\usepackage{todonotes}
\usepackage{kantlipsum}
\usepackage{braket}
\usepackage{mathrsfs}
\usepackage{multirow,makecell}  
\usepackage{forest}

\usepackage{tcolorbox}
\tcbuselibrary{most}

\allowdisplaybreaks
%%%%%%%%%%%%%%%%%%%%%%%%%%% usepackage %%%%%%%%%%%%%%%%%%%%%%%%%%

%%%%%%%%%%%%%%%%%%%%%%%%%%%%%%% def %%%%%%%%%%%%%%%%%%%%%%%%%%%%%

\newcommand{\RN}[1]{%
	\textup{\uppercase\expandafter{\romannumeral#1}}%
}

\def\C{\mathbb{C}}

\def\E{\mathbb{E}}
\def\P{\mathbf{P}}
\def\R{\mathbb{R}}

\def\cO{\mathcal{O}}
\def\cZ{\mathcal{Z}}
\def\cL{\mathcal{L}}
\def\cI{\mathcal{I}}
\def\zbar{\overline{z}}

\newcommand{\Pf}{{\textup{Pf}}}
\newcommand{\erfc}{\operatorname{erfc}}
\newcommand{\erf}{\operatorname{erf}}
\newcommand{\bfR}{\mathbf{R}}

\newcommand{\bfkappa}{{\bm \varkappa}}

\newcommand{\re}{\operatorname{Re}}
\newcommand{\im}{\operatorname{Im}}

%\newcommand{\bfR}{\mathbf{R}}
%%%%%%%%%%%%%%%%%%%%%%%%%%%%%%% def %%%%%%%%%%%%%%%%%%%%%%%%%%%%%

\theoremstyle{plain}
\newtheorem{thm}{Theorem}[section]

\newtheorem{lem}[thm]{Lemma}

\newtheorem{prop}[thm]{Proposition}

\theoremstyle{definition}

\newtheorem*{ex}{Example}

\theoremstyle{remark}

\numberwithin{equation}{section}

%%%%%%%%%%%%%%%%%%%%%%Color box definitions%%%%%%%%%%%%%%%%%%%%%%%%%%%%%%%%%%%
\newtcolorbox{example}[1]{breakable,%enhanced jigsaw,
colbacktitle=gray!50!white, fonttitle=\bfseries, coltitle=black, title=Example: {#1}}

\newtcolorbox{definition}[1]{breakable,%enhanced jigsaw,
colbacktitle=gray!50!white, fonttitle=\bfseries, coltitle=black, title=Definition: {#1}}

%%%%%%%%%%%%%%%%%%%%%%%introduction%%%%%%%%%%%%%%%%%%%%%%%%%%%%%%%%%

\begin{document}

\title[Pfaffian structure of the eigenvector overlap for the GinSE]{Pfaffian structure of the eigenvector overlap for\\ the symplectic Ginibre ensemble }

%%%%%%%%%%%%%%%%%%%%%%%%%%%%% author %%%%%%%%%%%%%%%%%%%%%%%%%%%%
\author{Gernot Akemann}
\address{Faculty of Physics, Bielefeld University, P.O. Box 100131, 33501 Bielefeld, Germany;\newline
School of Mathematics, University of Bristol, Fry Building, Woodland Road,
Bristol BS8 1UG, UK}
\email{akemann@physik.uni-bielefeld.de}

\author{Sung-Soo Byun}
\address{Department of Mathematical Sciences and Research Institute of Mathematics, Seoul National University, Seoul 151-747, Republic of Korea}
\email{sungsoobyun@snu.ac.kr}

\author{Kohei Noda}
\address{Universit\'{e} catholique de Louvain, Institut de Recherche en Math\'{e}matique et Physique Chemin du Cyclotron 2, 1348 Louvain-La-Neuve, Belgium}
\email{kohei.noda@uclouvain.be}
%%%%%%%%%%%%%%%%%%%%%%%%%%%%% author %%%%%%%%%%%%%%%%%%%%%%%%%%%%

\begin{abstract}
We study the integrable structure and scaling limits of the conditioned 
eigenvector overlap of the symplectic Ginibre ensemble of Gaussian non-Hermitian random matrices with independent quaternion elements. The average of the overlap matrix elements constructed from left and right eigenvectors, conditioned to $x$, are derived in terms of a Pfaffian determinant. Regarded as a two-dimensional Coulomb gas with the Neumann boundary condition along the real axis, it contains a kernel of skew-orthogonal polynomials with respect to the weight function $\omega^{\rm (over)}(z)=|z-\overline{x}|^2(1+|z-x|^2)e^{-2|z|^2}$, so including a non-trivial insertion of a point charge. The mean off-diagonal overlap is related to the diagonal (self-)overlap by a transposition, in analogy to the complex Ginibre ensemble. For $x$ conditioned to the real line, extending previous results at $x=0$, we determine the skew-orthogonal polynomials and their skew-kernel with respect to $\omega^{\rm (over)}(z)$. This is done in two steps and involves a Christoffel perturbation of the weight $\omega^{\rm (over)}(z)=|z-\overline{x}|^2\omega^{\rm (pre)}(z)$, by computing first the corresponding quantities for the unperturbed weight $\omega^{\rm (pre)}(z)$. Its kernel is shown to satisfy a differential equation at finite matrix size $N$. This allows us to take different large-$N$ limits, where we distinguish bulk and edge regime along the real axis. The limiting mean diagonal overlaps and corresponding eigenvalue correlation functions of the point processes with respect to $\omega^{\rm (over)}(z)$ are determined.
We also examine the effect on the planar orthogonal polynomials when changing the variance  in $\omega^{\rm (pre)}(z)$, as this appears in the eigenvector statistics of the complex Ginibre ensemble. 
\end{abstract}

%\date{\today}

\maketitle

%\tableofcontents

\section{Introduction} 
In random matrix theory, the study of Hermitian and non-Hermitian random matrices exhibits notable differences from many viewpoints. A particular example is in the study of their eigenvector statistics. For instance, the Gaussian Unitary Ensemble (GUE), one of the most basic models among Hermitian random matrices, is invariant under conjugation by unitary matrices (see e.g. \cite{Forrester10}). This, in turn, implies that the eigenvectors of the GUE are distributed according to the Haar measure on the unitary group. On the other hand, such a simple consequence does not hold for its non-Hermitian counterpart, the complex Ginibre Ensemble (GinUE), where one needs to consider the left and right eigenvectors separately.
This naturally yields the notion of the eigenvector overlap of the GinUE, pioneered by Chalker and Mehlig \cite{CM98,CM00}. The study of eigenvector overlap enjoys intimate connections with free probability based on a diagrammatical approach \cite{WS15,N18}, applications to quantum chaotic scattering in physics \cite{FM02,FS03,FS12, Gros14,MS01},  the stochastic process of eigenvalues for non-Hermitian matrix-valued Brownian motion \cite{BY17,GW18,EKY21,Y20}, and the stability of the spectrum under perturbation through the condition number \cite{BNST17,FT21}.

After the seminal works of Chalker and Mehlig, the statistics of the GinUE eigenvectors has been actively studied, see e.g. \cite{BD21,ATTZ20,ATTZ20a,F18,BY17,CR22,JNNPZ99},\cite[Section 6.4]{BF24}, and references therein. 
There have been several different approaches to eigenvector overlap statistics, which include probability theory \cite{BZ18,BD21}, planar orthogonal polynomials \cite{ATTZ20}, and supersymmetry \cite{F18}. Several variants of the GinUE have been further investigated, such as the elliptic GinUE \cite{CFW24,CW24}, induced GinUE \cite{Noda23a}, truncated ensembles \cite{D21v1,D23}, product of GinUEs \cite{BSV17}, and spherical ensembles \cite{Noda23b,D21v1}. In addition to the symmetry class of the GinUE, the eigenvector statistics of the real Ginibre ensemble (GinOE) and its variants has been studied in the literature \cite{CS22,F18,WTF23,FT21,TAR24,CFW24,CW24}.
We also mention that beyond the Gaussian ensembles, the eigenvector statistics of non-invariant ensembles such as Wigner matrices has been analysed extensively from the viewpoint of the universal localisation/delocalisation phenomenon (see e.g. \cite{BY17,KY13,TV12}, and the review \cite{RVW16}). We also refer to \cite{CEHS23,EC23} and references therein for recent progress on i.i.d. matrices.

In this paper we will focus on the eigenvector statistics of the symplectic Ginibre ensemble (GinSE). So far, fewer works have been devoted to this symmetry class, see \cite{D21v2,AFK20}. We will generalise the approach \cite{ATTZ20} using planar orthogonal polynomials from to the GinUE to the GinSE. %For that technique, the GinSE is simpler compared to the GinOE. 
The reason to study the GinSE in detail is two-fold. First, it has interesting applications, to Hamiltonians with random potential in an imaginary magnetic field, relating to vortices in superconductors \cite{KE99}, and a conjectured map to fermionic field theory exists \cite{Has00}, cf.\cite{DLMS19}. Furthermore, it represents a particular two-dimensional Coulomb gas of interest in its own right \cite{Forrester10,Se24}. We refer to the review \cite{BF24} for further references. The second reason why the GinSE is interesting is that the local behaviour of its eigenvalue correlations along the real line gives rise to a universal behaviour that is different from that of the GinUE and GinOE, see \cite{Kan02,ABK22,BE23,BES23}. Below we will focus on this region for the eigenvector correlations.

In the study of planar symplectic ensembles, progress has been slower compared to their complex counterparts (see e.g. \cite{AHM11,HW21}) due to the lack of a general theory for analysing the kernel of planar skew-orthogonal polynomials (SOPs), which usually takes the form of a double summation. A way to overcome this difficulty was introduced in \cite{Kan02,Ak05}, where a certain differential equation for the large-$N$ limit of skew-kernels was obtained. The implementation of this idea for the finite-$N$ kernel was achieved in \cite{ABK22} for the GinSE and later extended to several variants such as the induced GinSE \cite{BC23}, the elliptic GinSE \cite{BE23,BES23}, the induced spherical ensemble \cite{BF23a}, the truncated ensemble \cite{BF24}, and the non-Hermitian Wishart ensemble \cite{BN24}. We will follow this strategy here which will be crucial to take scaling limits of large matrix size $N$ later.

In order to formulate our results on the eigenvector statistics in the GinSE and to put them into context, we recall definitions and known results on eigenvalues and eigenvectors in the next Subsection \ref{prelim} as part of this introduction. Our main results are presented in Section \ref{Main} and are as follows. In Subsection \ref{Intstructure}, based on \cite{AFK20}, we present that at finite-$N$ the mean diagonal and off-diagonal overlap can be expressed in terms of the partition function and kernel of planar SOPs with respect to a new non-Gaussian weight function $\omega^{\rm (over)}(z)=|z-\overline{x}|^2(1+|z-x|^2)e^{-2|z|^2}$. It depends on the eigenvalue $x\in\mathbb{C}$ of the corresponding eigenvector overlap matrix element. We give Pfaffian expressions for a generalised mean diagonal overlap conditioned to more than one eigenvalue. In analogy to \cite{ATTZ20}, we give a relation between the mean off-diagonal overlap and the mean diagonal overlap via a transposition lemma. 

A substantial part of this paper is then devoted to the construction of the planar SOPs for the weight $\omega^{\rm (over)}(z)$, with the main findings summarised in Subsection \ref{Subsection_Main SOP}. This weight function can be interpreted as a particular insertion of point charges into the Gaussian weight. For that reason we also present the eigenvalue correlation functions for this weight in their own right. 
Because the construction of planar SOPs is difficult for this weight, we split the task in two steps by factorising $\omega^{\rm (over)}(z)=|z-\overline{x}|^2\omega^{\rm (pre)}(z)$. First, the SOPs, norms and kernel with respect to $\omega^{\rm (pre)}(z)$ are determined, for which we require $x \in \R$. This latter is the weight function appearing in the eigenvector overlap in the GinUE \cite{ATTZ20}. 
In passing we investigate properties of this weight for planar orthogonal polynomials for different variances in Appendix \ref{Appendix_Planar OP}.
The full answer for $\omega^{\rm (over)}(z)$ follows from a so-called Christoffel perturbation, cf. \cite{AEP22}, when multiplying $\omega^{\rm (pre)}(z)$ by the polynomial $|z-\overline{x}|^2$. (We also refer to \cite{SS23,ST03,AKS23,LY23,Ch22} and the references therein for related subjects in the context of conditional point processes or root type singularities.) 
This step again limits our result to the conditional eigenvalue to be real in all the following, $x\in\mathbb{R}$. 
A further crucial step is the derivation of a differential equation for the unperturbed kernel. This puts us in the position to obtain the large-$N$ limits in the bulk and at the edge of the real line, presented in Subsection \ref{Scalinglim}, including comparisons to numerics.

For pedagogical reasons, Section~\ref{S_Conditional Origin}, following the presentation of our main results in Section \ref{Main}, is a warm-up case where we present the special rotationally invariant case $\omega^{\rm (over)}(z)|_{x=0}=|z|^2(1+|z|^2)e^{-2|z|^2}$. Here, all the computations become significantly simpler and the SOPs, their kernel, its differential equation and the large-$N$ limit are obtained easily. 
The rest of the paper is dedicated to the proofs of our results, see the end of Section \ref{Main} for a link to the corresponding theorems.
 
%%%%%%%%%%%%%%%%%%%%%%%%%%%%%%%%%%%%%%%%%%%%%

\subsection{Preliminaries: eigenvalue and eigenvector correlations in the GinSE}\label{prelim}

In this section, we will summarise known results for both eigenvalue and eigenvector statistics of the GinSE, insofar as they will be needed to formulate our main results in the next section.
We begin with the definition of the ensemble and recall what is known about it. 
By using the $2 \times 2$ matrix complex-valued representation of the quaternions, the GinSE is defined by 
\begin{equation}
\label{GinSEM}
    G_{2N}:=
    \begin{pmatrix}
        A_N & B_N 
        \smallskip 
        \\
        -\overline{B}_N & \overline{A}_N
    \end{pmatrix},
\end{equation}
where $A_N$ and $B_N$ are independent $N \times N$ matrices whose elements are i.i.d. complex normal Gaussian random variables with mean 0 and variance $1/2$ for their real parts and for their imaginary parts. The distribution of matrix elements can thus be written as 
\begin{equation}
P(G_{2N})=C_{2N} \exp\left[ -\mbox{Tr}(G_{2N}G^*_{2N})\right],
\label{PG}
\end{equation}
where $C_{2N}$ is an appropriate normalisation constant, and $^*$ denotes the adjoint. It is invariant under unitary transformations, and thus the matrix $G_{2N}$ can be brought into upper triangular block diagonal form by a unitary symplectic matrix $U_{2N}\in \mbox{USp}(2N)/\mbox{U}(1)^N$,
\begin{equation}
G_{2N}= U_{2N}(D_{2N}+T_{2N})U^{-1}_{2N}.
\label{Schur}
\end{equation}
Here, $D_{2N}$ denotes the $2\times2$ block diagonal matrix
\begin{equation}
\label{DIAM}
D_{2N}:=\mathrm{diag}
    \begin{pmatrix}
        z_j & 0 \\
        0 & \zbar_j
    \end{pmatrix}_{j=1}^{N}
    =\begin{pmatrix}
        z_1 & 0 \\
        0 & \zbar_1
    \end{pmatrix}
    \oplus
    \cdots
    \oplus
    \begin{pmatrix}
        z_N & 0 \\
        0 & \zbar_N
    \end{pmatrix}.
\end{equation}  
It contains the $2N$ eigenvalues $z_j$ that come in complex conjugated pairs\footnote{Due to the non-commutativity of the quaternions, the eigenvalues $z_j$ are infinitely degenerate, after a transformation $qz_jq^{-1}$ with $q\neq0$. In the following we thus consider equivalence classes of complex eigenvalues, see e.g. \cite{AFK20} for details.}.
The matrix $T_{2N}$ is strictly upper block diagonal, with $N(N-1)/2$ non-zero quaternion elements in $2\times2$ complex matrix representation. 
Because of this, we obtain independent Gaussian distributions for the elements of $D_{2N}$ and $T_{2N}$, 
\begin{equation}
P(G_{2N})=C_{2N} \exp\left[ -\mbox{Tr}(D_{2N}D_{2N}^*+T_{2N}T_{2N}^*)\right].
\label{PDT}
\end{equation}
More general distributions of matrix elements have been considered \cite{BF24}, e.g. the induced or elliptic GinSE, or normal random matrices (with $T_{2N}=0$).

We first recall some basic properties of the statistics of complex eigenvalues which is well-understood, see \cite{BF24} for a comprehensive review. It is based on the formalism of planar SOPs for general weight functions $\omega : \mathbb{C} \to \mathbb{R}_{\ge 0}$, following \cite{Kan02}. 
First, the Jacobian for the change of variables \eqref{PDT} is independent of the matrix $T_{2N}$ \cite{Mehta}.
Furthermore, in the case when the distribution of matrix elements $D_{2N}$ and $T_{2N}$ decouple as in \eqref{PDT} (or when the latter are absent), the dependence on $T_{2N}$ can be integrated out and the resulting joint probability distribution function of  the set of complex eigenvalues $\{z_1,z_2,\ldots,z_N\}\in\C^{N}$ (and their complex conjugates) is given by 
\begin{equation}
\label{Gibbs}
d\P_N(z_1,\ldots,z_N):=
\frac{1}{Z_N} \prod_{1\leq j<k\leq N}|z_j-z_k|^2|z_j-\overline{z}_k|^2\prod_{j=1}^{N}|z_j-\overline{z}_j|^2 \omega(z_j)\,dA(z_j),
\end{equation}
where $dA(z)=d^2z/\pi$ is the area measure in the plane, and $Z_N$ is the normalisation constant
\begin{equation}
Z_N:=\int_{\C^N}  \prod_{1\leq j<k\leq N}|z_j-z_k|^2|z_j-\overline{z}_k|^2 \prod_{j=1}^{N}|z_j-\overline{z}_j|^2 \omega(z_j)\,dA(z_j).
\label{ZNdef}
\end{equation}
%Notice that in \eqref{Gibbs}, the interaction among the particles is given by the Green's function $-\log(|z-w||z-\bar{w}|)$ in the upper-half plane with the Neumann boundary condition along the real axis. 
We refer to \cite{Fo16} and \cite[Appendix A]{BF23a} for the interpretation of ensemble \eqref{Gibbs} from the perspective of Coulomb gas theory.
%In this paper, the three weight functions $\omega^{ \rm (g) }$, $\omega^{ \rm (over) }$ and $\omega^{ \rm (pre) }$ given by \eqref{Weight GinSE}, \eqref{OverlapWeight} and \eqref{PreOverlapWeight}, respectively, will be discussed. 

It turns out that for the eigenvector correlations we will have to consider more general weights than the Gaussian weight \eqref{Weight GinSE} resulting from \eqref{PDT}. 
The statistical behaviour of the complex eigenvalues in planar symplectic ensembles is encoded in the $k$-point correlation functions  defined as
\begin{equation}
\label{Rkdef}
    \bfR_{N,k}(
    z_1,\ldots,z_k) := \frac{N!}{(N-k!)}\int_{\C^{N-k}}
    \P_N(z_1,\dots,z_N)\prod_{j=k+1}^N  dA(z_j).
\end{equation}
A notable feature is that they form a Pfaffian point process, namely, for any $k=1,2,\dots,N$, we have 
\begin{equation} \label{def of Pfaff Struc}
\bfR_{N,k}(z_1,\ldots,z_k)
= \Pf \bigg[
\begin{pmatrix}
\bfkappa_N(z_j,z_{\ell}) & \bfkappa_N(z_j,\zbar_{\ell}) 
\smallskip 
\\
\bfkappa_N(\zbar_j,z_{\ell}) & \bfkappa_N(\zbar_j,\zbar_{\ell})
\end{pmatrix}
\bigg]_{j,\ell=1}^k 
\prod_{j=1}^{k}(\zbar_j-z_j) \, \omega(z_j). 
\end{equation}
Here, the $2\times 2$ matrix valued kernel contains the skew-kernel $\bfkappa_N$ (also sometimes called pre-kernel) as a building block. It can be expressed in terms of SOPs $q_k$ and their norms $r_k$ defined below as  
\begin{equation}
\bfkappa_{N}(z,w) = \sum_{k=0}^{N-1}
\frac{ q_{2k+1}(z)q_{2k}(w)-q_{2k}(z)q_{2k+1}(w) }{r_{k}}.
\end{equation}
Here, the skew-inner product with respect to some measure $d\mu(z)=w(z) \, dA(z)$ on $\mathbb{C}$ is defined as 
\begin{equation}
\label{inner_product}
\langle f, g\rangle_s : =\int_{\mathbb{C}} \Big( f(z)\overline{g(z)}-g(z)\overline{f(z)} \Big)(z-\overline{z}) \, d\mu(z).
\end{equation}
A family of polynomials $(q_k)_{k\in\mathbb{Z}}$ is called planar SOP associated with weight $w(z)$ if  
    \begin{equation}
        \langle q_{2k},q_{2\ell} \rangle_s=\langle q_{2k+1},q_{2\ell+1} \rangle_s=0,\qquad 
        \langle q_{2k},q_{2\ell+1} \rangle_s=-\langle q_{2\ell+1},q_{2k} \rangle_s=r_k \, \delta_{k,\ell},
        \label{SQPdef}
    \end{equation}
    where $r_k$ is their skew-norm which is positive. 
It also follows from de Bruijn’s integration formula that the partition function $Z_N$ can be expressed as
\begin{equation}
Z_N = N! \prod_{k=0}^{N-1} r_k,
\label{ZNprod}
\end{equation}
see e.g. \cite{Kan02}, \cite[Remark 2.5]{AEP22}.

In the case of the GinSE, the weight function is given by the Gaussian distribution
\begin{equation} \label{Weight GinSE}
\omega^{ \rm (g) } (z):= e^{-2|z|^2}, 
\end{equation}
with scalar product \eqref{inner_product} with respect to $d\mu(z)=\omega^{ \rm (g) } (z)\,dA(z)$,
and \eqref{Gibbs} gives the distribution of eigenvalues of the GinSE. 
In the sequel, we add the superscript $\rm (g)$ for the associated objects of the Gaussian weight \eqref{Weight GinSE}. 
In this case, we have \cite{Kan02}
\begin{equation}
\label{GinSESOP}
q_{2k+1}^{(\mathrm{g})}(z)=z^{2k+1},\qquad
q_{2k}^{(\mathrm{g})}(z)=\sum_{\ell=0}^{k}\frac{k!}{\ell!}z^{2\ell},\qquad
r_{k}^{(\mathrm{g})}=\frac{(2k+1)!}{2^{2k+1}}.
\end{equation}
%see Section~\ref{S_Conditional Origin}. 
As a consequence, it follows that 
\begin{equation}
\label{Partition_GinSE}
Z_N^{({\rm g})}=N!\prod_{k=0}^{N-1}\frac{(2k+1)!}{2^{2k+1}} .
\end{equation}
Furthermore, we have 
\begin{align}
\begin{split}
\label{GinSEKernel}
\bfkappa_{N}^{(\mathrm{g})}(z,w) = \sqrt{2} \bigg[
\sum_{k=0}^{N-1}\sum_{\ell=0}^{k} 
\Bigl( 
\frac{(\sqrt{2}z)^{2k+1}}{(2k+1)!!}\frac{(\sqrt{2}w)^{2\ell}}{(2\ell)!!}
-\frac{(\sqrt{2}w)^{2k+1}}{(2k+1)!!}\frac{(\sqrt{2}z)^{2\ell}}{(2\ell)!!}
\Bigr)
\bigg]. 
\end{split}
\end{align}  
%The empirical measure $\frac{1}{2N}\sum_{k=1}^{N}(\delta_{\sqrt{N}z_k}+\delta_{\sqrt{N}\zbar_k})$ weakly converges to the uniform distribution on the unit disk $\D:=\{z\in\C:|z|\leq 1\}$. 
We also mention that as $N \to \infty$, the normalised eigenvalues $z_j \mapsto z_j/\sqrt{N}$ of the GinSE tend to be uniformly distributed on the unit disc, known as the circular law \cite{BC12}. 

\medskip 

We now summarise what is known about  the GinSE eigenvector statistics, following \cite{AFK20,D21v2}.  
The eigenvalues $\{z_j\}_{j=1}^N$ induce the left and right eigenvectors  
\begin{equation}
\label{REV}
G_{2N}R_j=z_jR_j, \qquad 
L_j^{\mathsf{t}}G_{2N}=z_jL_j^{\mathsf{t}}. 
\end{equation}
We write $R_{\overline{j}}:=\cI R_{j}$ and $L_{\overline{j}}:=\cI L_j$,  where the map $\cI:\C^{2N}\to\C^{2N}$ is given by 
\begin{equation*}
\label{cImap}
\cI\left(
\begin{bmatrix}
\boldsymbol{u} \\
\boldsymbol{v}
\end{bmatrix}
\right)
=
\begin{bmatrix}
-\overline{\boldsymbol{v}} \\
\overline{\boldsymbol{u}}
\end{bmatrix},
\quad
\boldsymbol{u},\boldsymbol{v}\in\C^{N}. 
\end{equation*} 
Then, it follows that
\begin{equation}
\label{LEV}
G_{2N} R_{\overline{j}} =\overline{z}_j R_{\overline{j}} ,\qquad 
L_{\overline{j}}^{\mathsf{t}}G_{2N}=\overline{z}_j L_{\overline{j}}^{\mathsf{t}}, 
\end{equation}
see e.g. \cite[Lemma 2.3]{LOR12}. 
These left and right eigenvectors form a bi-orthogonal system \cite[Eq.(2.7)]{AFK20} 
\begin{equation}
\label{BOR}
 L_j \cdot R_k = \delta_{j,k},\qquad 
 L_j, \cdot R_{ \overline k } = 0,\qquad 
 L_{ \overline j } \cdot R_k = 0,\qquad 
 L_{ \overline j } \cdot R_{ \overline k }  = \delta_{j,k}, 
\end{equation}
where $ \boldsymbol{x} \cdot \boldsymbol{y} =\boldsymbol{x}^{\mathsf{t}}\overline{\boldsymbol{y}}$ is the Euclidean inner product on $\C^{2N}$. 
Then, the matrix of non-orthogonality overlaps, also called Chalker-Mehlig-correlators, is defined by
\begin{equation}
\label{cOP}
\cO
=
\begin{pmatrix}
\cO_{j,k} & \cO_{j,\overline{k}} 
\smallskip 
\\
\cO_{\overline{j},k} & \cO_{\overline{j},\overline{k}}
\end{pmatrix}_{j,k=1}^{N}
:= \begin{pmatrix}
 L_j \cdot L_{k} \,  R_j \cdot R_{k}  &  L_j \cdot L_{\overline{k}} \, R_j \cdot R_{\overline{k}}
\smallskip  
\\
 L_{\overline{j}} \cdot L_{k} \, R_{\overline{j}} \cdot R_{k}   &  L_{\overline{j}} \cdot L_{\overline{k}} \, R_{\overline{j}} \cdot R_{\overline{k}}  
\end{pmatrix}_{j,k=1}^{N}.
\end{equation}   
Its diagonal elements $\cO_{j,j}$ and  $\cO_{\overline{j},\overline{k}}$ are called diagonal overlaps (or self-overlaps) and the remaining elements off-diagonal overlaps.
We collect some basic properties of these, cf. \cite{AFK20}:
\begin{itemize}
    \item For each $j$,  $\cO_{j,\overline{j}}=\cO_{\overline{j},j}=0$, and for $j\not= k$, $\overline{\cO}_{\overline{j},k}=\cO_{j,\overline{k}}$; 
    \smallskip 
    \item  $\sum_{k=1}^{N}\cO_{k,k}=1$ and $\sum_{k=1}^{N}\cO_{\overline{k},\overline{k}}=1$;
    \smallskip 
    \item The overlap is scale-invariant under the scale-transform $R_j\mapsto c_j\,R_j$, $L_j\mapsto c_j^{-1}\,L_j$ for $c_j \not =0.$
\end{itemize}
The overlap matrix elements in \eqref{cOP} depend on both complex eigenvalues $D_{2N}$ and the matrix $T_{2N}$ in the Schur decomposition \eqref{Schur}. Therefore, their (conditional) expectation value with respect to the Gaussian measure \eqref{PG}, which we denote by $\E_N$, is a nontrivial computation. In fact, the integral over the matrix elements $T_{2N}$ in such an average can be performed \cite{D21v2,AFK20}, and the remaining integral is over the complex eigenvalues in $D_{2N}$ of the form \eqref{Gibbs}.
Let us define the mean diagonal and mean off-diagonal overlaps conditioned to certain $x_1,x_2\in\C$, with $x_1\neq x_2$, as follows: from \cite[Eqs. (2.23a), (2.23b)]{AFK20}\footnote{Compared to \cite{AFK20}, we use a slightly different convention for the normalisation, without dividing by \(N\) or \(N^2\).}
\begin{align}
D_{1,1}^{(N,1)}(x_1) &:= \E_N\bigg[
\sum_{l=1}^{N} 
	\delta(x_1-z_l)\mathcal{O}_{l,l} 
	\bigg]=
	N\E_N\left[\delta(x_1-z_1)\mathcal{O}_{1,1} \right],
	\label{O1point}
 \\
 D_{1,2}^{(N,2)}\left(x_1,x_2\right) &:=  
	\E_N\bigg[  
	\sum_{ \substack{ k,l=1;\ k\neq l}}^{N}  
	\delta(x_1-z_k)\delta(x_2-z_l)\mathcal{O}_{k,l} \bigg]
= 
 N(N-1)\E_N\left[  \delta(x_1-z_1)\delta(x_2-z_2)\mathcal{O}_{1,2} \right], 
	\label{O2point}
 \\ \widetilde{D}_{1,\overline{2}}^{(N,2)}\left(x_1,x_2\right) &:=  
	%\frac{1}{N^2}
	\E_N\bigg[ 
	\sum_{ \substack{ k,l=1;\ k\neq l}}^{N}  	
\delta(x_1-{z_k})\delta(x_2-\overline{z}_l)\mathcal{O}_{{k},\overline{l}}  
\bigg]
=
N(N-1)
\E_N\left[  \delta(x_1-z_1)\delta(x_2-\overline{z}_2)\mathcal{O}_{1,\overline{2}} \right].  \label{tO2point}
\end{align}    
In the respective second equalities we have used that the corresponding integrals over complex eigenvalues given in \eqref{O11ev} and \eqref{O12ev} below,  are invariant under permutation of the indices of eigenvalues and overlaps (and under complex conjugation). Thus the average over say all diagonal overlap matrix elements can be expressed in terms of those of a single matrix element $\mathcal{O}_{l,l} $, where without loss of generality we have chosen indices 1 and 2 of the matrix \eqref{cOP} (as well as the corresponding overlined indices). Furthermore, it turns out as a result of the computations in \cite{AFK20} that the averages of further overlap matrix elements, replacing $\mathcal{O}_{l,l}$ by $\mathcal{O}_{\overline{l},\overline{l}}$ in \eqref{O1point}, 
$\mathcal{O}_{l,k}$ by $\mathcal{O}_{\overline{l},\overline{k}}$ in \eqref{O2point} and 
$\mathcal{O}_{l,\overline{k}}$ by $\mathcal{O}_{\overline{l},{k}}$ in \eqref{tO2point}, respectively, lead to the same quantities. Thus the definitions \eqref{O1point}, \eqref{O2point} and \eqref{tO2point} above contain all cases of averaged overlap matrix elements. 
The expression for the mean diagonal overlap \eqref{O1point} in terms of the remaining integrals over the complex eigenvalues 
reads  \cite[Theorem 3.4]{D21v2} or \cite[Eq. (3.20)]{AFK20},
\begin{align}
 D_{1,1}^{(N,1)}(x) &=
 \frac{N}{Z_N^{(\rm g)}}
 |x-\overline{x}|^2e^{-2|x|^2}
 \int_{\C^{N-1}}
 \prod_{2\leq k<l\leq N}
 |z_l-z_k|^2|z_l-\overline{z}_k|^2 \prod_{l=2}^N|z_l-x|^2|z_l-\overline{x}|^2 \nonumber 
 \\
 & \quad \times
 \prod_{l=2}^{N}\Bigl( 
1+\frac{1}{2|z_l-x|^2}+\frac{1}{2|z_l-\overline{x}|^2}
 \Bigr)\prod_{l=2}^{N}|z_l-\overline{z}_l|^2e^{-2|z_l|^2}\, dA(z_l) \nonumber 
 \\
 & = \frac{N}{Z_{N}^{\rm (g)}} |x-\overline{x}|^2 e^{-2|x|^2} \int_{\mathbb{C}^{N-1}} 
	\prod_{ 2\leq k < l \leq N } 
	|z_l-z_k|^2|z_l-\overline{z}_k|^2\prod_{l=2}^{N}|z_l-\overline{z}_l|^2\ \omega^{ \rm (over) }(z_l) \,  dA(z_l),
	\label{O11ev}
\end{align} 
defining the following weight function,  after using symmetries of the integral:
\begin{equation}
\omega^{ \rm (over)}(z):= \omega^{ \rm (over)}(z,\overline{z}|x,\overline{x}):=
%|z-\overline{x}|^2\left(1+|z-x|^2\right)e^{-2|z|^2}
(z-\overline{x})(\overline{z}-x)\big(1+(z-x)(\overline{z}-\overline{x})\big)e^{-2z\overline{z}}.
\label{OverlapWeight}
\end{equation}
In addition to \(D_{1,1}^{(N,1)}\), the mean off-diagonal overlaps \(D_{1,2}^{(N,2)}\) and \(\widetilde{D}_{1,\overline{2}}^{(N,2)}\), given in \eqref{O2point} and \eqref{tO2point}, respectively, can also be defined in terms of an integral over the Gibbs measure, see \eqref{O12ev} below.
This new weight function depends on $z,\overline{z}, x, \overline{x}$, which we will suppress in our notation in some of the following. Due to the symmetry of the integral it is also possible to choose a weight $\omega^{ \rm (over)}(z)=
|z-{x}|^2\left(1+|z-\overline{x}|^2\right)e^{-2|z|^2}$. However, if we want to relate the mean off-diagonal overlap to this expression below we will need to choose \eqref{OverlapWeight}. 
As a consequence, the mean diagonal overlap is, up to prefactors,  given by the partition function $Z_{N-1}^{\rm (over)}(x)$ of $N-1$ points with respect to the new weight function $\omega^{ \rm (over)}(z)$ in \eqref{OverlapWeight}. We will thus have to determine the SOPs, norms and skew-kernel with respect to $\omega^{ \rm (over)}(z)$, in order to determine the mean diagonal overlap explicitly at finite-$N$. 
Thus we shall add the superscripts (over) for the associated SOPs, skew kernel and norms.

This computation will be done in two steps, for the following reason. It turns out that for the construction of these objects it is the third factor in \eqref{OverlapWeight} that poses most difficulties. We will therefore introduce an auxiliary {pre-weight} function 
\begin{equation}
\label{PreOverlapWeight}
\omega^{({\rm pre})}(z) \equiv \omega^{({\rm pre})}(z,\overline{z}|x,\overline{x}):=(1+|z-x|^2)e^{-2|z|^2} ,\qquad x\in\C.
\end{equation}
This is the weight function (with exponent $-|z|^2$ though) for planar orthogonal polynomials in the eigenvectors of the GinUE \cite{ATTZ20}, cf. Appendix \ref{Appendix_Planar OP}.
After having determined the objects labelled with (pre) such as $q_k^{ \rm (pre) }$, $r_k^{(\rm pre)}$ and $ \bfkappa_N^{ \rm (pre) }$, the corresponding quantities with respect to the weight \eqref{OverlapWeight} follow by applying the analogue of the so-called Christoffel perturbation in the complex plane, cf. \cite{AB07,AEP22}. This is because of the multiplicative relation
\begin{equation}
\label{Over-pre}
\omega^{(\rm over)}(z) = |z-\overline{x}|^2 \omega^{({\rm pre})}(z), 
\end{equation}
see Proposition~\ref{Prop_Christoffel pertubation} for details below. 
The limitation of our approach using \cite[Theorem 5.1]{AEP22} (that is based on \cite{AB07}) is that it only holds for a Christoffel perturbation by $x\in\mathbb{R}$, a real quantity\footnote{The proof in \cite{AB07} is based on the idea of extending the Vandermonde determinant of size $2N$ in \eqref{Gibbs} of eigenvalues and complex conjugates to $2N+1$, when including the Christoffel perturbation by $|z-m|^2$ of the initial weight. To obtain a dependence on $m$ and $\overline{m}$ as independent variables, we would have to perturb the weight by two factors  $|z-m|^2|z-\overline{m}|^2$ which is not the case here.}.

A similar result to \eqref{O11ev} was obtained in \cite{AFK20} for the mean off-diagonal overlap, see \eqref{O12ev} below. 
In the next subsection containing our main results we will present Lemma~\ref{Lem_D12 D11} below that allows to obtain the mean off-diagonal overlap   $D_{1,2}^{(N,2)}(x_1,x_2)$ as a function of $x_1,\overline{x}_1,x_2,\overline{x}_2$ from the mean diagonal overlap $D_{1,1}^{(N,2)}(x_1,x_2)$, conditioned on a second eigenvalue $x_2$, see \eqref{O11evk}  for $k=2$ below, by exchanging 
$\overline{x}_1\leftrightarrow\overline{x}_2$. This is in complete analogy to the result in \cite{ATTZ20} for the eigenvector overlaps in the complex Ginibre ensemble. Furthermore, when conditioning the mean diagonal and off-diagonal overlap on more eigenvalues, we obtain a Pfaffian structure for finite-$N$, generalising the integrable determinantal structure in \cite{ATTZ20}. 
Because in the present paper we have been unable to determine the mean diagonal overlap 
as a function of the independent variables $x_1$ and $\overline{x}_1$, being restricted to real $x_1=a\in\mathbb{R}$, we cannot exploit the relation between mean diagonal and off-diagonal overlap further in this work.

\medskip

We finish this overview over existing results for eigenvectors statistics in the GinSE with known results for the large-$N$ limit. In \cite{D21v2} several results are derived, including the angle between eigenvectors.  
We mention in particular \cite[Theorem 3.5]{D21v2}, where Dubach showed that the distribution of the diagonal overlap $\cO_{1,1}/N$ conditioned to have the eigenvalue $z_1=0$ at the origin, converges to $(\gamma_4/2)^{-1}$ in law as $N\to\infty$, where $\gamma_{\alpha}$ is the gamma distribution with parameter $\alpha$. As a corollary of this result, the expectation of the diagonal overlap conditioned at the origin is given by $2/3$.

The method in \cite{D21v2} applied by Dubach is based on the Schur decomposition \eqref{Schur} and beta-gamma algebra, which seems to work only for the case conditioned at the origin. 
The extension of the result for the mean diagonal overlap to real points, including the bulk and edge, will be presented in Theorem~\ref{Thm_CEO} below. Indeed, the origin case is special and can be treated in an easier way, as can be seen from setting $x=0$ in the weight function \eqref{OverlapWeight}. Then, the resulting weight becomes rotationally invariant, the odd SOPs become monomials as in the Gaussian case, and the computation simplifies substantially. In Section~\ref{S_Conditional Origin}, we provide an alternative and short derivation of Dubach's result for the conditional expectation of the diagonal overlap using the SOPs.

Finally, it was heuristically observed in \cite[Eq.(4.12)]{AFK20} that the leading order of the mean diagonal overlap is given by 
\begin{equation}\label{mean diagonal global density}
 \frac{1}{N^2}D_{1,1}^{(N,1)}(z) \approx \frac{1}{\pi}(1-|z|^2), \qquad \textup{for } |z|<1, \textup{ and } z \not \in \R. 
\end{equation}
For the mean off-diagonal overlap the corresponding heuristic result 
reads
\begin{equation}
  \frac{1}{N^2}D_{1,2}^{(N,2)}(x_1,x_2) \approx 
  \frac{1-x_1\overline{x}_2}{\pi^2|x_1-x_2|^4}
\end{equation}
for $x_1\neq x_2$, with $x_1,x_2\notin\mathbb{R}$, and both points inside the unit disc. Both heuristic results are on a macroscopic scale and agree with the corresponding results for the complex Ginibre ensemble \cite{CM98,CM00}, thus being conjectured to be universal. Because we will have to restrict ourselves to the real line, $x_1=a\in\mathbb{R}$, where complex and symplectic Ginibre ensemble differ,  we expect to be in a different universality class here.

\section{Main results}\label{Main}

\subsection{Integrable structure of overlaps}\label{Intstructure}

We begin by presenting the integrable Pfaffian structure of the mean diagonal overlap in the GinSE. Here, we generalise the findings of a corresponding determinantal structure of the overlaps in the GinUE in \cite{ATTZ20}. Let us consider a generalised mean diagonal overlap, by integrating over $N-k$ eigenvalues in  \eqref{O11ev} only:
\begin{align}
\begin{split}
D_{1,1}^{(N,k)}(z_1,\dots,z_k)&:= \frac{N!}{(N-k)!}
 \frac{1}{Z_{N}^{\rm (g)}} |z_1-\overline{z}_1|^2e^{-2|z_1|^2}
\\
&\quad \times  \int_{\mathbb{C}^{N-k}} 
	\prod_{ 2\leq  j<l \leq N } 
	|z_j-z_l|^2|z_j-\overline{z}_l|^2 \prod_{l=2}^{N}|z_l-\overline{z}_l|^2\omega^{ \rm (over) }(z_l,\overline{z}_l|z_1,\overline{z}_1)\prod_{j=k+1}^N	dA(z_j)
	\label{O11evk}
\end{split}
\end{align}
for $k=1,\dots,N$. As mentioned already, for $k=1$ we have that $D_{1,1}^{(N,1)}(z_1)$ is proportional to the partition function 
$Z_{N-1}^{\rm (over)}(z_1)$ of $N-1$ points, as all complex eigenvalues $z_2,\ldots,z_N$ are integrated out. For $k\geq2$ it is a $(k-1)$-point correlation function with respect to the weight $\omega^{ \rm (over) }(z)$, and thus \eqref{def of Pfaff Struc} immediately applies, adjusting the range of $k$, and with $N\to N-1$ therein. This leads to the following lemma.
\begin{lem}[\textbf{Pfaffian structure of the mean diagonal overlap}]
For $z_1,\dots,z_k\in\C$, the generalised mean diagonal overlap \eqref{O11evk}
%\eqref{Overlap k_point correlation function} 
is given by 
\begin{equation}
 D_{1,1}^{(N,1)}(z_1) = 
 \frac{NZ_{N-1}^{(\rm over)}(z_1)}{Z_N^{(\rm g)}}
 |z_1-\overline{z}_1|^2e^{-2|z_1|^2}
\end{equation}
for $k=1$ and 
\begin{align}
    \begin{split}
D_{1,1}^{(N,k)}(z_1,\dots,z_k)&= 
 \frac{NZ_{N-1}^{(\rm over)}(z_1)}{Z_{N}^{\rm (g)}} |z_1-\overline{z}_1|^2e^{-2|z_1|^2} \prod_{j=2}^{k}(\zbar_j-z_j)\ \omega^{(\rm over)}(z_j,\overline{z}_j|z_1,\overline{z}_1)
\\
&\qquad\times \Pf \bigg[
\begin{pmatrix}
\bfkappa_{N-1}^{(\rm over)}(z_j,z_{\ell}|z_1,\overline{z}_1) & \bfkappa_{N-1}^{(\rm over)}(z_j,\zbar_{\ell}|z_1,\overline{z}_1) 
\smallskip 
\\
\bfkappa_{N-1}^{(\rm over)}(\zbar_j,z_{\ell}|z_1,\overline{z}_1) & \bfkappa_{N-1}^{(\rm over)}(\zbar_j,\zbar_{\ell}|z_1,\overline{z}_1)
\end{pmatrix}
\bigg]_{j,\ell=2}^k ,
\label{O11kPf}
    \end{split}
\end{align}
for $k\geq2$. 
Here, the notation $\bfkappa_{N-1}^{(\rm over)}(z_j,z_{\ell}|z_1,\overline{z}_1)$ indicates that the skew-kernel also depends on the conditioning point $z_1$ (and its complex conjugate).
\end{lem}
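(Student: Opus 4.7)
The strategy is to recognize the integral in \eqref{O11evk} as the partition function (for $k=1$) or as a $(k-1)$-point correlation function (for $k\ge 2$) of the planar symplectic point process on $N-1$ points governed by the weight $\omega^{(\rm over)}(\cdot\,|z_1,\overline{z}_1)$, and then invoke the general Pfaffian structure \eqref{def of Pfaff Struc} applied to this reduced ensemble.

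First, I would carefully separate the $z_1$-dependence. In \eqref{O11evk}, the factor $|z_1-\overline{z}_1|^2 e^{-2|z_1|^2}$ is explicit in front, while the remaining product ranges only over $j,\ell \ge 2$. The integrand inside the integral is precisely the joint density \eqref{Gibbs} for $N-1$ eigenvalues $z_2,\dots,z_N$ with the weight $\omega^{(\rm over)}(z_l,\overline{z}_l | z_1,\overline{z}_1)$, except that it is unnormalized by $Z_{N-1}^{(\rm over)}(z_1)$. For $k=1$, all $N-1$ remaining eigenvalues are integrated out and the integral gives $Z_{N-1}^{(\rm over)}(z_1)$ by the definition \eqref{ZNdef}; combining this with the prefactor $N/Z_N^{(\rm g)}$ from $\frac{N!}{(N-1)!}\cdot\frac{1}{Z_N^{(\rm g)}}$ yields the first formula.

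For $k\ge 2$, exactly $N-k$ variables are integrated out. By multiplying and dividing by $Z_{N-1}^{(\rm over)}(z_1)$ and using the definition \eqref{Rkdef}, the integral equals
\begin{equation*}
\frac{(N-k)!}{(N-1)!}\, Z_{N-1}^{(\rm over)}(z_1)\, \bfR^{(\rm over)}_{N-1,k-1}(z_2,\dots,z_k \mid z_1,\overline{z}_1),
\end{equation*}
where $\bfR^{(\rm over)}_{N-1,k-1}$ denotes the $(k-1)$-point correlation function of the reduced planar symplectic ensemble with weight $\omega^{(\rm over)}(\cdot | z_1,\overline{z}_1)$ on $N-1$ points. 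Combining with the prefactor $\frac{N!}{(N-k)!}\cdot\frac{1}{Z_N^{(\rm g)}}$ gives the overall multiplicative constant $\frac{N Z_{N-1}^{(\rm over)}(z_1)}{Z_N^{(\rm g)}}|z_1-\overline{z}_1|^2 e^{-2|z_1|^2}$, as in \eqref{O11kPf}.

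Finally, I would invoke the general Pfaffian point process formula \eqref{def of Pfaff Struc} with $N$ replaced by $N-1$, $k$ replaced by $k-1$, and weight $\omega \to \omega^{(\rm over)}(\cdot|z_1,\overline{z}_1)$. This expresses $\bfR^{(\rm over)}_{N-1,k-1}(z_2,\dots,z_k|z_1,\overline{z}_1)$ as the product $\prod_{j=2}^{k}(\overline{z}_j-z_j)\,\omega^{(\rm over)}(z_j,\overline{z}_j | z_1,\overline{z}_1)$ times the Pfaffian of the $2\times 2$ block matrix built from $\bfkappa_{N-1}^{(\rm over)}(\cdot,\cdot|z_1,\overline{z}_1)$, which is exactly the right-hand side of \eqref{O11kPf}. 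No step is a genuine obstacle here: the result is an immediate consequence of the general integrable structure \eqref{def of Pfaff Struc} of planar symplectic ensembles applied to the reduced $(N-1)$-particle system obtained after conditioning on $z_1$; the only care needed is in tracking the combinatorial factors $N!/(N-k)!$ versus $(N-1)!/(N-k)!$ and correctly isolating the explicit $z_1$-prefactor from the weight.
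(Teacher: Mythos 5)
Your proof is correct and follows essentially the same route as the paper: the paper explains, immediately before the lemma statement, that for $k=1$ the integral yields the partition function $Z_{N-1}^{(\rm over)}(z_1)$ of $N-1$ points, and for $k\ge 2$ it is a $(k-1)$-point correlation function with respect to $\omega^{(\rm over)}$, to which the general Pfaffian formula \eqref{def of Pfaff Struc} applies with $N\to N-1$. Your careful bookkeeping of the combinatorial factors $\frac{N!}{(N-k)!}\cdot\frac{(N-k)!}{(N-1)!}=N$ matches the prefactor $N Z_{N-1}^{(\rm over)}(z_1)/Z_N^{(\rm g)}$ in the statement.
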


We turn to the mean off-diagonal overlap defined in \eqref{O2point}. It can be expressed as an integral over the complex eigenvalues following \cite{AFK20}\footnote{As for the mean diagonal overlap in \eqref{O11ev} there is a certain choice made here in \eqref{O12ev} based on symmetry, for the bracket in the second line of the equation, see details in \cite{AFK20}.}
%%%
\newpage
\begin{align}
D_{1,2}^{(N,2)}(z_1,z_2)
&  = -\frac{N(N-1)}{Z_{N}^{\rm (g)}} |z_1-\overline{z}_1|^2 |z_2-\overline{z}_2|^2|z_1-\overline{z}_2|^2  
     e^{ -2|z_1|^2-2|z_2|^2}  \int_{\mathbb{C}^{N-2}} 
	\prod_{ 3\leq k < l \leq N } 
	|z_l-z_k|^2|z_l-\overline{z}_k|^2 
	\nonumber
 \\
&\quad \times  \prod_{j=3}^{N}  |z_1-\overline{z}_j|^2 |z_2-\overline{z}_j|^2\left(|z_1-{z}_j|^2 |z_2-{z}_j|^2+(\overline{z}_1-\overline{z}_j) (z_2-{z}_j)\right)
|z_j-\overline{z}_j|^2  e^{-2|z_j|^2} \, dA(z_j).
	\label{O12ev}
\end{align} 
The second quantity $\widetilde{D}_{1,\overline{2}}^{(N,2)}\left(z_1,z_2\right)$ defined in \eqref{tO2point} is obtained by exchanging $z_2\leftrightarrow \overline{z}_2$ in the above formula \eqref{O12ev}, see \cite{AFK20}. 
In analogy to \eqref{O11evk} we could define the mean off-diagonal overlap conditioned on more than two eigenvalues, yet with another weight function as in the second line in \eqref{O12ev}. This quantity then also has a Pfaffian structure as in \eqref{O11kPf}, with a different kernel. While lacking a more explicit expression for this kernel we shall not pursue this direction further.

We are now in the position to write down a simple relation between the mean diagonal overlap \eqref{O11evk} conditioned to two eigenvalues at $k=2$ and the mean off-diagonal overlap \eqref{O12ev}.
Let $\widehat{T}$ be the transposition operator acting on functions $g$ on $\mathbb{C}^{2k}$, with $k\geq2$. Here, we will treat a complex variable $x$ and its complex conjugate $\overline{x}$ as independent. The function $g$ thus depends on (at least) 
the set of four variables $z_1,\overline{z}_1,z_2,\overline{z}_2$. The action of  $\widehat{T}$ is defined by exchanging the variables 
$\overline{z}_1\leftrightarrow\overline{z}_2$:
\begin{equation}
\widehat{T}g(z_1,\overline{z}_1,z_2,\overline{z}_2,\ldots)= g(z_1,\overline{z}_2,z_2,\overline{z}_1,\ldots).
\label{Tdef}
\end{equation}
In particular $\widehat{T}$ leaves the remaining variables $z_3,\overline{z}_3,\ldots,z_k,\overline{z}_k$ (if present)
unchanged. This allows us to write the following relation, generalising \cite[Lemma 1]{ATTZ20} for the GinUE.

\begin{lem}[\textbf{Relation between mean off-diagonal overlap and mean diagonal overlap}] \label{Lem_D12 D11}
The mean off-diagonal overlap follows from the mean diagonal overlap in \eqref{O11kPf} at $k=2$:
\begin{align}
\begin{split} 
D_{1,2}^{(N,2)}(z_1,z_2)
%\\
&=
\frac{(z_1-\overline{z}_1)(z_2-\overline{z}_2)e^{ -2|z_1-z_2|^2} }{(1-|z_1-z_2|^2)|z_1-\overline{z}_2|^2}\widehat{T}
D_{1,1}^{(N,2)}(z_1,z_2)
\\
&=
-
     \frac{NZ_{N-1}^{(\rm over)}(z_1,\overline{z}_2)}{Z_N^{(\rm g)}}
|z_1-\overline{z}_1|^2|z_2-\overline{z}_2|^2(\overline{z}_2-z_1) 
     e^{ -2|z_1|^2-2|z_2|^2}
\bfkappa_{N-1}^{(\rm over)}(z_2,\overline{z}_1|z_1,\overline{z}_2). \label{O11-12rel} 
\end{split}
\end{align} 
\end{lem}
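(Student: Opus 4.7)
The plan is to verify both equalities in \eqref{O11-12rel} by direct algebraic manipulation, regarding $z_1,\overline{z}_1,z_2,\overline{z}_2$ as four independent variables, as this is how $\widehat{T}$ is defined to act. The first equality is proved by applying $\widehat{T}$ to the integral representation \eqref{O11evk} (with $k=2$) of $D_{1,1}^{(N,2)}$ and matching the outcome with \eqref{O12ev}; the second is a direct substitution into the Pfaffian form \eqref{O11kPf}.

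Before applying $\widehat{T}$, every modulus-squared and every Gaussian is rewritten as a polynomial or exponential in the four independent variables, e.g.\ $|z_1-\overline{z}_1|^2=(z_1-\overline{z}_1)(\overline{z}_1-z_1)$ and $e^{-2|z_j|^2}=e^{-2z_j\overline{z}_j}$. The transposition then swaps $\overline{z}_1\leftrightarrow\overline{z}_2$ in every factor that carries either of them at an external slot; the integration variables $z_j,\overline{z}_j$ with $j\ge 3$ are untouched. After multiplying by the prefactor in \eqref{O11-12rel}, the denominator $(1-|z_1-z_2|^2)|z_1-\overline{z}_2|^2$ is cleared by the identities $1+(z_2-z_1)(\overline{z}_1-\overline{z}_2)=1-|z_1-z_2|^2$ (which appears in the image of the weight $\omega^{(\rm over)}$ at $l=2$) and $|z_1-\overline{z}_2|^2=(z_1-\overline{z}_2)(\overline{z}_1-z_2)$ (matching $\widehat{T}|z_1-\overline{z}_1|^2$ together with $\widehat{T}(\overline{z}_2-z_2)$). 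The Gaussian exponent collapses via
\[
-2|z_1-z_2|^2-2z_1\overline{z}_2-2z_2\overline{z}_1 \;=\; -2|z_1|^2-2|z_2|^2,
\]
and the identity $(z_2-\overline{z}_2)^2=-|z_2-\overline{z}_2|^2$ supplies the overall minus sign in \eqref{O12ev}. This reproduces the external prefactor of $D_{1,2}^{(N,2)}(z_1,z_2)$.

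The per-particle integrand at each $j\ge 3$ reduces to a single polynomial identity that can be checked by expanding the brackets on each side, namely
\begin{align*}
&(z_2-z_j)(\overline{z}_1-\overline{z}_j)(z_2-\overline{z}_j)(\overline{z}_1-z_j)(z_j-\overline{z}_2)(\overline{z}_j-z_1)\bigl[1+(z_j-z_1)(\overline{z}_j-\overline{z}_2)\bigr]\\
&\quad =|z_1-\overline{z}_j|^2\,|z_2-\overline{z}_j|^2\,\Bigl[|z_1-z_j|^2\,|z_2-z_j|^2+(\overline{z}_1-\overline{z}_j)(z_2-z_j)\Bigr];
\end{align*}
one collects the common factor $(\overline{z}_1-\overline{z}_j)(z_2-z_j)$ on the left, recognises the remaining sextic as $|z_1-\overline{z}_j|^2|z_2-\overline{z}_j|^2$, and then uses $(\overline{z}_1-\overline{z}_j)(z_1-z_j)=|z_1-z_j|^2$ and $(z_2-z_j)(\overline{z}_2-\overline{z}_j)=|z_2-z_j|^2$. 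This matches the per-particle factor in the integrand of \eqref{O12ev}, finishing the first equality.

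For the second equality, \eqref{O11kPf} with $k=2$ produces a $2\times 2$ antisymmetric block at $j=\ell=2$ whose diagonal entries vanish by skew-symmetry of $\bfkappa$, so the Pfaffian collapses to $\bfkappa_{N-1}^{(\rm over)}(z_2,\overline{z}_2|z_1,\overline{z}_1)$. Applying $\widehat{T}$ sends $Z_{N-1}^{(\rm over)}(z_1)\mapsto Z_{N-1}^{(\rm over)}(z_1,\overline{z}_2)$, transforms $\omega^{(\rm over)}$ as before, and maps the kernel to $\bfkappa_{N-1}^{(\rm over)}(z_2,\overline{z}_1|z_1,\overline{z}_2)$; combining with the prefactor and the cancellations above yields exactly the second line of \eqref{O11-12rel}. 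The only real obstacle is purely bookkeeping: one must repeatedly convert formal products of differences into moduli and keep careful track of signs. All steps are polynomial identities in $\mathbb{C}[z_1,\overline{z}_1,z_2,\overline{z}_2,z_3,\overline{z}_3,\ldots]$, and the precise shape of the prefactor in \eqref{O11-12rel} is forced by the need to clear the $1-|z_1-z_2|^2$ and $|z_1-\overline{z}_2|^2$ factors produced by $\widehat{T}$.
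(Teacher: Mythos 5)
Your proof is correct and takes essentially the same approach as the paper: spell out $D_{1,1}^{(N,2)}$ in the four independent variables $z_1,\overline z_1,z_2,\overline z_2$, apply $\widehat{T}$ to both the integral form and the Pfaffian form, and compare the former with \eqref{O12ev}. The one per-particle identity you isolate (the "sextic" is actually a quartic, but the identity itself is right) is exactly the computation the paper subsumes under "by a comparison to \eqref{O12ev}," and your reduction of the $2\times2$ Pfaffian to $\bfkappa_{N-1}^{(\rm over)}$ is what the paper uses as well.
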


%As explained in the previous section we are so far limited to the mean diagonal overlap conditioned to a real point $z_1\in\mathbb{R}$. 
%Whilst the relation between mean diagonal and off-diagonal overlap is valid in general, we will thus not be able to apply it here.

\subsection{Planar skew-orthogonal polynomials} \label{Subsection_Main SOP}

Integrable properties of planar symplectic ensembles can be effectively analysed using the SOP formalism \cite{Kan02}, as extensively discussed in Subsection~\ref{prelim}.
In practice, constructing explicit forms of SOPs is crucial in further asymptotic analysis. 
This is particularly feasible when the underlying measure is radially symmetric, see Section \ref{S_Conditional Origin} and 
\eqref{SOP radially symmetric} below. 
Beyond that case, 
in \cite[Theorem 3.1]{AEP22}, a method to construct SOPs was presented under the assumption that the associated planar orthogonal polynomials satisfy the classical three-term recurrence relation  
\begin{equation} \label{stand 3-term recur}
z \,p_k(z)= p_{k+1}(z) + b_k\,p_k(z)+c_k \,p_{k-1}(z). 
\end{equation} 
However, contrary to orthogonal polynomials on the real line, it is well known that on $\mathbb{C}$ this is not always the case.
Nonetheless, the planar orthogonal polynomials $p_k^{ \rm (pre) }$ associated with the weight $\omega_{({\rm pre})}(z)$ satisfy a non-standard three-term recurrence relation when $p_{k-1}(z)$ is multiplied in addition with argument $z$, see 
\eqref{non-stand 3 term c}. 
Such a recurrence was also introduced in \cite[Remark 1.3]{BLY21} in the context of a point charge insertion of the GinUE. 
This seemingly small difference $p_{k-1}(z)\to z\,p_{k-1}(z)$ significantly impacts the construction of SOPs. Consequently, 
neither the result \cite[Theorem 3.1]{AEP22} nor its underlying idea can be applied to construct SOPs associated with \eqref{OverlapWeight}.
We overcome this difficulty by using a new method to construct the SOPs. 
We mention that a family of orthogonal polynomials satisfying a general recurrence formula, which includes both \eqref{stand 3-term recur} and \eqref{non-stand 3 term c}, is introduced in \cite{IM95} as type \(R_I\), see also \cite{KS23} and references therein.

For $k=0,1,\dots,$ denote the truncated exponential by
\begin{equation}
\label{exponential_sum}
e_k(x):=\sum_{j=0}^{k}\frac{x^j}{j!}
\end{equation} 
and define
\begin{equation}
\label{fsum}
f_k(x):=\sum_{j=0}^k(k+1-j)\frac{x^j}{j!} =(k+1)e_k(x)-xe_{k-1}(x), 
\end{equation}   
which also appears in the GinUE \cite{ATTZ20}. 
Then we have the following. 

\begin{thm}[\textbf{Construction of SOPs associated with pre-overlap weight \eqref{PreOverlapWeight}}]
\label{Thm_OverlapCoe} 
Suppose that $a \in \R$ and define
\begin{equation}
\label{SOPsPreOverlap}
q_{2k}^{({\rm pre})}(z):=\sum_{j=0}^{2k}\alpha_{2k,j}^{({\rm pre})}z^j, \qquad 
q_{2k+1}^{({\rm pre})}(z):=\sum_{j=0}^{2k+1}\beta_{2k+1,j}^{({\rm pre})}z^j,
\end{equation}
where $\alpha_{2k,j}^{ \rm (pre) }$ and $\beta_{2k+1,j}^{ \rm (pre) }$ are given by %\eqref{Coe Alpha Even}, \eqref{Coe Alpha Odd}, \eqref{Coe Beta Even} and \eqref{Coe Beta Odd}.
\begin{align}
\label{Coe Alpha Even}
\alpha_{2k,2j}^{(\rm pre)}
&= \bigg[ \sum_{\ell=j}^{k}(\ell+1-j)\frac{(2k+3)!!}{(2\ell+3)!!}\frac{(2a^2)^{\ell}}{(2a^2)^{j}} - \sum_{\ell=j}^{k}(\ell-j)\frac{(2k+1)!!}{(2\ell+1)!!}\frac{(2a^2)^{\ell}}{(2a^2)^{j}} \bigg] \frac{2^jf_j(a^2)}{2^kf_k(a^2)},
\\
\label{Coe Alpha Odd}
\alpha_{2k,2j+1}^{(\rm pre)}
&= 2a \bigg[ \sum_{\ell=j}^{k-1}(\ell+1-j)\frac{(2k+3)!!}{(2\ell+5)!!}\frac{(2a^2)^{\ell}}{(2a^2)^{j}}
- \sum_{\ell=j}^{k-1}(\ell-j)\frac{(2k+1)!!}{(2\ell+3)!!}\frac{(2a^2)^{\ell}}{(2a^2)^{j}} \bigg] \frac{2^{j}f_j(a^2)}{2^kf_k(a^2)},
\\
\label{Coe Beta Even}
\beta_{2k+1,2j}^{(\rm pre)}
&= a \bigg[
\frac{2j+1}{2k+3}
\frac{(2a^2)^k}{(2a^2)^j}+2\sum_{\ell=j}^{k}(\ell+1-j)\frac{(2k+1)!!}{(2\ell+3)!!}\frac{(2a^2)^{\ell}}{(2a^2)^j} \bigg]
\frac{2^jf_j(a^2)}{2^kf_k(a^2)},
\\
\label{Coe Beta Odd}
\beta_{2k+1,2j+1}^{(\rm pre)} &= \bigg[ \frac{2j+3}{2k+3} \frac{(2a^2)^k}{(2a^2)^j}+2\sum_{\ell=j}^{k}(\ell-j)\frac{(2k+1)!!}{(2\ell+3)!!}\frac{(2a^2)^{\ell}}{(2a^2)^j} \bigg] \frac{2^jf_j(a^2)}{2^kf_k(a^2)}. 
\end{align} 
Then, the family $ \{ q_{k}^{ \rm (pre) } \}_{k=0}^\infty$ forms SOP with respect to the weight \eqref{PreOverlapWeight} and skew-product \eqref{inner_product}, with skew-norms 
\begin{equation}
    \label{Skew norm PreWeight}
    r_{k}^{(\rm pre)}\equiv r_k^{(\rm pre)}(a):=\frac{(2k+2)!}{2^{2k+2}}\frac{f_{k+1}(a^2)}{f_k(a^2)}.
\end{equation}
\end{thm}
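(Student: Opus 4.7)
Since $a\in\R$, the weight $\omega^{(\rm pre)}(z)=(1+|z-a|^2)e^{-2|z|^2}$ is invariant under $z\mapsto\overline{z}$. The plan is to exploit this symmetry to establish a very narrow banded structure for the skew-inner product $\langle z^i,z^j\rangle_s$ of monomials, and then verify the proposed formulas \eqref{SOPsPreOverlap}--\eqref{Coe Beta Odd} together with the skew-norm \eqref{Skew norm PreWeight} by a direct computation.

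\emph{Step 1 (monomial skew-inner products).} Expanding $1+|z-a|^2 = (1+a^2) + |z|^2 - a(z+\overline{z})$ and using the Gaussian moments $\int z^i\overline{z}^j e^{-2|z|^2}\,dA(z) = \delta_{ij}\,i!/2^{i+1}$, one finds that the moment matrix $h_{i,j} := \int z^i\overline{z}^j\,\omega^{(\rm pre)}(z)\,dA(z)$ is tridiagonal. Combined with the symmetry $h_{i,j} = h_{j,i}$, this gives $\langle z^m,z^n\rangle_s = 2(h_{m+1,n}-h_{m,n+1})$, which vanishes unless $|m-n|\in\{1,2\}$, the only nonzero cases (above the diagonal) being
\begin{align*}
\langle z^m,z^{m+1}\rangle_s &= \frac{(m+1)!}{2^{m+2}}\bigl(2a^2+m+4\bigr), \\
\langle z^m,z^{m+2}\rangle_s &= -a\,\frac{(m+2)!}{2^{m+2}}.
\end{align*}

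\emph{Step 2 (verification of skew-orthogonality and skew-norms).} With only two non-vanishing super-diagonals of monomial pairings, substituting the ansatz \eqref{SOPsPreOverlap} into each skew-orthogonality relation in \eqref{SQPdef} reduces it to a finite algebraic identity in the coefficients $\alpha_{2k,j}^{(\rm pre)}, \beta_{2k+1,j}^{(\rm pre)}$. The structural observation that each of these coefficients decomposes naturally into two parts, one carrying $(\ell+1-j)$ and one carrying $(\ell-j)$, mirrors the splitting \eqref{fsum} of $f_j(x)=(j+1)e_j(x)-xe_{j-1}(x)$; this suggests reducing the orthogonality identities to relations among $f_k(a^2)$ and $e_k(a^2)$ and verifying them by induction on $k$ using $f_k(x)-f_{k-1}(x)=e_k(x)$. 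The skew-norm $r_k^{(\rm pre)}$ then drops out of the single surviving pairing $\langle q_{2k}^{(\rm pre)},q_{2k+1}^{(\rm pre)}\rangle_s$: the leading monomial contribution $\langle z^{2k},z^{2k+1}\rangle_s$ produces the prefactor $(2k+2)!/2^{2k+2}$, while the subleading contributions reassemble into the correction $f_{k+1}(a^2)/f_k(a^2)$.

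\emph{Main obstacle.} The combinatorial bookkeeping is where all the work lies: several double sums indexed by $\ell$ and $j$, containing ratios of double factorials together with $(2a^2)^\ell/(2a^2)^j$, must cancel against each other in a precise way. A more conceptual alternative — and likely the ``new method'' hinted at in the paragraphs preceding the theorem — is to first determine the planar orthogonal polynomials $p_k^{(\rm pre)}$ of Appendix~\ref{Appendix_Planar OP}, and then write $q_{2k}^{(\rm pre)}$ and $q_{2k+1}^{(\rm pre)}$ as short linear combinations of $p_{2k}^{(\rm pre)}, p_{2k-1}^{(\rm pre)}, \dots$ in the spirit of \cite{AEP22}. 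Since the OPs $p_k^{(\rm pre)}$ satisfy a non-standard three-term recurrence involving $z\,p_{k-1}^{(\rm pre)}$ rather than $p_{k-1}^{(\rm pre)}$, the classical SOP-from-OP construction has to be adapted; identifying the correct short ansatz and the recurrences governing its coefficients is the step I expect to require the most care.
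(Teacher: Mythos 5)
Your Steps 1 and 2 capture the paper's key structural insight: since $a\in\R$, the moment matrix $m_{i,j}^{(\rm pre)}$ of \eqref{PreOverlapWeight} is tri-diagonal (your calculation of the monomial pairings is correct), so $\langle z^i,z^j\rangle_s$ is supported on $|i-j|\in\{1,2\}$, and the coefficient formulas \eqref{Coe Alpha Even}--\eqref{Coe Beta Odd} can then be checked by a finite amount of algebra plus induction on the index, using identities such as $f_k(x)-f_{k-1}(x)=e_k(x)$. This is essentially the paper's route. The paper first abstracts the observation into a general statement (Theorem~\ref{thm_Recurrence}): for \emph{any} tri-diagonal moment matrix the monic SOP coefficients satisfy the two-step recurrences \eqref{EvenCoefficients}--\eqref{OddCoefficients} in $j$, with data built from the moments and the Pfaffian ratios $\cZ_j$ in \eqref{SkewMoment2}. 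Theorem~\ref{Thm_OverlapCoe} is then proved by plugging in \eqref{Overlap_Moment}, solving the recurrence \eqref{Overlap_Delta2} for $\Delta_k^{(\rm pre)}$ (which gives \eqref{Skew norm PreWeight}), and checking by induction in $j$ that \eqref{Coe Alpha Even}--\eqref{Coe Beta Odd} satisfy the recurrences \eqref{RS_RC}. Your direct substitution into the skew-orthogonality conditions achieves the same thing less modularly; it would carry through.

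The ``Main obstacle'' paragraph is where you misidentify the paper's ``new method''. You conjecture it is an adaptation of the SOP-from-OP construction of \cite{AEP22} built on the planar OPs of Appendix~\ref{Appendix_Planar OP}. But the paper states explicitly, in the very paragraphs you refer to, that because the $p_k^{(\rm pre)}$ satisfy the non-standard recurrence \eqref{non-stand 3 term c} (with $z\,p_{k-1}$ in place of $p_{k-1}$), \emph{neither} the result of \cite[Theorem 3.1]{AEP22} \emph{nor} its underlying idea applies to this weight; that construction is a dead end here. The advertised ``new method'' is precisely the moment-matrix recurrence Theorem~\ref{thm_Recurrence}, i.e.\ a systematized form of the direct approach you already outline in Step~2. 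So you are closer to the actual proof than you believe, and pursuing the OP-based alternative would be wasted effort.
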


By employing a method using the moment matrix, we establish a general statement for tri-diagonal moment matrices regarding the recurrence relation of planar SOPs (Theorem~\ref{thm_Recurrence}), providing an alternative method for their construction beyond that presented in \cite{AEP22}. Consequently, Theorem~\ref{Thm_OverlapCoe} follows as a specific application to the weight function given in \eqref{PreOverlapWeight}.
Our construction of SOPs is of independent interest and may find further applications. For our main purpose of applying it to the eigenvector overlap, we defer the details to Section~\ref{S_Construction of skew-orthogonal polynomials}.

By using \cite[Theorem 5.1]{AEP22}, known as the idea of Christoffel perturbation, we can express the SOPs $q_{k}^{({\rm over})}$ as well as the skew-kernel $\bfkappa_{N-1}^{({\rm over})}$ in terms of their (pre)-counterparts.

\begin{prop}[\textbf{SOPs and skew-kernel associated with overlap weight \eqref{OverlapWeight} via Christoffel perturbation \cite[Theorem 5.1]{AEP22}}]  \label{Prop_Christoffel pertubation}
Suppose that $a \in \R$. 
\begin{itemize}
    \item \textup{\textbf{(SOPs)}} The following  family $ \{ q_{k}^{ \rm (over) } \}_{k=0}^\infty$ forms SOP associated with \eqref{OverlapWeight}: 
\begin{equation}
q_{2k}^{(\mathrm{over})}(z) :=\frac{r_k^{({\rm pre})}\bfkappa_{k+1}^{({\rm pre})}(a,z)}{(a-z)q_{2k}^{({\rm pre})}(a)}, 
\qquad 
q_{2k+1}^{(\mathrm{over})}(z):=\frac{q_{2k+2}^{({\rm pre})}(a)q_{2k}^{({\rm pre})}(z)-q_{2k}^{({\rm pre})}(a)q_{2k+2}^{({\rm pre})}(z)}{(a-z)q_{2k}^{({\rm pre})}(a)}, 
\end{equation}
 with skew-norm 
\begin{align} \label{def of skew norm}
r_k^{({\rm over})} \equiv r_k^{({\rm over})}(a):=r_k^{({\rm pre})}\frac{q_{2k+2}^{({\rm pre})}(a)}{q_{2k}^{({\rm pre})}(a)}.
\end{align}
\item  \textup{\textbf{(Skew-kernel)}} Their corresponding skew-kernel reads
\begin{equation} \label{def of skew kernel overlap}
\bfkappa_N^{({\rm over})}(z,w):=\sum_{k=0}^{N-1}
\frac{q_{2k+1}^{({\rm over})}(z)q_{2k}^{({\rm over})}(w)-q_{2k+1}^{({\rm over})}(w)q_{2k}^{({\rm over})}(z)}{r_k^{({\rm over})}}, 
\end{equation}
and  we have 
\begin{equation}
\label{SkewKernelOverlap}
\bfkappa_{N-1}^{({\rm over})}(z,w)
= \frac{\bfkappa_N^{({\rm pre})}(z,w)q_{2N}^{({\rm pre})}(a)
-\bfkappa_N^{({\rm pre})}(z,a)q_{2N}^{({\rm pre})}(w)
+\bfkappa_N^{({\rm pre})}(w,a)q_{2N}^{({\rm pre})}(z)}
{(z-a)(w-a)q_{2N}^{({\rm pre})}(a)}.
\end{equation}
\end{itemize}
\end{prop}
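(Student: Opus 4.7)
The strategy is a direct application of \cite[Theorem 5.1]{AEP22}. The first step is to verify that its hypotheses are met. From the factorisation \eqref{Over-pre}, since $x=a$ is real so $\overline{x}=a$, we obtain
\begin{equation*}
\omega^{(\mathrm{over})}(z) = (z-a)(\overline{z}-a)\,\omega^{(\mathrm{pre})}(z) = |z-a|^2\,\omega^{(\mathrm{pre})}(z),
\end{equation*}
which is exactly a Christoffel perturbation of the base weight $\omega^{(\mathrm{pre})}$ by a single real node $a$. Theorem~\ref{Thm_OverlapCoe} supplies the required input: the explicit SOPs $\{q_k^{(\mathrm{pre})}\}$, their skew-norms $\{r_k^{(\mathrm{pre})}\}$, and thence the skew-kernel $\bfkappa_N^{(\mathrm{pre})}$ in the form \eqref{def of skew kernel overlap} but with (pre)-labels.

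With this input, \cite[Theorem 5.1]{AEP22} yields the stated formulas for $q_{2k}^{(\mathrm{over})}$, $q_{2k+1}^{(\mathrm{over})}$, and $r_k^{(\mathrm{over})}$ essentially verbatim. The expressions divide by $q_{2k}^{(\mathrm{pre})}(a)$, so one must verify that $q_{2k}^{(\mathrm{pre})}(a)\neq 0$ for every $k\geq 0$ and every $a\in\mathbb{R}$; this can be read off from the coefficients \eqref{Coe Alpha Even}--\eqref{Coe Alpha Odd}, which for real $a$ are all nonnegative and hence produce a strictly positive value at $z=a$, or alternatively from the positivity of the underlying moment matrix. The skew-orthogonality of the $q_k^{(\mathrm{over})}$ with respect to $\omega^{(\mathrm{over})}$ is a consequence of the general construction, but as a sanity check one may verify directly that the factor $1/(a-z)$ in the definitions cancels against the $(z-a)$ factor in $\omega^{(\mathrm{over})}/\omega^{(\mathrm{pre})}$, reducing skew-products against $\omega^{(\mathrm{over})}$ to skew-products of $\bfkappa_{k+1}^{(\mathrm{pre})}(a,\cdot)$-type objects against $\omega^{(\mathrm{pre})}$.

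For the skew-kernel identity \eqref{SkewKernelOverlap}, I would substitute the new SOPs and skew-norm into the definition \eqref{def of skew kernel overlap}. The denominators $q_{2k}^{(\mathrm{pre})}(a)$ of $q_{2k}^{(\mathrm{over})}$ and $q_{2k+1}^{(\mathrm{over})}$ cancel against the ratio $q_{2k+2}^{(\mathrm{pre})}(a)/q_{2k}^{(\mathrm{pre})}(a)$ in $r_k^{(\mathrm{over})}$, producing a sum that telescopes in $k$. After expanding $\bfkappa_{k+1}^{(\mathrm{pre})}(a,z)$ by its definition, three groups of terms emerge: those coupling $z$ and $w$ directly reassemble into $\bfkappa_N^{(\mathrm{pre})}(z,w)\,q_{2N}^{(\mathrm{pre})}(a)$, while the mixed pieces telescope into $-\bfkappa_N^{(\mathrm{pre})}(z,a)\,q_{2N}^{(\mathrm{pre})}(w)$ and $+\bfkappa_N^{(\mathrm{pre})}(w,a)\,q_{2N}^{(\mathrm{pre})}(z)$, with the overall prefactor $1/((z-a)(w-a)q_{2N}^{(\mathrm{pre})}(a))$ collected from the surviving denominators.

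The main obstacle is essentially bookkeeping: confirming that the three groups in the telescoping sum reassemble cleanly with the correct signs, and the nonvanishing check on $q_{2k}^{(\mathrm{pre})}(a)$. Both points are already handled abstractly in \cite[Theorem 5.1]{AEP22}, whose proof (based on \cite{AB07}) relies on the extension of the Vandermonde of size $2N$ in \eqref{Gibbs} to size $2N+1$ by the node $a$; the restriction to \emph{real} $a$ enters precisely here, because only then does a single extra row/column suffice. Once this is acknowledged, the proof of the proposition reduces to verifying that the hypotheses of that theorem hold in our setting, which is immediate from Theorem~\ref{Thm_OverlapCoe}.
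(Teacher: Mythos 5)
Your proposal takes the same route as the paper: Proposition~\ref{Prop_Christoffel pertubation} is stated as a direct application of \cite[Theorem 5.1]{AEP22} to the factorisation $\omega^{(\mathrm{over})}(z)=|z-a|^2\,\omega^{(\mathrm{pre})}(z)$ with the input $q_k^{(\mathrm{pre})}$, $r_k^{(\mathrm{pre})}$, $\bfkappa_N^{(\mathrm{pre})}$ coming from Theorem~\ref{Thm_OverlapCoe}, and the paper gives no further dedicated proof. One small inaccuracy in your aside: the odd coefficients $\alpha_{2k,2j+1}^{(\mathrm{pre})}$ in \eqref{Coe Alpha Odd} carry an overall factor $2a$ and so can be negative for $a<0$; what is actually nonnegative is $\alpha_{2k,j}^{(\mathrm{pre})}a^{j}$, which still gives $q_{2k}^{(\mathrm{pre})}(a)>0$ (the leading coefficient being $1$), and your alternative argument via the moment matrix covers this anyway.
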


We now discuss the finite-$N$ expression of the  mean diagonal overlap. We write 
\begin{equation}
\label{widehat GinSE}
\widehat{\bfkappa}_{N}^{(\mathrm{g})}(z,a)=e^{-2za}\bfkappa_{N}^{(\mathrm{g})}(z,a),
\end{equation}
where $\bfkappa_{N}^{(\mathrm{g})}$ is given by \eqref{GinSEKernel}. 
Then, as an immediate consequence, we have the following. 

\begin{prop}[\textbf{Conditional mean diagonal overlap at finite-$N$}] \label{Prop_condi exp finite N}
For %any $N \in \mathbb{N}$ and 
$a\in\C$, we define the conditional mean diagonal overlap at finite-$N$ as
\begin{equation}
\label{hat D11N1 a}
\widehat{D}_{1,1}^{(N,1)}(a)  
:=
\frac{D_{1,1}^{(N,1)}(a)}{\mathbf{R}_{N,1}^{(\rm g)}(a)}. 
%\frac{D_{1,1}^{(N,1)}(a)}{\frac{Z_{N-1}^{(\rm g)}}{Z_{N}^{(\rm g)}}|a-\overline{a}|^2e^{-2|a|^2}
%\frac{r_{N}^{(\rm g)}\bfkappa_N^{(\rm g)}(a,\overline{a})}{a-\overline{a}}}
%\frac{D_{1,1}^{(N,1)}(a)}{(\overline{a}-a)\bfkappa_N^{(\rm g)}(a,\overline{a})e^{-2|a|^2}}
\end{equation}
Then, we have  
\begin{equation}   \label{Conditional Expectation Diagonal Overlap Real}
\widehat{D}_{1,1}^{(N,1)}(a)  = \frac{NZ_{N-1}^{(\rm over)}(a)}{Z_{N}^{(\rm g)}} \begin{cases}
\displaystyle \frac{a-\overline{a}}{\bfkappa_N^{(\rm g)}(a,\overline{a})}
 &\textup{if } a \in \C \setminus \R ,
\smallskip 
\\
\displaystyle \frac{e^{-2a^2}}{  \partial_x\widehat{\bfkappa}_{N}^{(\rm g)}(x,a)\bigr|_{x=a}  } &\textup{if } a \in  \R .
\end{cases} 
\end{equation}
\end{prop}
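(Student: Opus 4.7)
The plan is to combine the explicit formula for the numerator $D_{1,1}^{(N,1)}(a)$ coming from the $k=1$ case of the preceding lemma with the Pfaffian representation \eqref{def of Pfaff Struc} for the denominator $\mathbf{R}_{N,1}^{(\rm g)}(a)$, and then to treat the real case $a\in\R$ as a $0/0$ limit.

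Concretely, setting $k=1$ in \eqref{O11kPf} (equivalently reading off from \eqref{O11ev}) gives
\begin{equation*}
D_{1,1}^{(N,1)}(a) = \frac{N\,Z_{N-1}^{(\rm over)}(a)}{Z_N^{(\rm g)}}\, |a-\overline{a}|^2\, e^{-2|a|^2}.
\end{equation*}
On the other hand, applying \eqref{def of Pfaff Struc} with $k=1$ and the Gaussian weight \eqref{Weight GinSE}, the $2\times 2$ block inside the Pfaffian has vanishing diagonal (since any skew-kernel obeys $\bfkappa_N(z,z)=0$ by \eqref{SQPdef}), and by skew-symmetry its off-diagonal entries are $\pm\bfkappa_N^{(\rm g)}(a,\overline{a})$. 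Hence
\begin{equation*}
\mathbf{R}_{N,1}^{(\rm g)}(a) = (\overline{a}-a)\, e^{-2|a|^2}\, \bfkappa_N^{(\rm g)}(a,\overline{a}).
\end{equation*}
Forming the ratio and using $|a-\overline{a}|^2=(a-\overline{a})(\overline{a}-a)$ immediately produces the first case of \eqref{Conditional Expectation Diagonal Overlap Real}.

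For $a\in\R$ the ratio is indeterminate and must be interpreted as the limit from $\C\setminus\R$. Since $\bfkappa_N^{(\rm g)}(z,w)$ is polynomial in its two arguments, Taylor expansion in $w$ around $w=a$ together with $\bfkappa_N^{(\rm g)}(a,a)=0$ yields
\begin{equation*}
\bfkappa_N^{(\rm g)}(a,\overline{a}) = (\overline{a}-a)\,\partial_w \bfkappa_N^{(\rm g)}(a,w)|_{w=a} + O\bigl((\overline{a}-a)^2\bigr),
\end{equation*}
so $(a-\overline{a})/\bfkappa_N^{(\rm g)}(a,\overline{a}) \to -1/\partial_w \bfkappa_N^{(\rm g)}(a,w)|_{w=a}$. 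Skew-symmetry $\bfkappa_N^{(\rm g)}(z,w)=-\bfkappa_N^{(\rm g)}(w,z)$ converts the derivative in the second argument into minus the derivative in the first, turning the limit into $1/\partial_x \bfkappa_N^{(\rm g)}(x,a)|_{x=a}$. Finally, the definition \eqref{widehat GinSE} together with $\bfkappa_N^{(\rm g)}(a,a)=0$ gives $\partial_x \widehat{\bfkappa}_N^{(\rm g)}(x,a)|_{x=a}=e^{-2a^2}\,\partial_x \bfkappa_N^{(\rm g)}(x,a)|_{x=a}$, which recasts the answer in the $\widehat{\bfkappa}$-form stated.

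No serious obstacle is expected: the argument is essentially substitution plus an L'H\^opital-type expansion. The only mild subtlety lies in handling the $0/0$ indeterminate for real $a$, which is unproblematic because all quantities involved are polynomial in $a$ and $\overline{a}$ treated as independent variables.
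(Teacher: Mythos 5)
Your derivation is correct and is essentially the computation the paper has in mind when it calls Proposition 2.3 "an immediate consequence" (no explicit proof is given in the paper). The $k=1$ Pfaffian for the Gaussian density, $\mathbf{R}_{N,1}^{(\rm g)}(a)=(\overline{a}-a)e^{-2|a|^2}\bfkappa_N^{(\rm g)}(a,\overline{a})$, combined with the $k=1$ case of the lemma for the numerator, gives the first branch directly; the real case is a removable $0/0$ handled by a first-order expansion using $\bfkappa_N^{(\rm g)}(a,a)=0$, skew-symmetry, and the identity $\partial_x\widehat{\bfkappa}_N^{(\rm g)}(x,a)|_{x=a}=e^{-2a^2}\partial_x\bfkappa_N^{(\rm g)}(x,a)|_{x=a}$. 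One small remark: you Taylor expand only in the second argument while keeping $a$ fixed and then send $a$ to the real axis; a slightly cleaner variant parametrises $a=a_0+i\varepsilon$ with $a_0\in\R$ and expands both slots to first order in $\varepsilon$, using $\partial_2\bfkappa_N^{(\rm g)}(a_0,a_0)=-\partial_1\bfkappa_N^{(\rm g)}(a_0,a_0)$ to get the same limit. Either way the conclusion is the same, and your argument has no gap.
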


%\begin{rem}
In this article, we follow  \cite{CM98,CM00,AFK20,ATTZ20} in defining the mean conditional overlap, rather than presenting a probabilistic definition based on $\mathcal{O}_{1,1}$ as a random variable that is not used in the following, see e.g  \cite{BD21,D21v2}. It is not difficult to see that the quantity \eqref{hat D11N1 a} normalised by the spectral density is equivalent to the conditional expectation of the diagonal overlap for GinSE similar to the earlier work \cite[Eq. (4.5)]{BD21}, see also \cite{CFW24}. We also refer to the discussion in \cite[Subsection 4.2]{AFK20}.  

 %%%%%%%%%%%%%%%%%%%%%%%%%%%%%%%%%%%%%%%%%%%%%%%%%%

Recall that the regularised incomplete gamma function is defined by 
\begin{equation}
Q(a,z) = \frac{1}{ \Gamma(a) }   \int_z^\infty t^{a-1} e^{-t}\,dt, 
\end{equation}
see e.g. \cite[Chapter 8]{NIST}.
We write 
\begin{equation}
\label{ExpLk2}
\widehat{\cL}_{k}(z,a) = (z-a)^2\partial_z \Big[ \frac{\widehat{\bfkappa}_{k}^{(\rm{g})}(z,a)}{z-a} \Big],
\end{equation}
where $\widehat{\bfkappa}_{k}^{(\rm{g})}$ is given by \eqref{widehat GinSE}. 
Our next result provides the most crucial ingredient for the asymptotic analysis.

\begin{thm}[\textbf{Differential equation for the skew-kernel with pre-overlap weight \eqref{PreOverlapWeight}}]
\label{Thm_ODE_PreKernel}
Suppose that $a \in \R$. Define the second order differential operator 
\begin{equation}\label{OP_DG}
\mathfrak{D}_{z,a} := (z-a)\partial_z^2-(2(z-a)^2+2)\partial_z-2(z-a).
\end{equation}
Let  
\begin{equation}
\label{WbfK}
\widetilde{\bfkappa}_N^{(\mathrm{pre})}(z,w) := e^{2a^2-2za-2wa}(z-a)^3(w-a)^3\bfkappa_N^{({\rm pre})}(z,w).
\end{equation}
Then, we have 
\begin{equation}
\label{ODE_WbfK}
\mathfrak{D}_{z,a}\widetilde{\bfkappa}_N^{(\mathrm{pre})}(z,w) = 
\mathrm{I}_N(z,w) - \mathrm{II}_N(z,w) - \mathrm{III}_N(z,w) + \mathrm{IV}_N(z,w).
\end{equation}
Here, the inhomogeneous terms are given by 
\begin{align}
\mathrm{I}_N(z,w) &:= 4(z-a)^3(w-a)^3e^{2(z-a)(w-a)}Q(2N,2zw), \label{def of IN}
\\
\mathrm{II}_N(z,w) &:= 4a(z-a)^3(w-a)^2e^{2a^2-2za-2wa}
\bigg(  \frac{(2zw)^{2N}}{(2N)!} + \bigl(2N+1-2az\bigr)\frac{2^{2N-1}e_N(a^2)(zw)^{2N}}{(2N)!f_N(a^2)} \bigg), \label{def of IIN}
\\
\mathrm{III}_N(z,w) &:= (z-a)^3e^{2a^2-2za} \frac{2N+1}{2} \frac{(2N+3)\widehat{\cL}_{N+1}(w,a)-2a^2\widehat{\cL}_{N}(w,a)}{N!f_N(a^2)}z^{2N},  \label{def of IIIN}
\\
\mathrm{IV}_N(z,w) &:= (z-a)^3e^{2a^2-2za}
za\frac{(2N+1)\widehat{\cL}_{N+1}(w,a)-2a^2\widehat{\cL}_{N}(w,a)}{N!f_N(a^2)}z^{2N}, \label{def of IVN}
\end{align} 
where $e_k$ and $f_k$ are given by \eqref{exponential_sum} and \eqref{fsum}, respectively.  
\end{thm}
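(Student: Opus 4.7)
The plan is to work from the explicit skew-orthogonal polynomials of Theorem~\ref{Thm_OverlapCoe} and assemble \(\bfkappa_N^{(\rm pre)}(z,w)\) from its defining double sum. First I would substitute the coefficients \eqref{Coe Alpha Even}--\eqref{Coe Beta Odd} together with the skew-norm \eqref{Skew norm PreWeight}, and reorganise the resulting double sum according to the common prefactor \(f_k(a^2) f_{k+1}(a^2)\) that emerges in the denominator. The telescoping structure inherent in the difference of two similar sums (differing only by the factor \(\ell+1-j\) versus \(\ell-j\)) allows one to express \(\bfkappa_N^{(\rm pre)}(z,w)\) as a linear combination of pieces built from the Gaussian skew-kernel \(\bfkappa_N^{(\rm g)}(z,w)\) of \eqref{GinSEKernel}, its shifted versions \(\widehat{\bfkappa}_N^{(\rm g)}(z,a)\) and \(\widehat{\bfkappa}_N^{(\rm g)}(w,a)\), and a remainder tail of the exponential series which, after standard manipulation of \(\sum_{k\geq N} x^k/k!\)-type sums, produces the incomplete gamma function \(Q(2N,2zw)\).

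Having this explicit decomposition, the second step is to apply \(\mathfrak{D}_{z,a}\) to \(\widetilde{\bfkappa}_N^{(\rm pre)}\). The prefactor \(e^{2a^2-2za-2wa}(z-a)^3(w-a)^3\) in \eqref{WbfK} is engineered so that \(\mathfrak{D}_{z,a}\) annihilates the homogeneous Gaussian bulk contribution. The verification then reduces to a short list of model calculations: one computes \(\mathfrak{D}_{z,a}\) applied to expressions of the form \(e^{-2za}(z-a)^j\), to \(e^{-2za}Q(2N,2zw)\), and to \(e^{-2za}z^{2N}\). These three templates produce, respectively, the cancellation of the bulk part, the term \(\mathrm{I}_N\) in \eqref{def of IN}, and (upon combining with the ratio \(e_N(a^2)/f_N(a^2)\) inherited from the even-index SOPs) the term \(\mathrm{II}_N\) in \eqref{def of IIN}. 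The two remaining inhomogeneous terms \(\mathrm{III}_N, \mathrm{IV}_N\) come from the Christoffel--Darboux defect at index \(N\) in the finite sum, where the operators \(\widehat{\cL}_N\) and \(\widehat{\cL}_{N+1}\) arise naturally from the definition \eqref{ExpLk2} as the \(z\)-derivative of the shifted Gaussian kernel at the boundary of the sum.

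The main obstacle I foresee is the bookkeeping of cross-terms between even and odd SOPs. Because \(q_{2k}^{(\rm pre)}\) already contains odd-monomial coefficients \eqref{Coe Alpha Odd} carrying an extra factor of \(a\), and \(q_{2k+1}^{(\rm pre)}\) similarly contains even-monomial coefficients \eqref{Coe Beta Even}, many subtle cancellations must conspire to yield the clean combinations \((2zw)^{2N}/(2N)!\), \((2N+1-2az)\), and the \(\widehat{\cL}_{N+1}, \widehat{\cL}_N\) structures appearing in the final identity. The main tool for managing these cancellations is the relation \(f_k(x) = (k+1)e_k(x) - x e_{k-1}(x)\) from \eqref{fsum}, which allows systematic conversion between \(f\) and \(e\) factors and repeatedly reduces the coefficient ratios to the canonical normalisations displayed in \eqref{def of IIN}--\eqref{def of IVN}. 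Once these reductions are carried out, the identity \eqref{ODE_WbfK} follows by collecting and matching terms.
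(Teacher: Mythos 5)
Your high-level ingredients---Theorem~\ref{Thm_OverlapCoe}, the relation $f_k(x)=(k+1)e_k(x)-xe_{k-1}(x)$, telescoping, and a boundary contribution at index $N$---are the correct ones, and the expectation that the $\widehat{\cL}_N,\widehat{\cL}_{N+1}$ factors appear as a boundary defect is also right. However, there are two genuine gaps.

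First, the intermediate decomposition you propose does not exist in the form you state. You claim $\bfkappa_N^{(\rm pre)}(z,w)$ can be written as a linear combination of $\bfkappa_N^{(\rm g)}(z,w)$, $\widehat{\bfkappa}_N^{(\rm g)}(z,a)$, $\widehat{\bfkappa}_N^{(\rm g)}(w,a)$ and a tail producing $Q(2N,2zw)$. But when one rewrites each SOP in Gaussian terms (the paper's Lemma~\ref{RewriteSOP} shows $(z-a)^3 q_{2k+1}^{(\rm pre)}(z)$ is a combination of $\cL_{k+1}(z,a)$, which is built from $\bfkappa_{k+1}^{(\rm g)}(z,a)$, plus a monomial, and similarly for $q_{2k}^{(\rm pre)}$), the kernel becomes a sum of \emph{products} $\cL_m(z,a)\cL_n(w,a)$, not a linear combination of Gaussian kernels evaluated at pairs like $(z,w)$, $(z,a)$, $(w,a)$. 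The term $Q(2N,2zw)$ appears only after applying $\mathfrak{D}_{z,a}$, not as part of a pre-existing kernel decomposition. So the object you intend to differentiate does not simplify the way you claim.

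Second, and more importantly, you miss the key structural fact that drives the whole proof. It is \emph{not} that $\mathfrak{D}_{z,a}$ annihilates a "bulk Gaussian contribution." Rather, the heart of the argument is that the operator $\mathfrak{D}_{z,a}$ applied to the \emph{individual transformed SOPs} $\widehat{q}_{2k}^{(\rm pre)}$ and $\widehat{q}_{2k+1}^{(\rm pre)}$ (which carry the full prefactor $e^{-2za}(z-a)^3$, combining the $\cL$-piece and the monomial piece) is an exact $z$-derivative of an explicit polynomial times $e^{-2za}$; see \eqref{Tilde D Odd q} and \eqref{Tilde D Even q} in the paper, established via the integral representations in \eqref{q odd identity 1}, \eqref{q even hat identity 1}, and the identity \eqref{epoly identity}. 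Your "template" computations of $\mathfrak{D}_{z,a}$ on $e^{-2za}(z-a)^j$ and on $e^{-2za}z^{2N}$ separately would not expose this: the cancellation between the $\cL$-part and the monomial part inside each $\widehat{q}_k^{(\rm pre)}$ is essential and only emerges after combining them through the integral representation of $\cL_k(z,a)$ via \eqref{q odd integrate} and \eqref{q even integrate 1}. Without this total-derivative identity at the level of each SOP, the subsequent telescoping that yields \eqref{ODE_WbfK} does not materialize.
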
 

The key feature of finding a differential equation for the skew-kernel is that the inhomogeneous terms make the asymptotic analysis tractable, as they no longer involve double sums. An interesting aspect of this differential equation is that the inhomogeneous terms involve the kernel of the determinantal point process of the GinUE.  
It is highly non-trivial a priori that such a differential operator exists. In particular, the SOP we find in Theorem \ref{Thm_OverlapCoe} are not expressed in terms of classical orthogonal polynomials, such as Hermite or Laguerre polynomials, in contrast to previous works \cite{Kan02,ABK22,BC23,BE23,BES23,BF23a,BF24}. Furthermore, our result is the first example of a differential equation in the case where the associated orthogonal polynomial does not satisfy a standard three term recurrence relation \eqref{stand 3-term recur}.

\subsection{Bulk and edge scaling limits for the diagonal overlap and eigenvalue correlation functions}  
\label{Scalinglim}

We now discuss the scaling limits.
In our next result, we extend Dubach's conditional expectation \cite{D21v2} of the limiting diagonal overlap at $a=0$ to the real axis, distingiushing it bulk and the vicinity of the real edge.

\begin{thm}[\textbf{Scaling limits of the conditional expectation of the diagonal overlap}] \label{Thm_CEO}
For $p \in [-1,1]$, let 
\begin{equation} \label{scaling for bulk/edge}
\alpha= \begin{cases}
\displaystyle 1 &\textup{if } p \in (-1,1)\quad \mbox{bulk case}, 
\smallskip
\\
\displaystyle 1/2  &\textup{if } p = 1 \quad \quad\quad\ \mbox{edge case}.
\end{cases}
\end{equation}
Then as $N \to \infty$, we have 
\begin{equation}
\label{D11lim}
\frac{1}{N^\alpha}\,
\widehat{D}_{1,1}^{(N,1)}(\sqrt{N}p+\chi) \to 
\begin{cases}
\displaystyle  \rho_{\rm b}(p)  &\textup{if } p \in (-1,1) \quad \mbox{bulk case}, 
\smallskip
\\
\displaystyle \rho_{\rm e}(\chi)  &\textup{if } p = 1 \quad \quad\quad\ \mbox{edge case},
\end{cases}
\end{equation}
where the convergence is uniform for $\chi$ in compact subsets of $\R$. 
Here, for the bulk case with subscript \textup{b},
\begin{equation} \label{bulk rho b}
\rho_{\rm b}(p) := \dfrac{2}{3}(1-p^2),  
\end{equation}
and for the edge case with subscript \textup{e}, 
\begin{equation}
\rho_{\rm e}(\chi):= \frac{ \sqrt{2} }{3\sqrt{\pi}}
\frac{e^{-4\chi^2}+\sqrt{2\pi}\chi e^{-2\chi^2}\erfc(\sqrt{2}\chi)-4\sqrt{\pi}\chi\erfc(2\chi)}{ \sqrt{2}\, \erfc(2\chi)-e^{-2\chi^2} \erfc(\sqrt{2}\chi)}. 
\end{equation} 
\end{thm}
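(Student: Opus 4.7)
The starting point is Proposition~\ref{Prop_condi exp finite N} specialised to $a\in\R$:
\begin{equation*}
\widehat{D}_{1,1}^{(N,1)}(a) = \frac{NZ_{N-1}^{(\rm over)}(a)}{Z_N^{(\rm g)}} \cdot \frac{e^{-2a^2}}{\partial_x\widehat{\bfkappa}_N^{(\rm g)}(x,a)|_{x=a}}.
\end{equation*}
The plan is to reduce this to a ratio of scalars with fully explicit $N$- and $a$-dependence, then analyse the bulk and edge separately by inserting $a=\sqrt N p+\chi$.

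The first step is to telescope the partition-function ratio. Writing $Z_{N-1}^{(\rm over)}(a)=(N-1)!\prod_{k=0}^{N-2} r_k^{(\rm over)}(a)$ via \eqref{ZNprod}, substituting the Christoffel identity \eqref{def of skew norm}, and using the closed form \eqref{Skew norm PreWeight}, the products $\prod_{k=0}^{N-2} q_{2k+2}^{(\rm pre)}(a)/q_{2k}^{(\rm pre)}(a)$ and $\prod_{k=0}^{N-2} f_{k+1}(a^2)/f_k(a^2)$ both telescope. Combined with \eqref{Partition_GinSE}, this yields an identity
\begin{equation*}
\frac{NZ_{N-1}^{(\rm over)}(a)}{Z_N^{(\rm g)}} = C_N\,\frac{q_{2N-2}^{(\rm pre)}(a)\,f_{N-1}(a^2)}{q_0^{(\rm pre)}(a)\,f_0(a^2)},
\end{equation*}
with an explicit combinatorial prefactor $C_N$ amenable to Stirling. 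After substitution, the problem reduces to the large-$N$ behaviour of three $a$-dependent objects: $q_{2N-2}^{(\rm pre)}(a)$ (available from Theorem~\ref{Thm_OverlapCoe}), $f_{N-1}(a^2)$ (a truncated exponential by definition \eqref{fsum}), and $\partial_x\widehat{\bfkappa}_N^{(\rm g)}(x,a)|_{x=a}$ (read off from the explicit finite sum \eqref{GinSEKernel}).

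Each of these non-trivial ingredients is, after algebraic simplification, a partial exponential sum $e_N(\lambda)$ at $\lambda=a^2$ or $\lambda=2a^2$; their asymptotics are governed by the regularised incomplete gamma $Q(N,\lambda)$. In the bulk, $|p|<1$ gives $\lambda/N<2$, so $Q(N,\lambda)$ is exponentially small and every truncated exponential equals its full series modulo negligible tails. The kernel derivative then reproduces the bulk density of GinSE eigenvalues along the real line, namely of order $N$ with profile $\sqrt{1-p^2}$ up to constants; after exact cancellation of all Gaussian exponentials against $e^{-2a^2}$ and a Stirling estimate on $C_N$, the arithmetic yields $\rho_{\rm b}(p)=\tfrac{2}{3}(1-p^2)$. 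At the edge $p=1$, one sits at the transition of $Q(N,\lambda)$, and I would use the uniform asymptotic $Q(N,N+c\sqrt N)\to\tfrac12\erfc(c/\sqrt 2)$ applied with $c=2\chi$ to $e_N(a^2)$ and with $c=4\chi$ to $e_N(2a^2)$, using $a^2=N+2\sqrt N\chi+\chi^2$. The four terms appearing in $\rho_{\rm e}(\chi)$ then emerge directly: $\erfc(2\chi)$ comes from the $e_{N}(2a^2)$ content of $\partial_x\widehat{\bfkappa}_N^{(\rm g)}$, $\erfc(\sqrt 2\chi)$ and $e^{-2\chi^2}$ from $e_N(a^2)$ and $f_{N-1}(a^2)$, and $e^{-4\chi^2}$ from a boundary contribution of the kernel derivative. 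The overall $\sqrt N$ rescaling, hence $\alpha=1/2$, comes from the fact that $f_{N-1}(a^2)\sim -2\sqrt{N}\chi\,e^{a^2}\cdot\tfrac12\erfc(\sqrt 2\chi)$ at leading order in the edge regime, while the kernel derivative has the bulk-matched growth.

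The main obstacle is the edge: numerator and denominator are of comparable exponential size $\sim e^{cN}$, so the dominant Gaussians must cancel exactly and one must extract a ratio of linear combinations of $\erfc$-type corrections. A clean way to execute this is to rewrite each truncated exponential as $e^{\lambda}$ minus its $\Gamma$-tail and use the uniform form of Temme's incomplete-gamma expansion throughout; this simultaneously supplies the explicit form of $\rho_{\rm e}(\chi)$ and the uniform convergence in $\chi$ on compact subsets of $\R$. Useful consistency checks are that $\chi\to-\infty$ matches the extrapolation of $\rho_{\rm b}(p)$ at $p=1$, that $\chi\to+\infty$ produces an exponentially small answer reflecting the lack of eigenvalues far beyond the edge, and that the $\chi=0$ value agrees with a direct evaluation using $\erfc(0)=1$.
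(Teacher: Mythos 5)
Your reduction -- telescope the partition-function ratio via \eqref{ZNprod} and \eqref{def of skew norm} to isolate $q_{2N-2}^{(\rm pre)}(a)$, $f_{N-1}(a^2)$, and $\partial_x\widehat{\bfkappa}_N^{(\rm g)}(x,a)|_{x=a}$, and then apply incomplete-gamma asymptotics -- is essentially the paper's route, and the consistency checks you propose (bulk--edge matching, $\chi=0$) are the right ones. But two things are materially off.

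First, your attribution of the bulk profile is wrong. You claim the kernel derivative "reproduces the bulk density of GinSE eigenvalues along the real line, namely of order $N$ with profile $\sqrt{1-p^2}$." Neither part holds: the paper shows $\partial_x\widehat{\bfkappa}_N^{(\rm g)}(x,a)|_{x=a}\to 2$ in the bulk, i.e.\ it is $O(1)$, and the final answer $\rho_{\rm b}(p)=\frac{2}{3}(1-p^2)$ carries $(1-p^2)$, not $\sqrt{1-p^2}$. The factor $N(1-p^2)$ comes from the truncated-exponential asymptotic $e^{-a^2}f_N(a^2)\sim N(1-p^2)$ (and the analogous content of $q_{2N-2}^{(\rm pre)}(a)$), not from the kernel derivative. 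If you run your plan with the kernel derivative supplying the $(1-p^2)$ you will not recover the right constant.

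Second, and more seriously, you treat the reduction of $q_{2N-2}^{(\rm pre)}(a)$ to truncated exponentials as a matter of "algebraic simplification." It is not: the coefficients \eqref{Coe Alpha Even}--\eqref{Coe Beta Odd} are genuine double sums, and the value at $z=a$ is only accessible after the nontrivial identity of Lemma~\ref{RewriteSOP}, which re-expresses $(z-a)^3 q_{2k}^{(\rm pre)}(z)$ in terms of $\cL_{k}(z,a)$, itself built from the GinSE skew-kernel $\bfkappa_k^{(\rm g)}(z,a)$. Because $\cL_k(z,a)=O(z-a)$ at the diagonal, evaluating $q_{2k}^{(\rm pre)}(a)$ actually requires the third-order Taylor coefficient of $\widehat{\bfkappa}_k^{(\rm g)}$ at $z=a$; this is what the paper computes. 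At the edge this means you need the edge scaling limit of the GinSE skew-kernel itself (\eqref{GinSE Edge 1}--\eqref{GinSE Edge 2}), not just scalar incomplete-gamma expansions; without that input you cannot produce the $\erfc(2\chi)$ and $e^{-4\chi^2}$ terms. Your claim that $c=4\chi$ applied to $e_N(2a^2)$ yields $\erfc(2\chi)$ is also off by the index: the relevant truncated exponential has order $2N$, so the correct application is $Q(2N,2a^2)\approx\frac12\erfc(2\chi)$, whereas $Q(N,2a^2)$ sits far beyond the transition and is exponentially small. The plan needs Lemma~\ref{RewriteSOP} (or some equivalent identity bridging the SOP to the Ginibre kernel) to be executable.
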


\begin{figure}[t!]
   \begin{minipage}[b]{0.48\linewidth}
    \centering
    \includegraphics[width=0.9\textwidth]{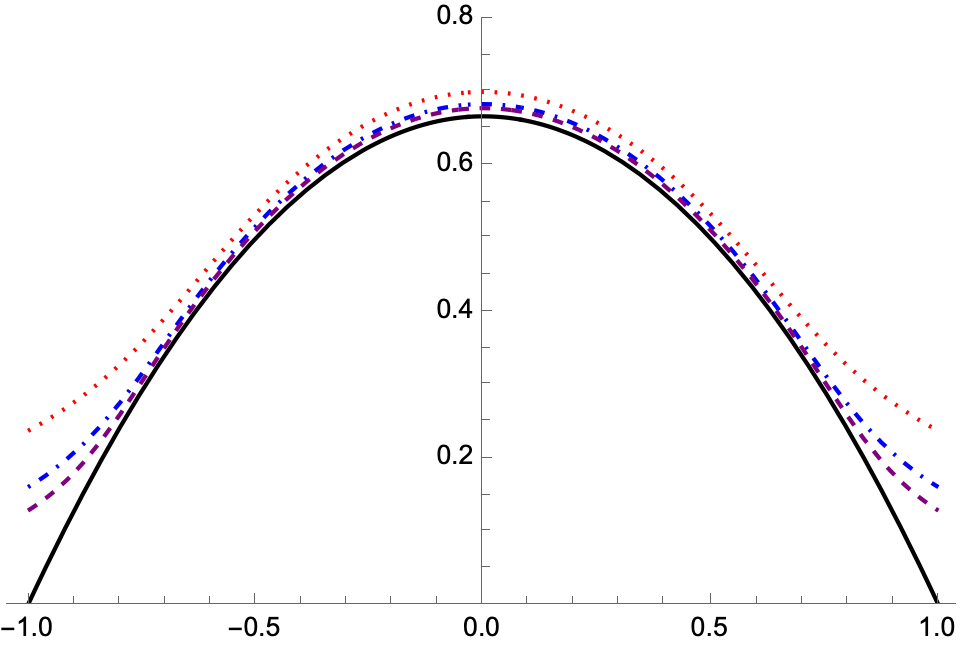}
    \subcaption{Bulk}
  \end{minipage}
   \begin{minipage}[b]{0.48\linewidth}
    \centering
    \includegraphics[width=0.9\textwidth]{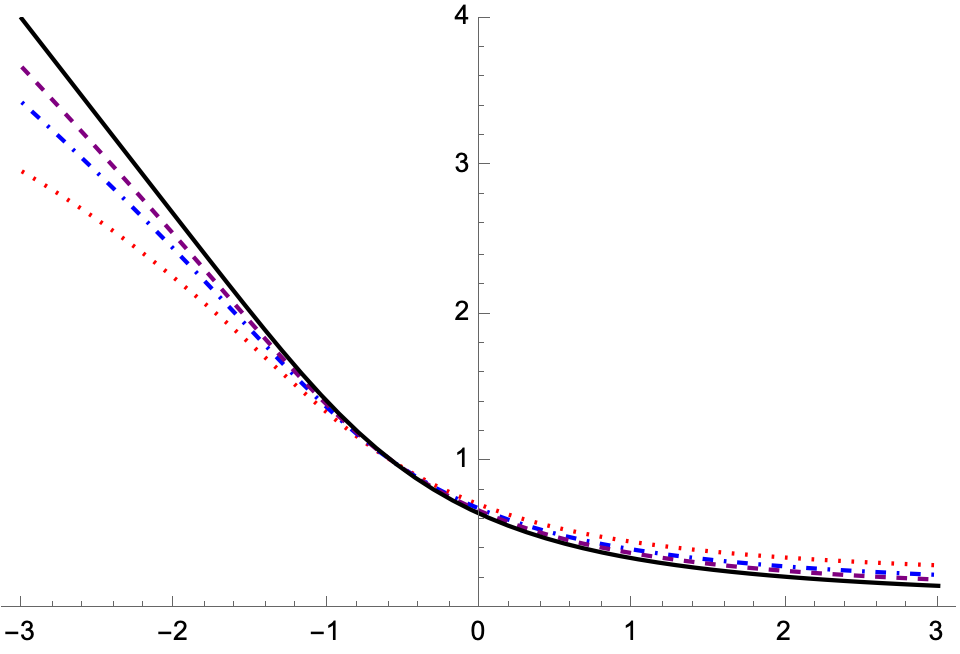}
    \subcaption{Edge}
  \end{minipage}
  \caption{(A) Graph of $p \to \rho_{\rm b}(p)$ and its comparison with $p \mapsto \frac{1}{N}\,\widehat{D}_{1,1}^{(N,1)}(\sqrt{N}p)$, where $N=10, 20$ and $ 30$ (red dotted, blue dot-dashed, and purple dashed lines, respectively). (B) Graph of $\chi \mapsto \rho_{\mathrm{e}}(\chi)$ and its comparison with $\chi \mapsto \frac{1}{\sqrt{N}}\,\widehat{D}_{1,1}^{(N,1)}(\sqrt{N}+\chi)$, where $N=30, 100,$ and $300$.} \label{Fig_CExpectation}
\end{figure}

See Figure~\ref{Fig_CExpectation} for a numerical verification of Theorem~\ref{Thm_CEO}. 

We note that the different scaling in \eqref{scaling for bulk/edge} for the edge is necessary to derive a non-trivial limit. This is also natural, given the conjectural form in \eqref{mean diagonal global density} and the fact that the bulk scaling density \(\rho_{\rm b}\) in \eqref{bulk rho b} vanishes linearly at the edge \(p = \pm 1\). This different scaling for the bulk and edge regimes contrasts with that of the correlation functions in Theorem~\ref{Thm_scaling limits}, where we use the same scaling.

Let us also point out that the edge scaling limit is consistent with the bulk scaling limit in Theorem \ref{Thm_CEO}. Indeed, while in the bulk scaling limit \(\frac{2}{3}(1-p^2) \sim \frac{4}{3}(1-p)\) as \(p \to 1\), the leading constant for the edge scaling limit \(\rho_{\rm e}(\chi) \sim -\frac{4}{3}\chi\) as \(\chi \to -\infty\) coincides with the leading term for the bulk limit. 
We can similarly observe this consistency from the edge to the bulk scaling limit for the limiting skew-kernel of \eqref{GinSEKernel} in the Gaussian case, see \cite[Remark 2.3]{ABK22}.

We now discuss the scaling limits of the correlation functions $\bfR_{N,k}^{(\rm over)}(z_1,z_2,\cdots,z_k)$ associated with \eqref{OverlapWeight}.   
For this purpose, we define the limiting bulk $(\rm b)$ and edge $(\rm e)$ skew-kernel as 
\begin{align}
\begin{split}
\label{Thm Overlap Bulk Kernel}
\kappa_{{\rm b}}^{({\rm over})}(\zeta,\eta)
&:= \frac{1}{2}\frac{(\zeta-\eta)(1+(\zeta-\chi)(\eta-\chi)-e^{2(\zeta-\chi)(\eta-\chi)})}{((\zeta-\chi)(\eta-\chi))^4}
\\
& \quad +\frac{\sqrt{\pi}}{4} \frac{(2(\zeta-\chi)^2-1)(2(\eta-\chi)^2-1)e^{(\zeta-\chi)+(\eta-\chi)^2}\erf(\zeta-\eta)}{((\zeta-\chi)(\eta-\chi))^4}
\\
& \quad +\frac{\sqrt{\pi}}{4} \frac{(2(\eta-\chi)^2-1)((\zeta-\chi)^2-1)e^{(\eta-\chi)^2}\erf(\eta-\chi)  }{((\zeta-\chi)(\eta-\chi))^4}
\\
& \quad -\frac{\sqrt{\pi}}{4} \frac{  (2(\zeta-\chi)^2-1)((\eta-\chi)^2-1)e^{(\zeta-\chi)^2}\erf(\zeta-\chi)  }{((\zeta-\chi)(\eta-\chi))^4} 
\end{split}
\end{align}
and 
\begin{align}
\label{Thm Overlap Edge Kernel}
\kappa_{{\rm e}}^{({\rm over})}(\zeta,\eta)
&:=
\frac{1}{((\zeta-\chi)(\eta-\chi))^4}
\Bigl(
\mathcal{K}(\zeta,\eta|\chi)
-
\frac{\mathcal{A}(\eta,\chi)\mathcal{C}(\zeta,\chi)}{\mathcal{B}(\chi)}
+
\frac{
\mathcal{A}(\zeta,\chi)\mathcal{C}(\eta,\chi)}{\mathcal{B}(\chi)}
\Bigr),
\end{align}
where $\mathcal{A}, \mathcal{B}, \mathcal{C}$ and $\mathcal{K}$ are given by \eqref{cA_Edge}, \eqref{cB_Edge}, \eqref{cC_Edge} and \eqref{cK_Edge}, respectively.

\begin{thm}[\textbf{Scaling limits of eigenvalue correlation functions}] \label{Thm_scaling limits}
Let $p \in [-1,1]$. For $j=1,2,\dots,k$, and $\chi\in\R$, let
\begin{equation} 
z_j=\sqrt{N}p+\zeta_j,\qquad
a=\sqrt{N}p+\chi.
\end{equation}
Then, we have for the $k$-point correlation functions \eqref{def of Pfaff Struc} with respect to the overlap weight \eqref{OverlapWeight} 
\begin{align}
\begin{split}
\label{Lim overlap k point correlation}
\lim_{N\to\infty} \bfR_{N,k}^{({\rm over})}(z_1,\cdots,z_k) & =  R_{k}^{({\rm over})}(\zeta_1,\cdots,\zeta_k) 
\end{split}
\end{align}
uniformly for $\chi$ in a compact subset of $\R$ and for $\zeta_1,\dots,\zeta_k$ in compact subsets of $\C$, where 
\begin{equation}
R_{k}^{({\rm over})}(\zeta_1,\cdots,\zeta_k) 
= \Pf \bigg[
\begin{pmatrix}
\kappa^{({\rm over})}(\zeta_j,\zeta_\ell) 
& \kappa^{({\rm over})}(\zeta_j,\overline{\zeta}_\ell)
\smallskip 
\\
\kappa^{({\rm over})}(\overline{\zeta}_j,\zeta_\ell) 
& \kappa^{({\rm over})}(\overline{\zeta}_j,\overline{\zeta}_\ell)
\end{pmatrix}
\bigg]_{j,\ell=1}^k
\prod_{j=1}^k(\overline{\zeta}_j-\zeta_j) \omega_s(\zeta_j). 
\end{equation}
Here, the weight with shifted exponent (labelled by $s$) is given by 
\begin{equation}
\label{Omega_B}
 \omega_s(\zeta) : =|\zeta-\chi|^2(1+|\zeta-\chi|^2)e^{-2|\zeta-\chi|^2},
\end{equation}
and for the limiting bulk and edge kernel we have to distinguish  
\begin{equation}
\kappa^{({\rm over})}(\zeta,\eta) := 
\begin{cases}
\kappa_{ \rm b }^{({\rm over})}(\zeta,\eta) &\textup{if } p \in (-1,1) \quad \mbox{bulk case}, 
\smallskip
\\
\kappa_{ \rm e }^{({\rm over})}(\zeta,\eta) &\textup{if } p = 1 \quad \quad\quad\ \mbox{edge case}.
\end{cases}
\end{equation}
\end{thm}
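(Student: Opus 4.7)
The plan is to combine the Pfaffian point process identity \eqref{def of Pfaff Struc} (applied with $N\to N-1$ and weight $\omega^{(\rm over)}$) with a sharp asymptotic analysis of the skew-kernel $\bfkappa_{N-1}^{(\rm over)}$. Since Pfaffian $k$-point densities are invariant under the gauge transformation $\bfkappa_{N-1}^{(\rm over)}(z,w)\mapsto e^{\phi(z)+\phi(w)}\bfkappa_{N-1}^{(\rm over)}(z,w)$ compensated by $\omega^{(\rm over)}(z)\mapsto e^{-2\Re\phi(z)}\omega^{(\rm over)}(z)$, I first strip the non-local Gaussian factor $e^{-2|\sqrt{N}p+\zeta|^2}$ from the weight. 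Under the substitution $z_j=\sqrt{N}p+\zeta_j$, $a=\sqrt{N}p+\chi$, the remaining weight factor becomes exactly $\omega_s(\zeta_j)$ from \eqref{Omega_B} and $\overline{z}_j-z_j\to\overline{\zeta}_j-\zeta_j$, so the theorem reduces to uniform-on-compacts convergence of the gauge-transformed skew-kernel to $\kappa_{\rm b}^{(\rm over)}$ for $|p|<1$ and to $\kappa_{\rm e}^{(\rm over)}$ for $p=1$, followed by continuity of the $2k\times 2k$ Pfaffian.

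For the kernel itself I invoke the Christoffel identity \eqref{SkewKernelOverlap}, which expresses $\bfkappa_{N-1}^{(\rm over)}(z,w)$ purely through $\bfkappa_N^{(\rm pre)}$ evaluated at $(z,w)$, $(z,a)$, $(w,a)$ together with $q_{2N}^{(\rm pre)}$ evaluated at $z$, $w$, $a$. The SOP value $q_{2N}^{(\rm pre)}(a)$ and the auxiliary quantities $e_N(a^2)$, $f_N(a^2)$ entering \eqref{Coe Alpha Even}--\eqref{Coe Beta Odd} are handled by Stirling together with the standard uniform expansions of the regularised incomplete gamma function, namely $Q(N,Np^2)\to 1$ in the bulk $|p|<1$ and $Q(N+1,N+\sqrt{N}t)\to \tfrac{1}{2}\erfc(t/\sqrt{2})$ at the edge $p=1$. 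For the pre-kernel itself the main tool is Theorem \ref{Thm_ODE_PreKernel}: the dressed kernel $\widetilde{\bfkappa}_N^{(\rm pre)}$ solves a second-order inhomogeneous linear ODE in $z$ whose right-hand side $\mathrm{I}_N+\mathrm{II}_N+\mathrm{III}_N+\mathrm{IV}_N$ admits explicit uniform-on-compacts limits under the rescaling, since each piece is built from the same blocks $Q(2N,2zw)$, $(zw)^{2N}/(2N)!$, $z^{2N}/(2N)!$ and $\widehat{\cL}_N$. Solving the limit ODE in $\zeta$ with the natural boundary conditions (skew-symmetry $\kappa(\zeta,\eta)=-\kappa(\eta,\zeta)$ and regularity at $\zeta=\chi$ after dividing out the $(\zeta-\chi)^3(\eta-\chi)^3$ prefactor of \eqref{WbfK}) pins down the limit up to a single constant, which is fixed by matching with Theorem \ref{Thm_CEO} via the diagonal identity \eqref{Conditional Expectation Diagonal Overlap Real}. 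Substituting these limits back into \eqref{SkewKernelOverlap} and simplifying produces the displayed formulas \eqref{Thm Overlap Bulk Kernel}--\eqref{Thm Overlap Edge Kernel}.

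Once uniform entry-wise convergence of the $2k\times 2k$ matrix is in place, continuity of the Pfaffian yields \eqref{Lim overlap k point correlation}. The main obstacle is the edge regime in the second step: the denominator $q_{2N}^{(\rm pre)}(a)$ and the inhomogeneous terms $\mathrm{II}_N$--$\mathrm{IV}_N$ all sit exactly at the incomplete-gamma transition $2zw\sim 2N$, so I need sharp control of the cancellations between the $(zw)^{2N}/(2N)!$ tails, the emerging $\erfc$-profiles from $Q(2N,\cdot)$, and the denominator $q_{2N}^{(\rm pre)}(a)$, in order to confirm that the limit is smooth and matches \eqref{Thm Overlap Edge Kernel}. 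By contrast, the bulk case is comparatively routine, since in that regime every factor either converges to a constant or decays exponentially fast.
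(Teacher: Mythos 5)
Your outline follows the paper's proof almost verbatim: gauge-transform the weight and kernel so the Pfaffian entries become bounded, invoke the Christoffel relation \eqref{SkewKernelOverlap} to reduce to the pre-kernel, pass the differential equation of Theorem~\ref{Thm_ODE_PreKernel} to the limit under the bulk or edge rescaling, solve the limit ODE, and conclude by continuity of the Pfaffian. That is indeed how the paper proceeds.

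One point needs tightening: you state that skew-symmetry plus regularity at $\zeta=\chi$ pins the solution down \emph{only up to a single constant}, which you propose to fix by matching with Theorem~\ref{Thm_CEO}. That last step is both unnecessary and not clearly available. The limit ODE in $\zeta$ is second order, with a regular singular point at $\zeta=\chi$ whose indicial exponents are $0$ and $3$. The condition that $\widetilde{\bfkappa}^{(\mathrm{pre})}(\chi,\eta)=0$ (inherited from the $(z-a)^3$ prefactor in \eqref{WbfK}) kills the exponent-$0$ branch, and skew-symmetry $\widetilde{\bfkappa}^{(\mathrm{pre})}(\eta,\eta)=0$ then fixes the exponent-$3$ coefficient; together these determine the solution completely, which is exactly what the paper does (the derivative condition $\partial_\zeta\widetilde{\bfkappa}^{(\mathrm{pre})}(\zeta,\eta)|_{\zeta=\chi}=0$ is automatic for both Frobenius branches and the variation-of-parameters particular solution, so it adds nothing). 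Theorem~\ref{Thm_CEO} itself is a statement about the normalised partition function $Z_{N-1}^{(\rm over)}(a)/Z_N^{(\rm g)}$, which under Proposition~\ref{Prop_Christoffel pertubation} reduces to the asymptotics of $q_{2N-2}^{(\rm pre)}(a)$ rather than to a value of the limiting kernel, so it does not supply a direct matching condition for an undetermined constant in $\kappa^{(\rm over)}$; dropping that step leaves the argument intact and cleaner.
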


  \begin{figure}[t]
   \begin{minipage}[b]{0.32\linewidth}
    \centering
    \includegraphics[width=\textwidth]{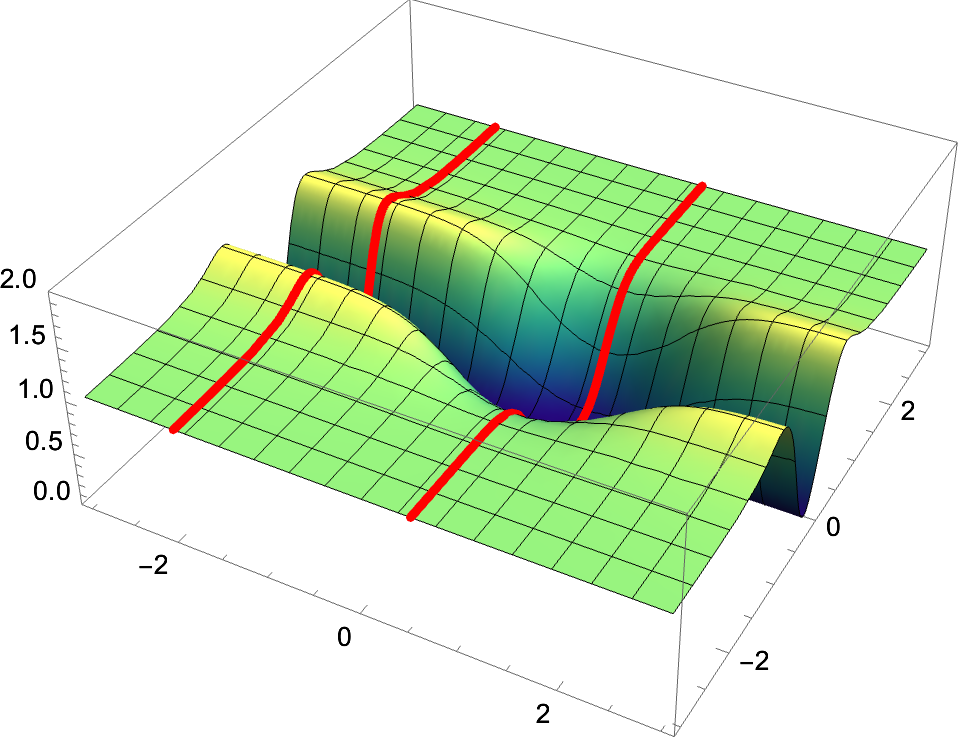}
    \subcaption{bulk  $R_1^{(\mathrm{over})}(x+iy)$ at $\chi=0.5$}
  \end{minipage}
   \begin{minipage}[b]{0.32\linewidth}
    \centering
    \includegraphics[width=\textwidth]{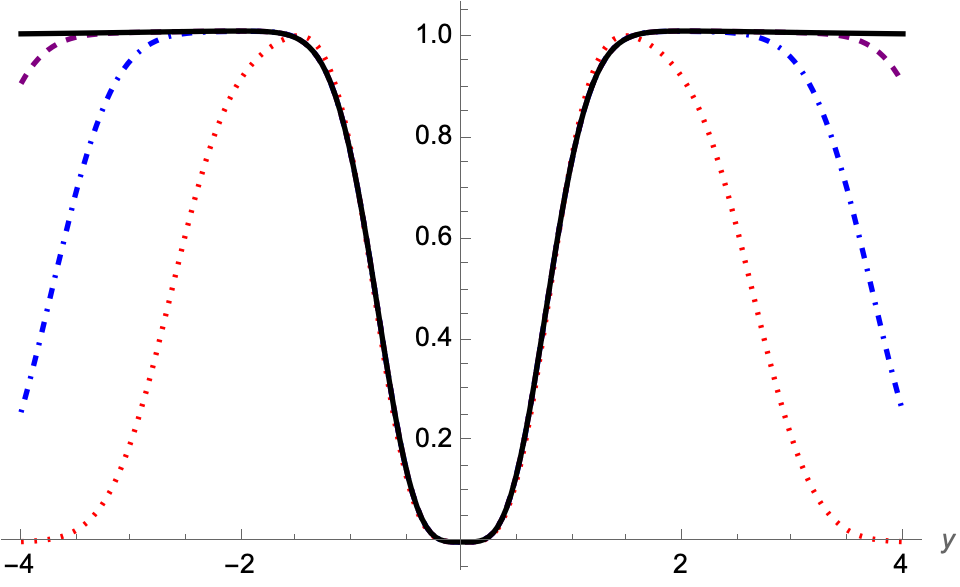}
    \subcaption{bulk  $R_{N,1}^{(\mathrm{over})}(0.5+iy)$ at $\chi=0.5$ }
  \end{minipage}
      \begin{minipage}[b]{0.32\linewidth}
    \centering
    \includegraphics[width=\textwidth]{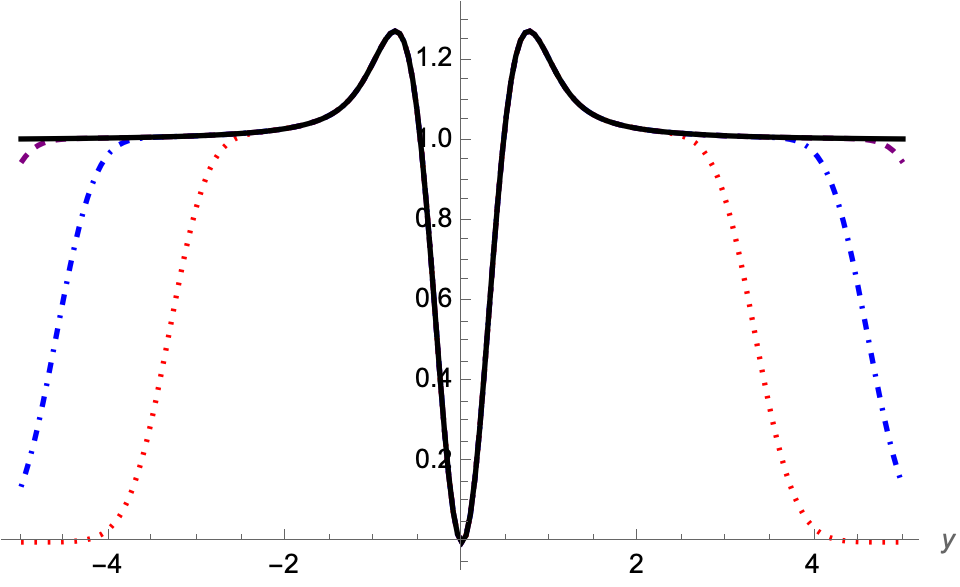}
    \subcaption{bulk  $R_{N,1}^{(\mathrm{over})}(-2+iy)$ at $\chi=0.5$ }
  \end{minipage}
  
      \begin{minipage}[b]{0.32\linewidth}
    \centering
    \includegraphics[width=\textwidth]{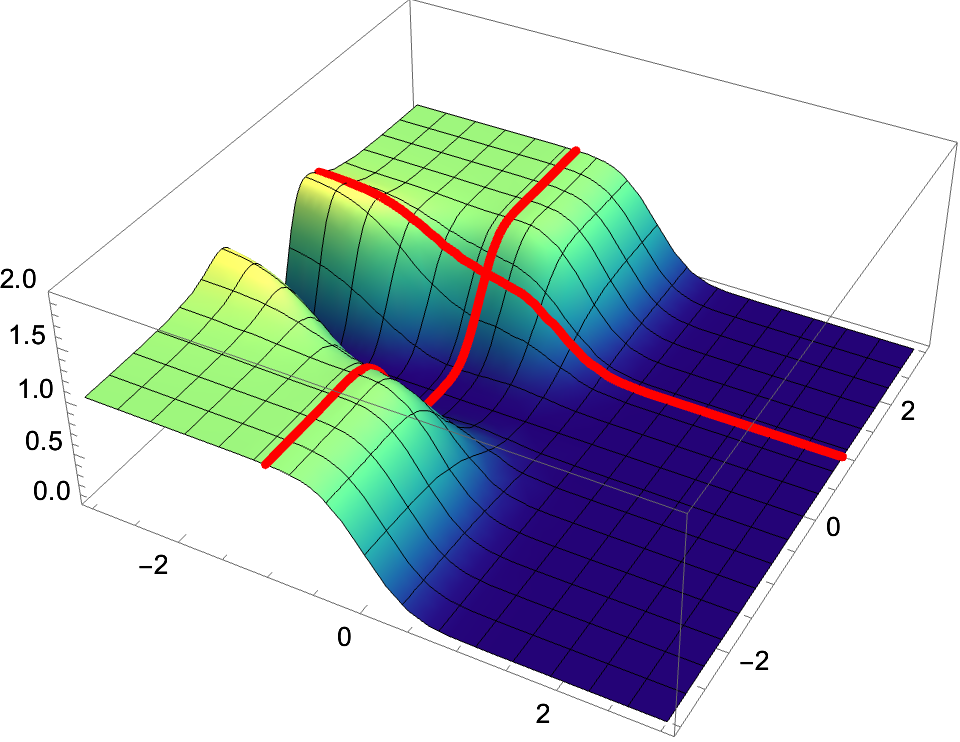}
    \subcaption{edge  $R_1^{(\mathrm{over})}(x+iy)$ $\chi=-1$ }
  \end{minipage}
  \begin{minipage}[b]{0.32\linewidth}
    \centering
    \includegraphics[width=\textwidth]{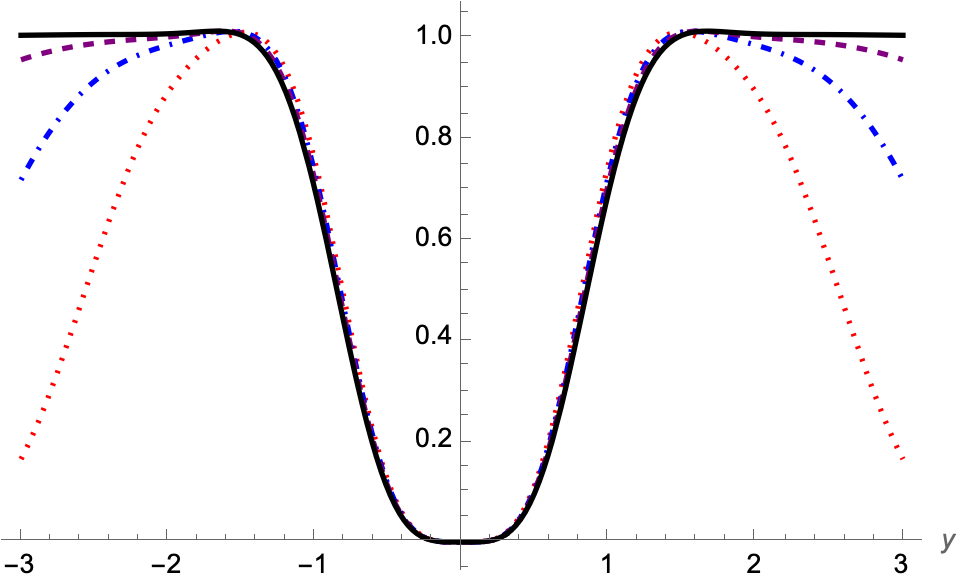}
    \subcaption{edge  $R_{N,1}^{(\mathrm{over})}(-1+iy)$ at $\chi=-1$ }
  \end{minipage}
    \begin{minipage}[b]{0.32\linewidth}
    \centering
    \includegraphics[width=\textwidth]{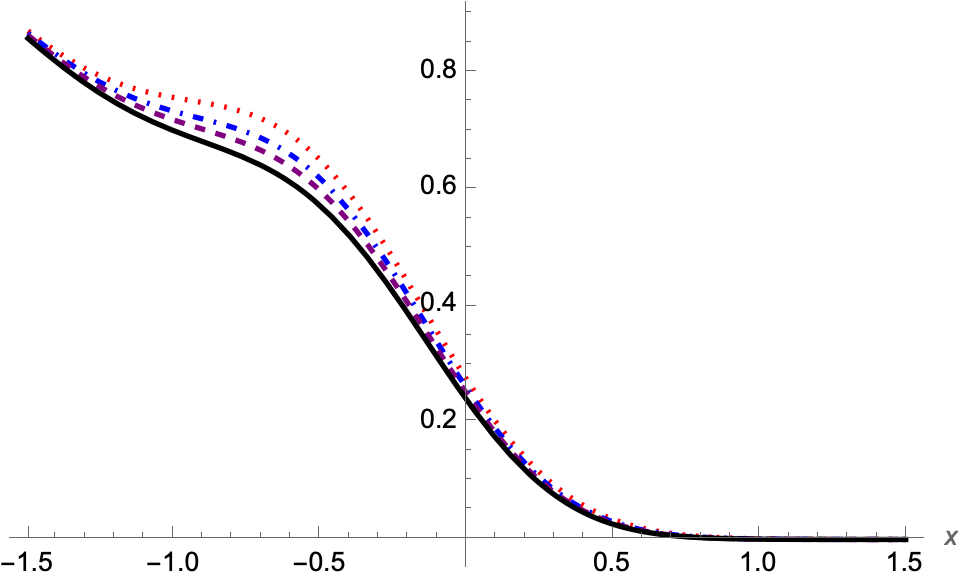}
    \subcaption{edge  $R_{N,1}^{(\mathrm{over})}(x+i)$ at $\chi=-1$ }
  \end{minipage}
  \caption{(A) Graph of the bulk one point density $(x,y)\mapsto R_1^{(\mathrm{over},\mathrm{b})}(x+iy)$ at $\chi=0.5$. (B) and (C) show the graphs of the cross-sections $x=0.5$ and $x=-2$, together with their comparison with $R_{N,1}^{(\mathrm{over})}$ at $\chi=0.5$ with $N=10,20$, and $30$ (red dotted, blue dot-dashed, and purple dashed curves, respectively). (D), (E) and (F) are analogous figures for the edge case at $\chi=-1$, where $N=10, 40$ and $100$.} \label{Fig_scaling limits}
\end{figure}

Note that as an example the limiting local spectral density or $1$-point function is given by 
\begin{equation}
R_1^{ \rm (over) }(\zeta) = \kappa^{ \rm (over) }(\zeta,\overline{\zeta}) ( \overline{\zeta}-\zeta )  \omega_s(\zeta). 
\end{equation}
See Figure~\ref{Fig_scaling limits} for the graph of $R_1^{ \rm (over) }$, as well as a numerical verification of Theorem~\ref{Thm_scaling limits}. 
In particular, one can see the additional repulsion at the point $\chi$.

The rest of this paper is organised as follows. Section~\ref{S_Conditional Origin} is a warm-up case where we demonstrate the conditional origin case. If \(a = 0\), then all the computations become significantly simpler as they can be easily made explicit. This section is for instructive purposes, so that the reader can easily follow the overall strategy of the proofs. 
Section~\ref{S_Construction of skew-orthogonal polynomials} is devoted to the finite-\(N\) analysis. In this section, after demonstrating the transposition lemma (Lemma~\ref{Lem_D12 D11}), we introduce a way to construct the SOPs and prove Theorem~\ref{Thm_OverlapCoe}. Then, we derive a second-order differential equation for the skew-kernel with respect to weight \eqref{PreOverlapWeight}, Theorem~\ref{Thm_ODE_PreKernel}.
Section~\ref{S_Scling limits} is devoted to the large-\(N\) asymptotic analysis. In particular, we prove Theorems~\ref{Thm_CEO} and ~\ref{Thm_scaling limits}.

\section{Conditional origin case and rotationally invariant weight}
\label{S_Conditional Origin}
 
In this section, for pedagogical reasons we first derive results in the much simpler case conditioned at the origin, $a=0$.  
In particular, we present the proof of Theorem~\ref{Thm_scaling limits} for that case. Conditioning a point at the origin simplifies the computations significantly due to the radial symmetry of the weight function \eqref{OverlapWeight}. From this special case, readers can more easily grasp the overall strategy and underlying ideas for subsequent proofs.

Let us first recall the general construction of SOPs for the radially symmetric case. 
Suppose that the underlying measure $\mu$ is radially symmetric, i.e. $\mu(z)=\mu(|z|)$. 
It is obvious that the associated planar orthogonal polynomials defined in Appendix \ref{Appendix_Planar OP} are monomials with squared norms $h_k$
\begin{equation}
p_k(z)= z^k, \qquad h_k= \int_{\mathbb{C}} |z|^{2k} \,d\mu(z). 
\end{equation}
Because of that the associated SOP simply follow, see e.g. \cite[Corollary 3.3]{AEP22}:
\begin{equation} \label{SOP radially symmetric}
q_{2k+1}(z)=z^{2k+1},\qquad q_{2k}(z)=z^{2k}+\sum_{\ell=0}^{k-1} z^{2\ell} \prod_{j=0}^{k-\ell-1} \frac{h_{2\ell+2j+2}}{h_{2\ell+2j+1}}, \qquad r_k=2\,h_{2k+1}. 
\end{equation}
In particular, if the weight is Gaussian, $d\mu^{ \rm (g) }(z)= e^{-2|z|^2}\,dA(z)$, the associated SOP and its norms are given by \eqref{GinSESOP}.

Note that for the case $a=0$, the overlap weight function \eqref{OverlapWeight} and \eqref{PreOverlapWeight} are radially symmetric. 
In this case, we can directly derive the finite-$N$ skew-kernel associated with  both \eqref{OverlapWeight} and \eqref{PreOverlapWeight}. 
%but as we already mentioned in Figure~\ref{Fig_Overview}, 
However, to prepare the next section we will derive the finite $N$ skew-kernel associated with \eqref{OverlapWeight} from \eqref{PreOverlapWeight} as in Proposition~\ref{Prop_Christoffel pertubation} and analyze the large $N$-limit. 
Therefore, we first demonstrate the case $a=0$ for \eqref{PreOverlapWeight}. Then, by using \eqref{SOP radially symmetric}, it follows that the skew-kernel associated with the weight function \eqref{PreOverlapWeight} is given by
\begin{equation}
    \label{Pre Origin Overlap Kernel}
\bfkappa_N^{(\mathrm{pre})}(z,w)
\equiv
\bfkappa_N^{(\mathrm{pre})}(z,w|0,0)
:=
G_N^{(\mathrm{pre})}(z,w)-G_N^{(\mathrm{pre})}(w,z),
\end{equation}
where
\[
G_N^{(\mathrm{pre})}(z,w)=
\sqrt{2}^3\sum_{k=0}^{N-1}\sum_{\ell=0}^{k}\frac{(2k+3)(2\ell+2)}{(2k+4)!!(2\ell+3)!!}(\sqrt{2}z)^{2k+1}(\sqrt{2}w)^{2\ell}.
\]
Here, we have used 
\begin{equation}
h_k^{\rm (pre)}(0)=\int_{\mathbb{C}} |z|^{2k} \omega^{(\rm pre)}(z,\overline{z}|0,0)dA(z)=
2\int_0^\infty r^{2k+1}(1+r^2)e^{-2r^2}dr=\frac{\Gamma(k+1)(k+3)}{2^{k+2}},
\end{equation}
leading to 
\begin{equation}
\label{Origin Pre Skew Norm}
    r_k^{(\mathrm{pre})}(0) =  (k+2)\frac{(2k+1)!}{2^{2k+1}}. 
\end{equation}
We also write 
\begin{equation} \label{Pre origin hat GN}
\widehat{\bfkappa}_N^{(\mathrm{pre})}(z,w):=(zw)^3\bfkappa_N^{(\mathrm{pre})}(z,w), \qquad 
\widehat{G}_N^{(\mathrm{pre})}(z,w):=(zw)^3G_N^{(\mathrm{pre})}(z,w).  % =\frac{1}{\sqrt{2}^3}\sum_{k=0}^{N-1}\sum_{\ell=0}^{k}\frac{(2k+3)(2\ell+2)}{(2k+4)!!(2\ell+3)!!}(\sqrt{2}z)^{2k+4}(\sqrt{2}w)^{2\ell+3}.
\end{equation}  
We then show the following. 
\begin{prop}
\label{prop_pre_originkernel_Limit}
We have 
\begin{align}
\begin{split}
\label{OriginODE}
&\quad\Bigl[
z\partial_z^2-(2z^2+2)\partial_z-2z
\Bigr]\widehat{\bfkappa}_N^{(\mathrm{pre})}(z,w)\\
&=
4(zw)^3\sum_{k=0}^{2N-1}\frac{(2zw)^{k}}{k!}
-\frac{1}{2}\frac{(2N+1)(2N+3)}{(2N+2)!!}(\sqrt{2}z)^{2N+3}\sum_{\ell=0}^{N-1}\frac{2\ell+2}{(2\ell+3)!!}(\sqrt{2}w)^{2\ell+3}.
\end{split}
\end{align}
 Furthermore,  as $N\to\infty$, the skew-kernel $\bfkappa_N^{(\rm pre)}(z,w)$ converges to $\varkappa_{\,\rm o}^{(\mathrm{pre})}(z,w)$, uniformly for $z,w$ in compact subsets of $\C$, where 
 \begin{align}
\begin{split}
\label{Limit Pre Bulk Origin}
\varkappa_{\,\rm o}^{(\mathrm{pre})}(z,w)
&= \frac{\sqrt{\pi}}{4}\frac{(2z^2-1)(2w^2-1)}{ (zw)^3 }e^{z^2+w^2}\Big( \mathrm{erf}(z-w) + \mathrm{erf}(w)-\mathrm{erf}(z) \Big) 
\\
& \quad -\frac{z(2w^2-1)}{2(zw)^3} e^{w^2} +\frac{ w(2z^2-1) }{2(zw)^3} e^{z^2} -\frac{z-w}{2(zw)^3}e^{2zw}.
\end{split}
\end{align}
\end{prop}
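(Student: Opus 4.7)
The plan is to verify the differential equation \eqref{OriginODE} by direct computation on the explicit double sum, and then extract the scaling limit \eqref{Limit Pre Bulk Origin} by passing $N\to\infty$ in that equation and identifying the solution.

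For the ODE, I would apply $\mathfrak{D}_{z,0} = z\partial_z^2 - (2z^2+2)\partial_z - 2z$ monomial-by-monomial to $\widehat{G}_N^{(\mathrm{pre})}(z,w) = (zw)^3 G_N^{(\mathrm{pre})}(z,w)$ and to $\widehat{G}_N^{(\mathrm{pre})}(w,z)$, relying on the elementary identity
$$\mathfrak{D}_{z,0} z^{m} = m(m-3) z^{m-1} - 2(m+1) z^{m+1}.$$
In $\widehat{G}_N^{(\mathrm{pre})}(z,w)$ the $z$-exponent is $m = 2k+4$ and sits outside the inner $\ell$-sum, while in $\widehat{G}_N^{(\mathrm{pre})}(w,z)$ it is $m = 2\ell+3$ and lies inside the inner sum, so the two actions act asymmetrically. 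Each action shifts the summation index by one; after reindexing the shifted series against the original, I expect the bulk of the contributions to telescope. What must survive splits into two blocks: a diagonal piece (created when a shifted $k$-index collides with $\ell$) which collapses the double sum $\sum_k \sum_{\ell\leq k}$ into the single truncated-exponential series $4(zw)^3 \sum_{j=0}^{2N-1}(2zw)^j/j!$, and a boundary piece at the top of the outer summation $k = N-1$, which I expect to reproduce the second inhomogeneous term with the stated prefactor $(2N+1)(2N+3)/(2N+2)!!$. Subtracting the computations coming from $\widehat{G}_N^{(\mathrm{pre})}(z,w)$ and $\widehat{G}_N^{(\mathrm{pre})}(w,z)$, and noting that the operator is applied only to $z$, yields \eqref{OriginODE}.

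For the limit \eqref{Limit Pre Bulk Origin}, I would pass $N\to\infty$ in the ODE. On compact subsets of $\C^2$, the truncated exponential converges uniformly to $e^{2zw}$, while the second inhomogeneous term carries the factor $(2N+2)!!$ in the denominator, which beats the polynomial growth of $z^{2N+3}$ and the inner $\ell$-sum; thus that term vanishes uniformly on compacts. The limit function $\widehat{\varkappa}_{\,\rm o}^{(\mathrm{pre})}(z,w) := (zw)^3 \varkappa_{\,\rm o}^{(\mathrm{pre})}(z,w)$ therefore satisfies
$$\mathfrak{D}_{z,0} \widehat{\varkappa}_{\,\rm o}^{(\mathrm{pre})}(z,w) = 4(zw)^3 e^{2zw}.$$
To identify the particular solution as \eqref{Limit Pre Bulk Origin}, I would verify the formula directly by differentiation, using $\partial_z \mathrm{erf}(z) = 2e^{-z^2}/\sqrt{\pi}$, and pin down the two free constants of integration by the skew-symmetry condition $\varkappa_{\,\rm o}^{(\mathrm{pre})}(w,w) = 0$ and by matching the leading behavior of $\bfkappa_N^{(\mathrm{pre})}(z,w)$ as $z\to 0$, which is fixed by the explicit monomial form of the lowest SOPs $q_0^{(\mathrm{pre})}, q_1^{(\mathrm{pre})}$ in the radially symmetric case. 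Uniform convergence on compact subsets then follows from the uniform convergence of the truncated exponential together with a standard Cauchy-estimate argument applied to the polynomial $\widehat{\bfkappa}_N^{(\mathrm{pre})}(z,w)$.

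The main obstacle will be the combinatorial bookkeeping of the telescoping in the first step: the operator $\mathfrak{D}_{z,0}$ shifts the relevant index by $\pm 1$ while producing quadratic-in-index coefficients, and the two differently-structured contributions from $\widehat{G}_N^{(\mathrm{pre})}(z,w)$ and $\widehat{G}_N^{(\mathrm{pre})}(w,z)$ must be reindexed and cancelled so that exactly the two advertised coefficients appear. A secondary subtlety is showing that the closed form in \eqref{Limit Pre Bulk Origin} is entire despite the apparent $(zw)^{-3}$ singularities: this requires expanding the exponential and error-function factors at the origin and verifying that the numerator vanishes to order three in each of $z$ and $w$.
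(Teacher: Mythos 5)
Your plan is essentially the paper's, and it is sound. The only genuine difference is in how you organize the computation of the ODE. The paper applies $\partial_z$ once, reindexes the outer sum $k\to k-1$, and then recognizes the surviving double sums as $2z\,\widehat{G}_N(z,w)+4z^2\int_0^z t^{-2}\widehat{G}_N(t,w)\,dt$ plus single sums; differentiating this integro-differential identity once more produces $\mathfrak{D}_{z,0}$ automatically. You instead propose applying the full second-order operator $\mathfrak{D}_{z,0}$ to each monomial at once, using $\mathfrak{D}_{z,0}z^m = m(m-3)z^{m-1}-2(m+1)z^{m+1}$, and then telescoping both shifts simultaneously. The formulas are equivalent, but the paper's two-step route is the cleaner path: the single $\partial_z$ shift already collapses the diagonal piece into the truncated exponential and isolates the top boundary term, and the integral recognition absorbs the remaining double sum back into the kernel itself so no further bookkeeping is needed. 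Your direct route will also land on \eqref{OriginODE}, at the cost of tracking two shifted ranges against the original inner $\ell$-sum. A small imprecision in your proposal: the initial conditions are not pinned down by "the explicit monomial form of the lowest SOPs" but simply by the prefactor $(zw)^3$ in the definition of $\widehat{\bfkappa}_N^{(\rm pre)}$, which forces $\widehat{\varkappa}_{\rm o}^{(\rm pre)}(z,w)=O(z^3)$ and hence (together with $S_1(0)=-1\neq 0$, $S_2=O(z^3)$) kills the $S_1$ branch; skew-symmetry $\widehat{\varkappa}_{\rm o}^{(\rm pre)}(w,w)=0$ then fixes the $S_2$ coefficient. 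Both your sketch and the paper leave the uniform convergence of $\widehat{\bfkappa}_N^{(\rm pre)}$ to the limit informal; for the record this follows directly from the explicit double series (the tail is dominated by factorially decaying coefficients), independently of the ODE.
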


Here, the subscript o stands for the origin limit. 
We mention that indeed, the differential equation \eqref{Origin_Limit_ODE} and the expression \eqref{Limit Pre Bulk Origin} will play a role in the proof of the bulk case of  Theorem~\ref{Thm_scaling limits} for general real $a\neq0$.

\begin{proof}[Proof of Proposition~\ref{prop_pre_originkernel_Limit}]
We begin with the proof of \eqref{OriginODE}.
By differentiating \eqref{Pre origin hat GN} and shifting the index, we have 
\[
\partial_z\widehat{G}_N^{(\rm pre)}(z,w)
= 4(zw)^3 + \frac12 \sum_{k=0}^{N-2}\sum_{\ell=0}^{k+1}
\frac{(2k+5)(2\ell+2)}{(2k+4)!!(2\ell+3)!!}
(\sqrt{2}z)^{2k+5}(\sqrt{2}w)^{2\ell+3}. 
\]
By rearranging the last term, we have 
\begin{align*}
&\quad \frac12\sum_{k=0}^{N-2}\sum_{\ell=0}^{k+1}\frac{(2k+5)(2\ell+2)}{(2k+4)!!(2\ell+3)!!}(\sqrt{2}z)^{2k+5}(\sqrt{2}w)^{2\ell+3}
\\
&= \frac12\sum_{k=0}^{N-1}\sum_{\ell=0}^{k}\frac{(2k+5)(2\ell+2)}{(2k+4)!!(2\ell+3)!!}(\sqrt{2}z)^{2k+5}(\sqrt{2}w)^{2\ell+3} + \frac12 \sum_{k=0}^{N-2}\frac{(2zw)^{2k+5}}{(2k+2)!!(2k+3)!!}
\\
& \quad - \frac12 \frac{2N+3}{(2N+2)!!}(\sqrt{2}z)^{2N+3}\sum_{\ell=0}^{N-1}\frac{2\ell+2}{(2\ell+3)!!}(\sqrt{2}w)^{2\ell+3},
\end{align*}
which leads to 
\begin{align*}
\partial_z\widehat{G}_N^{(\mathrm{pre})}(z,w)
&= 2z\, \widehat{G}_N^{(\mathrm{pre})}(z,w)
+ 2z^2\sum_{k=0}^{N-1}\sum_{\ell=0}^{k}\frac{(2\ell+2)}{(2k+4)!!(2\ell+3)!!}(\sqrt{2}z)^{2k+3}(\sqrt{2}w)^{2\ell+3}
\\
&\quad + \frac12 \sum_{k=0}^{N-1}\frac{(2zw)^{2k+3}}{(2k+1)!} -\frac12 \frac{2N+3}{(2N+2)!!}(\sqrt{2}z)^{2N+3}\sum_{\ell=0}^{N-1}\frac{2\ell+2}{(2\ell+3)!!}(\sqrt{2}w)^{2\ell+3}.
\end{align*}
Similarly, we have 
\begin{align*}
\partial_z\widehat{G}_N^{(\mathrm{pre})}(w,z)
&= 2z\, \widehat{G}_N^{(\mathrm{pre})}(w,z)
+ 2z^2\sum_{k=0}^{N-1}\sum_{\ell=0}^{k}\frac{2k+3}{(2k+4)!!(2\ell+3)!!}(\sqrt{2}w)^{2k+4}(\sqrt{2}z)^{2\ell+2}
\\
& \quad +2z^2\sum_{k=0}^{N-1}\frac{2k+3}{(2k+4)!!}(\sqrt{2}w)^{2k+4}
-\frac12\sum_{k=0}^{N-1}\frac{(2zw)^{2k+4}}{(2k+2)!}.
\end{align*}
Here, one can express the double summations in the inhomogenous terms by recognising 
\begin{align*}
\int_0^z\frac{1}{t^2}\widehat{G}_N^{(\mathrm{pre})}(t,w) \, dt
&= \frac12 \sum_{k=0}^{N-1}\sum_{\ell=0}^{k}\frac{2\ell+2}{(2k+4)!!(2\ell+3)!!}(\sqrt{2}z)^{2k+3}(\sqrt{2}w)^{2\ell+3},
\\
\int_0^z\frac{1}{t^2}\widehat{G}_N^{(\mathrm{pre})}(w,t)\, dt
&= \frac12 \sum_{k=0}^{N-1}\sum_{\ell=0}^{k}\frac{2k+3}{(2k+4)!!(2\ell+3)!!}(\sqrt{2}w)^{2k+4}(\sqrt{2}z)^{2\ell+2}.
\end{align*}
Then it follows that 
\begin{align*}
\partial_z\widehat{G}_N^{(\mathrm{pre})}(z,w)
&= 2z\, \widehat{G}_N^{(\mathrm{pre})}(z,w)+4z^2\int_{0}^z\frac{\widehat{G}_N^{(\mathrm{pre})}(t,w)}{t^2}\,dt
\\
& \quad + \frac12 \sum_{k=0}^{N-1}\frac{(2zw)^{2k+3}}{(2k+1)!} - \frac12 \frac{2N+3}{(2N+2)!!}(\sqrt{2}z)^{2N+3}\sum_{\ell=0}^{N-1}\frac{2\ell+2}{(2\ell+3)!!}(\sqrt{2}w)^{2\ell+3},
\\
\partial_z\widehat{G}_N^{(\mathrm{pre})}(w,z) &= 2z\, \widehat{G}_N^{(\mathrm{pre})}(w,z)+4z^2\int_0^z\frac{\widehat{G}_N^{(\mathrm{pre})}(w,t)}{t^2} \, dt +2z^2\sum_{k=0}^{N-1}\frac{2k+3}{(2k+4)!!}(\sqrt{2}w)^{2k+4} -\frac12 \sum_{k=0}^{N-1}\frac{(2zw)^{2k+4}}{(2k+2)!}.
\nonumber
\end{align*}
Combining the above, we obtain  
\begin{align*}
\frac{1}{z^2}\partial_z\widehat{\bfkappa}_N^{(\mathrm{pre})}(z,w)
&= \frac{2}{z}\widehat{\bfkappa}_N^{(\mathrm{pre})}(z,w) + 4\int_{0}^{z}\frac{\widehat{\bfkappa}_N^{(\mathrm{pre})}(t,w)}{t^2} \, dt + \frac12 \sum_{k=0}^{2N-1}\frac{2^{k+3}z^{k+1}w^{k+3}}{(k+1)!}
\\
& \quad -\frac{2N+3}{(2N+2)!!}2^{N+\frac12}z^{2N+1}\sum_{\ell=0}^{N-1}\frac{2\ell+2}{(2\ell+3)!!}(\sqrt{2}w)^{2\ell+3} -2\sum_{k=0}^{N-1}\frac{2k+3}{(2k+4)!!}(\sqrt{2}w)^{2k+4}. 
\end{align*}
By differentiating the above with respect to $z$ and rearranging terms, we obtain the desired equation \eqref{OriginODE}.

Next, we shall prove \eqref{Limit Pre Bulk Origin}. 
We first claim that $\widehat{\varkappa}_{\,\rm o}^{(\mathrm{pre})}(z,w)=\lim_{N\to\infty}\widehat{\bfkappa}_{N}^{(\mathrm{pre})}(z,w)$ exists and satisfies the differential equation
\begin{equation}
\Bigl[
z\partial_z^2-(2z^2+2)\partial_z-2z
\Bigr]\widehat{\varkappa}_{\,\rm o}^{(\mathrm{pre})}(z,w)
=
4(zw)^3e^{2zw}. \label{Origin_Limit_ODE}
\end{equation} 
To verify this, our approach is to derive an integral representation of \(\widehat{\bfkappa}_{N}^{(\mathrm{pre})}\) by explicitly solving the differential equation \eqref{OriginODE}, see also \cite[Appendix]{BN24} for a similar argument. For this, we first note that two independent solutions to the homogeneous equation of \eqref{OriginODE} are given by 
\begin{equation*}
f_1(z):=\frac{1}{2}(2z^2-1)e^{z^2}, \qquad
f_2(z):=\frac{1}{4} \Bigl(  \sqrt{\pi}(2z^2-1)e^{z^2}\erf(z)
+2z\Bigr).
\end{equation*}
We denote by $F_N(z,w)$ the right-hand side of \eqref{OriginODE}.  
Then it follows from \eqref{OriginODE} that $\widehat{\bfkappa}_N^{(\mathrm{pre})}$ is of the form 
\[ 
\widehat{\bfkappa}^{(\mathrm{pre})}_N(z,w)=a_N(w)\,f_1(z)+b_N(w)\,f_2(z)+\int_{w}^{z}\frac{f_1(t)f_2(z)-f_1(z)f_2(t)}{\mathcal{W}(f_1,f_2)(t)}F_N(t,w)\,dt  
\]
for some functions $a_N(w)$ and $b_N(w)$. Here $\mathcal{W}(f_1,f_2)(z) = -z^2e^{z^2} $ is the Wronskian of $f_1$ and $f_2$.  
Note that, due to skew-symmetry, we have \(\widehat{\bfkappa}_N^{(\mathrm{pre})}(w,w) = 0\). Moreover, from \eqref{Pre origin hat GN}, it follows that \(\widehat{\bfkappa}_N^{(\mathrm{pre})}(z,w) = O(z^3)\) as \(z \to 0\). These serve as initial conditions which, after straightforward computations, lead to  
$$
a_N(w)=-\int_0^{w}\frac{f_2(t)F_N(t,w)}{\mathcal{W}(f_1,f_2)(t)}\,dt, \qquad
b_N(w)=\frac{f_1(w)}{f_2(w)}\int_0^{w}\frac{f_2(t)F_N(t,w)}{\mathcal{W}(f_1,f_2)(t)}\,dt.
$$  
Note that for \(z, w\) in compact subsets of \(\mathbb{C}\), the function \(F_N(z,w)\) converges absolutely to \(4(zw)^3 e^{2zw}\) as \(N \to \infty\). Then, by the dominated convergence theorem, in the integral expression of $\widehat{\bfkappa}_N^{(\rm pre)}(z,w)$, one can exchange the limit and the integral for the terms involving \(F_N(t,w)\) as \(N \to \infty\). Moreover, since \(\widehat{\bfkappa}_N^{(\mathrm{pre})}(z,w)\) is analytic with respect to \(z\) and $w$, the limit \(\widehat{\varkappa}_{\rm o}(z,w) = \lim_{N\to\infty} \widehat{\bfkappa}_N^{(\mathrm{pre})}(z,w)\) exists.  
This establishes the desired equation \eqref{Origin_Limit_ODE}.

Note that by the skew-symmetry and the pre-factor $(zw)^3$ in \eqref{Pre origin hat GN}, the limiting kernel $\widehat{\varkappa}_{\,\rm o}^{(\mathrm{pre})}$ satisfies  
\begin{equation*}
\widehat{\varkappa}_{\,\rm o}^{(\mathrm{pre})}(w,w)=0,\quad
\partial_z\widehat{\varkappa}_{\,\rm o}^{(\mathrm{pre})}(z,w)|_{z=0}=0.
\end{equation*}
By solving the above ODE with these initial conditions and then dividing by $(zw)^3$, we obtain \eqref{Limit Pre Bulk Origin}.
\end{proof}

Similar to \eqref{OverlapWeight}, 
by using \eqref{SOP radially symmetric} again, it follows that the skew-kernel associated with the weight function \eqref{OverlapWeight} is given by
\begin{equation}
\label{Origin Overlap Kernel}
    \bfkappa_N^{(\rm over)}(z,w)
    :=
    \bfkappa_N^{(\rm over)}(z,w|0,0)
    =
    G_N^{(\rm over)}(z,w)-G_N^{(\rm over)}(w,z),
\end{equation}
where 
\[
G_N^{(\rm over)}(z,w)
:=
\sqrt{\pi}\sum_{k=0}^{N-1}\sum_{\ell=0}^{k}
\frac{(k+2)(\ell+\frac{3}{2})}{\Gamma(k+\frac{7}{2})\Gamma(\ell+3)}
z^{2k+1}w^{2\ell}.
\]
We also have 
\[
Z_{N-1}^{(\rm over)}(0):=(N-1)!\prod_{k=0}^{N-2}\frac{(2k+5)(2k+2)!}{2^{2k+3}}
=
\frac{2}{3}\frac{2N+1}{N}Z_{N}^{(\rm g)}. 
\]
Then by \eqref{Conditional Expectation Diagonal Overlap Real} with $a=0$ 
one can easily verify the $a=0$ case in Theorem~\ref{Thm_CEO}. 
We define 
\begin{equation} 
\widehat{\bfkappa}_N^{(\rm over)}(z,w) := (zw)^4\bfkappa_N^{(\rm over)}(z,w), \qquad  \widehat{G}_N^{(\rm over)}(z,w) := (zw)^4G_N^{(\rm over)}(z,w). 
\end{equation}
For the overlap kernels, we have the analogous results. 

\begin{prop}\label{prop_ODE_OriginODE}
 We have 
 \begin{align}
    \begin{split}
     &\quad
\Bigl[z\partial_z^2-(2z^2+2)\partial_z-2z\Bigr]
\widehat{\bfkappa}_N^{(\rm over)}(z,w)\\
&=
\frac{1}{2}\sum_{k=0}^{2N-1}\frac{(2zw)^{k+4}}{(k+1)!}
-4\sqrt{\pi}\frac{(N+1)(N+2)}{\Gamma(N+\frac{5}{2})}z^{2N+4}
\sum_{\ell=0}^{N-1}\frac{\ell+\frac{3}{2}}{(\ell+2)!}w^{2\ell+4}
-3\sqrt{\pi}z^{3}\sum_{k=0}^{N-1}\frac{k+2}{\Gamma(k+\frac{7}{2})}w^{2k+5}. 
     \end{split}
 \end{align}
 Furthermore,  as $N\to\infty$, the skew-kernel $\bfkappa_N^{(\rm over)}(z,w)$ converges to $\varkappa_{\,\rm o}^{(\mathrm{over})}(z,w)$, uniformly for $z,w$ in compact subsets of $\C$, where 
    \begin{align}
        \begin{split}
        \label{overlap kernel origin limit}
            \varkappa_{\,\rm o}^{(\mathrm{over})}(z,w)
            &=
            \frac{1}{2} \frac{(z-w)(1+zw-e^{2zw})}{(zw)^4}
             +
             \frac{\sqrt{\pi}}{4}\frac{e^{z^2+w^2}(2z^2-1)(2w^2-1)\erf(z-w)}{(zw)^{4}}
            \\
            &\quad
            +
            \frac{\sqrt{\pi}}{4}
            \frac{(z^2-1)(2w^2-1)e^{w^2}\erf(w)-(w^2-1)(2z^2-1)e^{z^2}\erf(z)}{(zw)^4}.
        \end{split}
    \end{align}
\end{prop}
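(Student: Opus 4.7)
The plan is to follow the strategy used for Proposition~\ref{prop_pre_originkernel_Limit}, since the proposed differential operator $z\partial_z^2-(2z^2+2)\partial_z-2z$ is identical to the one in the pre-weight case. The idea is that the overlap kernel at the origin, like the pre-kernel at the origin, is built from a double sum whose ``upper triangular'' structure naturally produces a first-order integro-differential identity; iterating this once more turns the integral term into a second-order differential identity with explicit inhomogeneous pieces. The change from $(zw)^3$ to $(zw)^4$ in the prefactor and from $((2k+4)!!(2\ell+3)!!)^{-1}$ to $(\Gamma(k+\tfrac{7}{2})\Gamma(\ell+3))^{-1}$ in the coefficients of $\widehat{G}_N^{(\mathrm{over})}$ simply shifts indices in the intermediate algebra.

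For the finite-$N$ ODE, I would first compute $\partial_z\widehat{G}_N^{(\mathrm{over})}(z,w)$ by differentiating term by term, then perform the index shift $k\mapsto k{-}1$ (in both $k$ and $\ell$) as in the pre-case. The resulting expression should split into three kinds of contributions: a multiple of $z\,\widehat{G}_N^{(\mathrm{over})}(z,w)$, a $z^2$ times $\int_0^z t^{-2}\widehat{G}_N^{(\mathrm{over})}(t,w)\,dt$ term obtained from the restricted range $\ell\le k$, and two boundary single-sum corrections coming from the $\ell=k+1$ and $k=N-1$ boundary terms of the shifted sum. Performing the analogous manipulation on $\widehat{G}_N^{(\mathrm{over})}(w,z)$ and subtracting yields an integro-differential equation for $\widehat{\bfkappa}_N^{(\mathrm{over})}$; a second differentiation in $z$ eliminates the integral term and produces the stated ODE. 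The three inhomogeneous terms on the right-hand side match the three boundary contributions: the $(2zw)^{k+4}/(k+1)!$ sum arises from the diagonal correction $\ell=k$, the $z^{2N+4}$ term from the truncation $k=N-1$, and the $z^3 w^{2k+5}$ single sum from the leftover ``$\ell=0$'' row generated by the extra factor of $zw$ compared to the pre-case.

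For the large-$N$ limit, I would first establish uniform convergence of $\widehat{\bfkappa}_N^{(\mathrm{over})}$ on compact subsets of $\C^2$ using majorants from the convergent exponential series; the uniform convergence can be transferred to derivatives by Cauchy estimates on slightly larger compact sets. Taking $N\to\infty$ in the finite-$N$ ODE, the first inhomogeneous term converges to $4(zw)^3(e^{2zw}-1)$ by resumming, the second term decays to zero pointwise uniformly on compacta thanks to the Stirling estimate for $\Gamma(N+\tfrac{5}{2})^{-1}$, and the third term converges to a concrete entire function $-3\sqrt{\pi}\,z^3 h(w)$, which can be recognised as a linear combination of $e^{w^2}\erf(w)$ and elementary functions through standard series identities. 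Solving the resulting linear second-order ODE in $z$ (treating $w$ as a parameter) by variation of parameters, using the kernel of $z\partial_z^2-(2z^2+2)\partial_z-2z$ spanned by $e^{z^2}(2z^2-1)$ and the Dawson-type companion, and fixing the two integration constants from the skew-symmetry conditions $\widehat{\varkappa}_{\,\rm o}^{(\mathrm{over})}(w,w)=0$ and $\partial_z\widehat{\varkappa}_{\,\rm o}^{(\mathrm{over})}(z,w)|_{z=0}=0$ (both inherited in the limit from the polynomial structure of the finite-$N$ object) produces the claimed closed form, after dividing by $(zw)^4$.

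The main obstacle will be the bookkeeping of boundary contributions in Step~1: in the pre-case the prefactor is $(zw)^3$ and the extra $zw$ in the overlap weight generates the additional $z^3\sum_k\,\cdot\,w^{2k+5}$ single-sum inhomogeneous term that has no analogue in Proposition~\ref{prop_pre_originkernel_Limit}, and this term must be tracked consistently through both differentiations. The subtler analytic point is that, unlike the second inhomogeneous term, this third term persists in the $N\to\infty$ limit, so identifying it in closed form and matching it to the explicit combination of error-function expressions in $\varkappa_{\,\rm o}^{(\mathrm{over})}(z,w)$ is where the concrete special-function identities for $e^{w^2}\erf(w)$ enter in an essential way.
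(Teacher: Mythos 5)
Your proposal is correct and takes essentially the same approach as the paper: the paper's own proof is just the one-line remark that the argument follows the same strategy as Proposition~\ref{prop_pre_originkernel_Limit}, with all details omitted, and you have faithfully reconstructed those details, including the resummation $\tfrac12\sum_{k\ge 0}(2zw)^{k+4}/(k+1)!\to 4(zw)^3(e^{2zw}-1)$, the Stirling argument killing the truncation term, the persistence of the $z^3$-inhomogeneous single sum that is new to the overlap case, and the initial conditions $\widehat{\varkappa}_{\,\rm o}^{(\mathrm{over})}(w,w)=0$, $\partial_z\widehat{\varkappa}_{\,\rm o}^{(\mathrm{over})}(z,w)|_{z=0}=0$ inherited from the $(zw)^4$ prefactor and skew-symmetry. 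The paper also notes in passing that the limiting kernel can alternatively be obtained from $\varkappa_{\,\rm o}^{(\mathrm{pre})}$ via the Christoffel perturbation of Proposition~\ref{Prop_Christoffel pertubation}, a shortcut you did not use but which is not required.
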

It follows from the same strategy of the proof of Proposition~\ref{prop_pre_originkernel_Limit} and we omit the details. 
Note that \eqref{overlap kernel origin limit} is coincident with \eqref{Thm Overlap Bulk Kernel} for $\chi=0$.
Let us stress that \eqref{overlap kernel origin limit} can also be obtained using \eqref{Limit Pre Bulk Origin} and the Christoffel perturbation, Proposition~\ref{Prop_Christoffel pertubation}. This approach will be used for general real-valued $a$.

\section{Finite-$N$ analysis}
\label{S_Construction of skew-orthogonal polynomials}

In this section, we provide the proofs for finite-$N$ results.

\subsection{Relation between mean off-diagonal overlap and mean diagonal overlap}

In this subsection, we prove Lemma~\ref{Lem_D12 D11}. 

\begin{proof}[Proof of Lemma~\ref{Lem_D12 D11}]
The short proof is a simple matter of spelling out the dependence on all variables $z_1,\overline{z}_1,z_2,\overline{z}_2$ in \eqref{O11evk} at $k=2$, and applying the transposition 
$\widehat{T}$  to $D_{1,1}^{(N,2)}(z_1,z_2)$.
From the definition \eqref{O11evk} and from the result \eqref{O11kPf} for $k=2$ we have, respectively,
\begin{align*}
D_{1,1}^{(N,2)}(z_1,z_2)
&=
\frac{N(N-1)}{Z_N^{(\rm g)}}
(z_1-\overline{z}_1)(\overline{z}_1-z_1)
(z_2-\overline{z}_2)(\overline{z}_2-z_2)
(z_2-\overline{z}_1)(\overline{z}_2-z_1)
\\
&\quad\times
(1+(z_2-z_1)(\overline{z}_2-\overline{z}_1))
e^{-2z_1\overline{z}_1-2z_2\overline{z}_2}
\int_{\C^{N-2}}
\prod_{3\leq j<k \leq N}
|z_j-z_k|^2|z_j-\overline{z}_k|^2
\\
&\quad\times
\prod_{j=3}^N
|z_j-\overline{z}_j|^2
(z_2-z_j)
(\overline{z}_2-\overline{z}_j)
(z_2-\overline{z}_j)
(\overline{z}_2-z_j)
\\
&\quad\times
(z_j-\overline{z}_1)(\overline{z}_j-z_1)
(1+(z_j-z_1)(\overline{z}_j-\overline{z}_1))e^{-2|z_j|^2}
dA(z_j),
\\
&=
\frac{NZ_{N-1}^{(\rm over)}}{Z_N^{(\rm g)}}(z_1-\overline{z}_1)(\overline{z}_1-z_1)
e^{-2z_1\overline{z}_1}
\bfkappa_{N-1}^{(\rm over)}(z_2,\overline{z}_2|z_1,\overline{z}_1)(\overline{z}_2-z_2)\omega^{(\rm over)}(z_2,\overline{z}_2|z_1,\overline{z}_1).
\end{align*}
By noting that the normalization constant $Z_{N-1}^{(\rm over)}\equiv Z_{N-1}^{(\rm over)}(z_1,\overline{z}_1)$ associated with \eqref{OverlapWeight} depends on $z_1,\overline{z}_1$ by definition, and
by a comparison to \eqref{O12ev}, we obtain
\begin{align*}
\widehat{T}D_{1,1}^{(N,2)}(z_1,z_2)
&=
|z_1-\overline{z}_2|^2
(z_2-\overline{z}_2)
(z_1-\overline{z}_1)
(1-|z_1-z_2|^2)
e^{-2z_1\overline{z}_2-2z_2\overline{z}_1}
\\
&\quad\times
\frac{-N(N-1)}{Z_N^{(\rm g)}}
|z_1-\overline{z}_2|^2
\int_{\C^{N-2}}
\prod_{3\leq j<k\leq N}
|z_j-z_k|^2|z_j-\overline{z}_k|^2
\\
&\quad\times
\prod_{j=3}^N
|z_j-\overline{z}_j|^2
|z_j-\overline{z}_1|^2
|z_j-\overline{z}_2|^2
(
|z_j-z_1|^2|z_j-z_2|^2
+
(z_j-z_2)
(\overline{z}_j-\overline{z}_1)
)
e^{-2|z_j|^2}
dA(z_j)
\\
&=
|z_1-\overline{z}_2|^2
(1-|z_1-z_2|^2)
\frac{e^{2|z_1-z_2|^2}
D_{1,2}^{(N,2)}(z_1,z_2)}{(z_2-\overline{z}_2)(z_1-\overline{z}_1)} 
\end{align*}
on the one hand, and 
\begin{align*}
\widehat{T}D_{1,1}^{(N,2)}(z_1,z_2)
&=
-
\frac{NZ_{N-1}^{(\rm over)}(z_1,\overline{z}_2)}{Z_N^{(\rm g)}}(z_1-\overline{z}_2)^2
e^{-2z_1\overline{z}_2}
\bfkappa_{N-1}^{(\rm over)}(z_2,\overline{z}_1|z_1,\overline{z}_2)(\overline{z}_1-z_2)\omega^{(\rm over)}(z_2,\overline{z}_1|z_1,\overline{z}_2)
\end{align*}
on the other hand.
This completes the proof, after inserting the definition of the weight function \eqref{OverlapWeight} in the last line.
%, writing out the dependence of the weight function $\omega^{(\rm over)}(z,\overline{z}|x,\overline{x})$ in \eqref{OverlapWeight} on all variables in full:
\end{proof}

\subsection{Recurrence relation for planar SOPs and derivation of $q_k^{\rm (pre)}(z)$} \label{Subsection_recurrence}

In this section we first derive a general theorem that gives the coefficients of general SOPs when the underlying moment matrix is 
%Hermtian and 
tri-diagonal. It is based on showing certain recurrence relations amongst its coefficients. Because the pre-overlap weight function \eqref{PreOverlapWeight} satisfies this condition we can then apply this theorem to determine the $q_k^{\rm (pre)}(z)$ including their skew-norms.

We first define the moment matrix of a positive Borel measure on $\mathbb{C}$ and its real part as
\begin{equation}
\label{Moment}
m_{j,k}:=
%\langle z^j, z^k\rangle, 
\int_{\C}z^j \overline{z}^k d\mu(z),
\qquad \widehat{m}_{j,k}:=\re(m_{j,k}). 
\end{equation}
We denote the skew-moment matrix with respect to the corresponding skew-product \eqref{inner_product} as
\begin{equation}
\label{SkewMoment1}
g_{j,k}:=\langle z^j, z^k\rangle_s=\int_{\C}\left(z^j \overline{z}^k-z^k\overline{z}^j\right) (z-\overline{z})d\mu(z)
=2(\widehat{m}_{j+1,k}-\widehat{m}_{j,k+1}),\qquad G_k:=(g_{i,j})_{i,j=0}^{2k-1},
%=\int_{\C}z^j \overline{z}^k d\mu(z),\quad 
\end{equation}
and we define 
\begin{equation}
\label{SkewMoment2}
\Delta_{-1}:=1,\quad \Delta_k:=\Pf(G_{k+1}),\quad \mathcal{Z}_{k+1}:=\frac{1}{2}\frac{\Delta_k}{\Delta_{k-1}}. 
\end{equation}

\begin{thm}[\textbf{Recurrence relation of planar SOPs}] \label{thm_Recurrence}
We assume that the moment matrix \eqref{Moment} is tri-diagonal. Then, the monic SOPs associated with the corresponding skew-product \eqref{inner_product} are obtained as 
\begin{equation}
q_{2k}(z):= \sum_{j=0}^{2k}\alpha_{2k,j}z^j,\quad q_{2k+1}(z):=\sum_{j=0}^{2k+1}\beta_{2k+1,j}z^j,
\end{equation}
with $\alpha_{2k,2k}=1$, $\beta_{2k+1,2k+1}=1$ and $\beta_{2k+1,2k}=\beta_{2k+1,2k-1}=0$. 
The coefficients are determined by the following recurrence relationships: for $j=1,2,...,k$, 
\begin{align} 
\label{EvenCoefficients}
&\cZ_j\alpha_{2k,2j-1} = - \widehat{m}_{2j-1,2j}\alpha_{2k,2j}, 
\qquad \qquad \cZ_j\alpha_{2k,2j-2} = \widehat{m}_{2j,2j+1}\alpha_{2k,2j+1} + \widehat{m}_{2j,2j}\alpha_{2k,2j},  
\\
& \label{OddCoefficients}
\cZ_j\beta_{2k+1,2j-1}  = -\widehat{m}_{2j-1,2j}\beta_{2k+1,2j}, 
\qquad  \cZ_j\beta_{2k+1,2j-2} =\widehat{m}_{2j,2j+1}\beta_{2k+1,2j+1}+\widehat{m}_{2j,2j}\beta_{2k+1,2j}, 
\end{align}
where $\mathcal{Z}_j$ is given by \eqref{SkewMoment2} and we use the convention $\alpha_{2k,2k+1}=0$. 
\end{thm}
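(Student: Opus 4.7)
The plan is to unpack the defining skew-orthogonality conditions and exploit the bandedness of the skew-moment matrix inherited from the tri-diagonal assumption to reduce them to the two-term recurrences in the statement.

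First I would exploit the tri-diagonal hypothesis to describe the skew-moment matrix explicitly. The assumption $\hat m_{j,k}=0$ for $|j-k|>1$, combined with $g_{j,k}=2(\hat m_{j+1,k}-\hat m_{j,k+1})$, forces $g_{j,k}=0$ unless $|j-k|\le 2$, with the surviving upper-triangular entries being $g_{j,j+1}=2\hat m_{j+1,j+1}$ and $g_{j,j+2}=2\hat m_{j+1,j+2}$. Consequently, for each index $m$, the defining equation $\sum_i \alpha_{2k,i}\,g_{i,m}=0$ involves at most the four coefficients $\alpha_{2k,m\pm 1}$ and $\alpha_{2k,m\pm 2}$. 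Splitting by the parity of $m$, choosing $m=2j-1$ (respectively $m=2j-2$), and isolating the lower-index coefficient on the left gives exactly the two-term relations in~\eqref{EvenCoefficients}, modulo identification of the pivot. Iterating downwards from the boundary data $\alpha_{2k,2k}=1$ and $\alpha_{2k,2k+1}=0$ then determines all remaining coefficients uniquely.

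The same procedure with $q_{2k+1}$ replacing $q_{2k}$ produces~\eqref{OddCoefficients}. For the odd polynomial there is a two-dimensional gauge freedom: the skew-orthogonality conditions only pin $q_{2k+1}$ down modulo $q_{2k}$ (one parameter), and the trivial self-pairing $\langle z^{2k+1},z^{2k+1}\rangle_s=0$ gives a second free direction. These are consumed by the normalisations $\beta_{2k+1,2k}=\beta_{2k+1,2k-1}=0$, and the tri-diagonal bandedness is precisely what makes this double gauge-fixing compatible with the recurrence at the top, since it decouples the two highest-index unknowns from the rest of the chain.

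The remaining and most delicate step, which I expect to be the main technical obstacle, is identifying the pivot appearing naturally in the recurrence with $\mathcal{Z}_j=\tfrac{1}{2}\Delta_j/\Delta_{j-1}$. I would view $G_{k+1}$ as a block tri-diagonal skew-symmetric matrix with $2\times 2$ blocks indexed by the consecutive pairs $\{2j,2j+1\}$ and carry out a block Schur-complement (LDU-type) reduction; the bandedness makes this finite, and expresses $\Pf(G_{k+1})$ as a product of scalar pivots, each of which is precisely the factor multiplying the left-hand side of the recurrence at level $j$. Comparing with the telescoping identity $\Pf(G_{k+1})=\prod_{\ell=0}^{k}\Delta_\ell/\Delta_{\ell-1}$ then identifies the pivot at level $j$ with $\Delta_j/\Delta_{j-1}$, up to the factor of $\tfrac{1}{2}$ absorbed into $\mathcal{Z}_j$. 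Carrying out this block elimination cleanly enough to pin down the correct normalisation, while simultaneously verifying that the double gauge choice for $q_{2k+1}$ is consistent with the full (not only top-of-chain) recurrence, will be the main work.
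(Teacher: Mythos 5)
Your approach is genuinely different from the paper's: the authors prove Theorem~\ref{thm_Recurrence} by induction on $k$, assuming the recurrences hold for $q_{2k},q_{2k+1}$ and verifying skew-orthogonality for $q_{2k+2},q_{2k+3}$ (the technical heart being that certain auxiliary quantities $u_{k,j}$ vanish identically). You instead propose a direct a priori derivation: unpack the conditions $\langle q_{2k},z^m\rangle_s=0$ using the bandwidth-$2$ structure of $g$, then eliminate. This is closer to the LDU argument the paper itself uses for the \emph{orthogonal} polynomials in Appendix~\ref{Appendix_Planar OP}, and would give a cleaner conceptual proof.

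However, there is a concrete gap in the step you describe as routine. The raw bandedness-unpacked condition at level $m$,
\[
\hat m_{m-1,m}\,\alpha_{2k,m-2}+\hat m_{m,m}\,\alpha_{2k,m-1}-\hat m_{m+1,m+1}\,\alpha_{2k,m+1}-\hat m_{m+1,m+2}\,\alpha_{2k,m+2}=0,
\]
is genuinely a \emph{four}-term relation whose coefficients are entries of $\hat m$, not $\mathcal Z_j$. "Isolating the lower-index coefficient" does not by itself produce \eqref{EvenCoefficients}; the two-/three-term form with pivot $\mathcal Z_j$ only emerges after you substitute the already-derived relations at levels $m-1,m-2$ into the level-$m$ relation and observe that the $\alpha_{m-1}$-contribution cancels identically, leaving a pivot satisfying the recursion $\mathcal Z_{j+1}=\hat m_{2j+1,2j+1}-\hat m_{2j-1,2j}\hat m_{2j,2j+1}/\mathcal Z_{j}$, which you then tie to $\Delta_j/\Delta_{j-1}$ via the Pfaffian expansion of the banded matrix (the paper's relation \eqref{Overlap_Delta2}). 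That cancellation, and matching the resulting pivot with the Pfaffian ratio, is the content of the theorem, and your sketch acknowledges but does not carry it out. You should also be aware that the raw condition at $m=2j-1$ yields (after elimination) the relation for $\alpha_{2k,2j-2}$ rather than for $\alpha_{2k,2j-1}$, so the pairing of $m$-parity with the two lines of \eqref{EvenCoefficients} is off by one in your description. Finally, the two-parameter gauge count for $q_{2k+1}$ is right, but note that $\beta_{2k+1,2k-1}=0$ is not an independent normalisation: it follows automatically from $\beta_{2k+1,2k}=0$ via the first relation of \eqref{OddCoefficients} at $j=k$, so only monicity and one gauge choice are actually imposed.
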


We mention that the initial condition $\beta_{2k+1,2k}=0$ in Theorem~\ref{thm_Recurrence} uniquely determines the odd SOPs, see \cite[Lemma 2.2]{AEP22}.

\begin{proof}[Proof of Theorem~\ref{thm_Recurrence}]
We shall use an induction argument.  
Suppose that the coefficients of $q_{2k}$ satisfy \eqref{EvenCoefficients} and those of $q_{2k+1}$ satisfy \eqref{OddCoefficients}.  

Note that for $1\leq j\leq k-1$, we have 
\begin{align*}
\langle q_{2j-2},q_{2k+2} \rangle_s & =\langle z^{2j-2}+\cdots, \alpha_{2k+2,2k-1}z^{2k-1}+\cdots\rangle_s,
\\
\langle q_{2j-1},q_{2k+2} \rangle_s & = \langle z^{2j-1}+\cdots, \alpha_{2k+2,2k-1}z^{2k-1}+\cdots\rangle_s.
\end{align*}
By the induction assumption and replacing $2k$ with $2k+2$, 
it follows that for $1\leq j\leq k-1$, 
\[
\cZ_j\alpha_{2k+2,2j-1}=-\widehat{m}_{2j-1,2j}\alpha_{2k+2,2j},\qquad 
\cZ_j\alpha_{2k+2,2j-2}=\widehat{m}_{2j,2j+1}\alpha_{2k+2,2j+1}+\widehat{m}_{2j,2j}\alpha_{2k+2,2j}. 
\]
Therefore, it suffices to consider $q_{2j-2},q_{2j-1}$ for $j=k,k+1$. 
First, note that 
\begin{align*}
\langle q_{2k-2},q_{2k+2}\rangle_s 
&= 2\alpha_{2k+2,2k}\widehat{m}_{2k-1,2k} +2\alpha_{2k+2,2k-1} \widehat{m}_{2k-1,2k-1}
\\
&\quad -2\alpha_{2k+2,2k-3} \widehat{m}_{2k-2,2k-2}-2\alpha_{2k+2,2k-4}\widehat{m}_{2k-2,2k-3} +\Bigl\langle \sum_{j=0}^{2k-3}\alpha_{2k-2,j}z^j, \sum_{j=0}^{2k-1}\alpha_{2k+2,j}z^j \Bigr\rangle_s.
\end{align*}
To show \eqref{EvenCoefficients}, we need to check 
\begin{align*}
\cZ_{k}\alpha_{2k+2,2k-1} &= \frac12 \bigl\langle \sum_{j=0}^{2k-3}\alpha_{2k-2,j}z^j, \sum_{j=0}^{2k-1}\alpha_{2k+2,j}z^j \bigr\rangle_s
\\
&\quad + \alpha_{2k+2,2k-1}\widehat{m}_{2k-1,2k-1} -\alpha_{2k+2,2k-3}\widehat{m}_{2k-2,2k-2} - \alpha_{2k+2,2k-4}\widehat{m}_{2k-2,2k-3}. 
\end{align*} 
On the other hand, since 
\begin{align*}
&\quad \alpha_{2k+2,2k-1}\widehat{m}_{2k-1,2k-1} -\alpha_{2k+2,2k-3}\widehat{m}_{2k-2,2k-2} -\alpha_{2k+2,2k-4}\widehat{m}_{2k-2,2k-3}
\\
&= \alpha_{2k+2,2k-1}\Bigl(\widehat{m}_{2k-1,2k-1}-\frac{\widehat{m}_{2k-3,2k-2}\widehat{m}_{2k-2,2k-1}}{\cZ_{k-1}} \Bigr)= \cZ_{k}\alpha_{2k+2,2k-1},
\end{align*}
it is enough to show that 
\[
\Bigl\langle \sum_{j=0}^{2k-3}\alpha_{2k-2,j}z^j, \sum_{j=0}^{2k-1}\alpha_{2k+2,j}z^j \Bigr\rangle_s=0. 
\]
For $j=3,4,\dots, 2k-2$, we write 
\begin{align}
\begin{split}
u_{k,j}& =
\alpha_{2k+2,2k+2-j}\widehat{m}_{2k+1-j,2k+2-j}+\alpha_{2k+2,2k+1-j}\widehat{m}_{2k+1-j,2k+1-j}
\\
&\quad  -\alpha_{2k+2,2k-j-1}\widehat{m}_{2k-j,2k-j}-\alpha_{2k+2,2k-j-2}\widehat{m}_{2k-j,2k-j-1}.
\end{split}
\end{align}
Then we have  
\begin{align*}
\frac12 \bigl\langle \sum_{j=0}^{2k-3}\alpha_{2k-2,j}z^j, \sum_{j=0}^{2k-1}\alpha_{2k+2,j}z^j \bigr\rangle_s
&= \sum_{j=3}^{2k-2}\alpha_{2k-2,2k-j}u_{k,j}  +\alpha_{2k-2,0}(\alpha_{2k+2,2}\widehat{m}_{1,2}+\alpha_{2k+2,1}\widehat{m}_{1,1})
\\
&\quad +\alpha_{2k-2,1}(\alpha_{2k+2,3}\widehat{m}_{2,3}+\alpha_{2k+2,2}\widehat{m}_{2,2}-\alpha_{2k+2,0}\widehat{m}_{1,1}).
\end{align*} 
By the assumption of the induction, 
\begin{align*}
u_{k,2j}&= \alpha_{2k+2,2k-2j+2}\widehat{m}_{2k-2j+1,2k-2j+2}+\alpha_{2k+2,2k-2j+1}\widehat{m}_{2k-2j+1,2k-2j+1}
\\
&\quad +\widehat{m}_{2k-2j,2k-2j}\frac{\widehat{m}_{2k-2j-1,2k-2j}\alpha_{2k+2,2k-2j}}{\cZ_{k-j}}
\\
& \quad  -\widehat{m}_{2k-2j-1,2k-2j}\frac{\widehat{m}_{2k-2j+1,2k-2j}\alpha_{2k+2,2k-2j+1}+\widehat{m}_{2k-2j,2k-2j}\alpha_{2k+2,2k-2j}}{\cZ_{k-j}}
\\
&= \widehat{m}_{2k-2j+1,2k-2j+2}\alpha_{2k+2,2k+2-2j}+\cZ_{k-j+1}\alpha_{2k+2,2k-2j+1}=0.
\end{align*}
Similarly, one can show that $u_{k,2j+1}=0$. 
Hence, we have shown that $u_{k,j}=0$ for all $3\leq j\leq 2k-2$, which leads to the desired conclusion. 

Next, to show $\langle q_{2k-1},q_{2k+2}\rangle _s=0$, first observe that 
\[
\langle q_{2k-1},q_{2k+2}\rangle_s
= 2\alpha_{2k+2,2k+1}\widehat{m}_{2k,2k+1}+2\alpha_{2k+2,2k}\widehat{m}_{2k,2k}-2\cZ_{k}\alpha_{2k+2,2k-2}
+
\Bigl\langle 
\sum_{j=0}^{2k-4}\beta_{2k-1,j}z^j,\sum_{j=0}^{2k-2}\alpha_{2k+2,j}z^{j}
\Bigr\rangle_s.
\]
Therefore, it suffices to show that 
\[
\Bigl\langle 
\sum_{j=0}^{2k-4}\beta_{2k-1,j}z^j,\sum_{j=0}^{2k-2}\alpha_{2k+2,j}z^{j}
\Bigr\rangle_s=0.
\]
Similar to the even coefficient case, this follows from \( u_{k,j}=0 \) for \( 4 \leq j \leq 2k-2 \), which has already been made. Hence, we have verified the induction argument for \( q_{2j-2}, q_{2j-1} \) with \( j=k \).

The remaining task is to verify the induction argument for \( q_{2j-2}, q_{2j-1} \) with \( j=k+1 \). For \eqref{EvenCoefficients}, we can reuse computations since the highest degree term of \( q_{2k+2} \) is \( z^{2k+2} \). Hence, there is no additional task to show the induction argument for \eqref{EvenCoefficients}. For \eqref{OddCoefficients}, similarly, we can reuse the computations made so far. 
This completes the proof.
\end{proof}

We now turn our attention to Theorem~\ref{Thm_OverlapCoe}.  
For $a\in\R$, specifying to the pre-overlap weight \eqref{PreOverlapWeight}, we have
\begin{equation}
\label{Overlap_Moment}
m_{i,j}^{(\mathrm{pre})}:=\int_{\C}z^i\overline{z}^j\omega^{(\mathrm{pre})}(z) \, dA(z) =
\begin{cases} 
-a\,j!\,2^{-j-1} & \text{if $i=j-1$}, 
\smallskip 
\\
(2a^2+j+3) \, j!\, 2^{-j-2}& \text{if $i=j$}, 
\smallskip 
\\
-a \, (j+1)!\,2^{-j-2} & \text{if $i=j+1$}, 
\smallskip 
\\
0 & \text{if $i<j-1$ or $i>j+1$}, 
\end{cases}
\end{equation}
which is obviously real and Hermitian.
We also write the corresponding skew-moment matrix as 
\begin{equation}
\label{Overlap_Skew_Moment}
g_{i,j}^{(\mathrm{pre})}:=\int_{\C}(z^i\overline{z}^j-\overline{z}^jz^j)(z-\overline{z})\omega^{(\mathrm{pre})}(z)dA(z),
\qquad 
G_k^{(\mathrm{pre})}:=(g_{i,j}^{(\mathrm{pre})})_{i,j=0}^{2k-1}.
\end{equation}
We define the coefficients $\{\alpha_{2n,j}^{(\mathrm{pre})}\}_{j=0}^{2n}$ and $\{\widetilde{\beta}_{2n+1,j}^{(\mathrm{pre})}\}_{j=0}^{2n+1}$ associated with \eqref{Overlap_Skew_Moment} obtained from Theorem~\ref{thm_Recurrence}.

In our setting, it is convenient to consider $q_{2k+1}(z)\mapsto q_{2k+1}(z)+aq_{2k}(z)$ rather than the original odd SOPs, which does not change the value of the skew-inner products by \cite[Lemma 2.2]{AEP22}. Then by definition, 
\begin{equation}
\beta_{2n+1,j}^{(\mathrm{pre})}=\widetilde{\beta}_{2n+1,j}^{(\mathrm{pre})}+a\alpha_{2n,j}^{(\mathrm{pre})}, 
\end{equation}
for $j=0,1,\dots,2n$. We also have 
$$
\beta_{2n+1,2n+1}^{(\mathrm{pre})}=1, \qquad \beta_{2n+1,2n}^{(\mathrm{pre})}=a. 
$$ 
By  Theorem~\ref{thm_Recurrence}, we have the following recurrence relationships:
\begin{align}
\begin{split} 
&\label{RS_RC} \cZ_j\alpha_{2n,2j-1}^{(\mathrm{pre})}  =-m_{2j-1,2j}^{(\mathrm{pre})}\alpha_{2n,2j}^{(\mathrm{pre})},\qquad  \qquad 
\cZ_j\alpha_{2n,2j-2}^{(\mathrm{pre})} = m_{2j,2j+1}^{(\mathrm{pre})}\alpha_{2n,2j+1}^{(\mathrm{pre})}+m_{2j,2j}^{(\mathrm{pre})}\alpha_{2n,2j}^{(\mathrm{pre})},
\\
&\cZ_j\beta_{2n+1,2j-1}^{(\mathrm{pre})}=-m_{2j-1,2j}^{(\mathrm{pre})}\beta_{2n+1,2j}^{(\mathrm{pre})},\qquad 
\cZ_j\beta_{2n+1,2j-2}^{(\mathrm{pre})}=m_{2j,2j+1}^{(\mathrm{pre})}\beta_{2n+1,2j+1}^{(\mathrm{pre})}+m_{2j,2j}^{(\mathrm{pre})}\beta_{2n+1,2j}^{(\mathrm{pre})}.
\end{split}
\end{align}

We are now ready to prove Theorem~\ref{Thm_OverlapCoe}. 
\begin{proof}[Proof of Theorem~\ref{Thm_OverlapCoe}]
By using \eqref{Overlap_Moment} and \eqref{Overlap_Skew_Moment}, we define
\begin{equation}
\label{Overlap_Delta}
\Delta_{-1}^{(\mathrm{pre})}:=1,\qquad 
\Delta_{k}^{(\mathrm{pre})}:=\Pf(G_{k+1}^{(\mathrm{pre})}),\qquad 
\cZ_{k+1}^{(\mathrm{pre})}:=\frac{1}{2}\frac{\Delta_{k}^{(\mathrm{pre})}}{\Delta_{k-1}^{(\mathrm{pre})}}.
\end{equation}
Then, by \eqref{Overlap_Moment} and \eqref{Overlap_Skew_Moment}, we have
\begin{equation}
\cZ_{k+1}^{(\mathrm{pre})}=m_{2k+1,2k+1}^{(\mathrm{pre})}-\frac{m_{2k-1,2k}^{(\mathrm{pre})}m_{2k,2k+1}^{(\mathrm{pre})}}{\cZ_{k}^{(\mathrm{pre})}},
\end{equation}
or equivalently, 
\begin{equation}
\label{Overlap_Delta2}
\Delta_{k}^{(\mathrm{pre})}=2\Delta_{k-1}^{(\mathrm{pre})}m_{2k+1,2k+1}^{(\mathrm{pre})}-4m_{2k-1,2k}^{(\mathrm{pre})}m_{2k,2k+1}^{(\mathrm{pre})}\Delta_{k-2}^{(\mathrm{pre})}.
\end{equation}
Then, by induction, the unique solution of \eqref{Overlap_Delta2} is given by  
\[
\Delta_k^{(\mathrm{pre})}=(k+1)!\frac{2^{k+1}\prod_{i=1}^{k+1}\Gamma(2i)}{2^{(k+1)(k+2)}}f_{k+1}(a^2),
\]
where $f_k$ is given by \eqref{fsum}.
Hence, we obtain 
\begin{equation}
\label{cZ pre}
    \cZ_k^{(\rm pre)}
    =
    \frac{\Gamma(2k+1)}{2^{2k+1}}
    \frac{f_k(a^2)}{f_{k-1}(a^2)}.
\end{equation}
Notice also that the skew-norm associated with \eqref{PreOverlapWeight} is given by \eqref{Skew norm PreWeight}.

Let us fix $k\in\mathbb{N}$, and assume that for some $j\in\mathbb{Z}_{\geq 0}$,
\begin{align}
\begin{split}
\label{Induction even}
\alpha_{2k,2(k-j)}^{(\rm pre)}
&= \frac{1}{2^j}\frac{f_{k-j}(a^2)}{f_k(a^2)} \sum_{\ell=0}^{j}  \bigg( \frac{(2k+3)!!\,(\ell+1)(2a^2)^{\ell}}{(2\ell+2k-2j+3)!!}- \frac{(2k+1)!!\,\ell(2a^2)^{\ell}}{(2\ell+2k-2j+1)!!}
\bigg),  
\end{split}\\
\begin{split}
\label{Induction odd}
\alpha_{2k,2(k-j)+1}^{(\rm pre)}
&= \frac{a}{2^{j-1} }\frac{f_{k-j}(a^2)}{f_k(a^2)}  
\sum_{\ell=0}^{j-1} \bigg( \frac{(2k+3)!!\,(\ell+1)(2a^2)^{\ell}}{(2\ell+2k-2j+5)!!}-\frac{(2k+1)!!\,\ell(2a^2)^{\ell}}{(2\ell+2k-2j+3)!!}
\bigg).
\end{split}
\end{align}
It is enough to show that $\alpha_{2k,2(k-j-1)}^{(\rm pre)}$ and $\alpha_{2k,2(k-j-1)+1}^{(\rm pre)}$ also satisfy \eqref{Induction even} and \eqref{Induction odd}, respectively.  
By Theorem~\ref{thm_Recurrence}, we have
\[
\cZ_{k-j}^{(\rm pre)}\alpha_{2k,2(k-j-1)+1}^{(\rm pre)}
=
a\frac{\Gamma(2(k-j)+1)}{2^{2(k-j)+1}}
\alpha_{2k,2(k-j)}.
\]
By \eqref{Induction even} and \eqref{cZ pre}, one can observe that $\alpha_{2k,2(k-j-1)+1}^{(\rm pre)}$ also satisfies \eqref{Induction odd}. 
Next, by Theorem~\ref{thm_Recurrence}, \eqref{Induction even}, and \eqref{Induction odd}, we have 
\[
\cZ_{k-j}^{(\rm pre)}\alpha_{2k,2(k-j-1)}^{(\rm pre)}
=
-a\frac{(2k-2j+1)!}{2^{2k-2j+2}}\alpha_{2k,2(k-j)+1}^{(\rm pre)}
+
\frac{(2k-2j)! \, (2a^2+2k-2j+3)}{2^{2k-2j+2}}\alpha_{2k,2(k-j)}^{(\rm pre)}. 
\]
Hence, we have 
\[
\alpha_{2k,2(k-j-1)}^{(\rm pre)}
=
\frac{1}{2^{j+1}}\frac{f_{k-j-1}(a^2)}{f_k(a^2)}
(\Psi_1-\Psi_2),
\]
where
\begin{align*}
\Psi_1
&:=
(2a^2+2k-2j+3)\Bigl( 
\sum_{\ell=0}^{j}\frac{(2k+3)!!(\ell+1)(2a^2)^{\ell}}{(2\ell+2k-2j+3)!!}
-
\sum_{\ell=0}^{j}\frac{(2k+1)!!\ell(2a^2)^{\ell}}{(2\ell+2k-2j+1)!!}
\Bigr),
\\
\Psi_2
&:=
2a^2(2k-2j+1)\Bigl( 
\sum_{\ell=0}^{j-1}\frac{(2k+3)!!(\ell+1)(2a^2)^{\ell}}{(2\ell+2k-2j+5)!!}
-
\sum_{\ell=0}^{j-1}\frac{(2k+1)!!\ell(2a^2)^{\ell}}{(2\ell+2k-2j+3)!!}
\Bigr).
\end{align*}
Note that 
\begin{align*}
\Psi_1&=\sum_{\ell=0}^{j+1} \bigg(  \frac{(2k+3)!!(\ell+1)(2a^2)^{\ell}}{(2\ell+2k-2j+1)!!} -\frac{(2k+1)!!\ell(2a^2)^{\ell}}{(2\ell+2k-2j-1)!!} \bigg)
-\sum_{\ell=1}^{j+1}\bigg( \frac{(2k+3)!!(2a^2)^{\ell}}{(2\ell+2k-2j+1)!!}-\frac{(2k+1)!!(2a^2)^{\ell}}{(2\ell+2k-2j-1)!!} \bigg)
\\
&\quad +(2k-2j+3) \sum_{\ell=1}^{j}\bigg( \frac{(2k+3)!!(\ell+1)(2a^2)^{\ell}}{(2\ell+2k-2j+3)!!}-\frac{(2k+1)!!\ell(2a^2)^{\ell}}{(2\ell+2k-2j+1)!!} \bigg).
\end{align*} 
Then, after some straightforward simplifications, we obtain  
\[
\Psi_1-\Psi_2
=
\sum_{\ell=0}^{j+1}\frac{(2k+3)!!(\ell+1)(2a^2)^{\ell}}{(2\ell+2k-2j+1)!!}-\sum_{\ell=0}^{j+1}\frac{(2k+1)!!\ell (2a^2)^{\ell}}{(2\ell+2k-2j-1)!!}, 
\]
which shows that $\alpha_{2k,2(k-j-1)}^{(\rm pre)}$ also satisfies \eqref{Induction even}. 
A similar argument for \eqref{Coe Beta Even} and \eqref{Coe Beta Odd} can be made, and we leave it to the interested reader to verify this.
\end{proof}

\subsection{Differential equation for the pre-overlap kernel}

In this subsection, we prove Theorem~\ref{Thm_ODE_PreKernel}. 
Let us write 
\begin{equation}
\cL_{k}(z,a)
:= \Bigl(2(z-a)^2-1\Bigr)\bfkappa_{k}^{(\rm g)}(z,a) + 2(z-a)e_{2k-1}(2za) -(z-a)\frac{2^{k+1}z^{2k}}{(2k-1)!!}e_{k-1}(a^2),
\label{Lfunc}
\end{equation}
where $\bfkappa_k^{ (\rm g) }$ and $e_k$ are given by \eqref{GinSEKernel} and \eqref{exponential_sum}. 
We first rewrite the SOPs in the previous subsection in terms of $\cL_k$. 

\begin{lem}
\label{RewriteSOP}
We have 
\begin{align}
\label{qodd3}
(z-a)^3q_{2k+1}^{(\mathrm{pre})}(z) &= (z-a)^2z^{2k+2} + a\frac{(2k+1)!!}{2^{k+2}f_k(a^2)} \cL_{k+1}(z,a),
\\
\label{qeven3}
(z-a)^3q_{2k}^{(\mathrm{pre})}(z) &= a(z-a)^2\frac{e_{k+1}(a^2)}{f_k(a^2)}z^{2k+2} + \frac{(2k+3)!!}{2^{k+3}f_k(a^2)}\cL_{k+2}(z,a) - a^2\frac{(2k+1)!!}{2^{k+2}f_k(a^2)}\cL_{k+1}(z,a).
\end{align} 
\end{lem}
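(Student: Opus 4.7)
The plan is to perform a direct verification by substituting the explicit SOP coefficients from Theorem~\ref{Thm_OverlapCoe} into the left-hand side and rearranging the resulting double sums to match the right-hand side expressed in terms of the Gaussian skew-kernel $\bfkappa_k^{(\mathrm{g})}$ and the truncated exponentials. Since both identities are equalities of polynomials of bounded degree, it suffices to match coefficients.

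First I would treat the odd identity \eqref{qodd3}. Rearranging it gives
\[
a\,\frac{(2k+1)!!}{2^{k+2}f_k(a^2)}\,\mathcal{L}_{k+1}(z,a)=(z-a)^2\Bigl[(z-a)\,q_{2k+1}^{(\mathrm{pre})}(z)-z^{2k+2}\Bigr],
\]
so one must check that the right-hand side is divisible by $(z-a)^2$ and matches $\mathcal{L}_{k+1}$ after clearing constants. Using the skew-symmetry of $\bfkappa_k^{(\mathrm{g})}$, we have $\bfkappa_{k+1}^{(\mathrm{g})}(a,a)=0$, and a short computation with \eqref{GinSEKernel} shows $\partial_z\bfkappa_{k+1}^{(\mathrm{g})}(z,a)|_{z=a}$ cancels against the remaining terms in $\mathcal{L}_{k+1}$ so that $\mathcal{L}_{k+1}(z,a)$ indeed vanishes to order two at $z=a$. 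This will be checked by expanding the three ingredients of $\mathcal{L}_{k+1}$ in \eqref{Lfunc} and using
\[
\frac{(\sqrt{2}z)^{2k+1}}{(2k+1)!!}\frac{(\sqrt{2}a)^{2\ell}}{(2\ell)!!}=\frac{2^{k+1}z^{2k+1}}{(2k+1)!!}\frac{a^{2\ell}}{\ell!},
\]
together with $e_{2k+1}(2za)=\sum_{j=0}^{2k+1}(2za)^j/j!$, to reorganise $\mathcal{L}_{k+1}(z,a)$ as a polynomial in $(z-a)$ with explicit coefficients in $a$.

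Next I would match coefficients. Expanding $(z-a)^2[(z-a)q_{2k+1}^{(\mathrm{pre})}(z)-z^{2k+2}]$ in powers of $z$ using \eqref{Coe Beta Even}--\eqref{Coe Beta Odd}, one obtains finite double sums of the shape $\sum_{\ell,j}(\ell-j)\frac{(2k+1)!!}{(2\ell+3)!!}\frac{(2a^2)^\ell}{(2a^2)^j}\frac{2^{j}f_j(a^2)}{f_k(a^2)}$. The key step is to telescope these sums by applying
\[
f_j(a^2)=(j+1)e_j(a^2)-a^2 e_{j-1}(a^2),
\]
from \eqref{fsum}, after which the $f_j(a^2)$ collapses and one recognises the structure $\frac{2^{k+1}z^{2k+1}}{(2k+1)!!}\,e_k(a^2)$ characteristic of $\bfkappa_{k+1}^{(\mathrm{g})}(z,a)$, plus a contribution identifiable as $e_{2k+1}(2za)$. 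Matching the overall factor $a\,(2k+1)!!/(2^{k+2}f_k(a^2))$ completes \eqref{qodd3}.

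For the even identity \eqref{qeven3}, two routes are available. The first is a parallel direct verification using \eqref{Coe Alpha Even}--\eqref{Coe Alpha Odd}. The second, which I would prefer, uses the recurrence relationships \eqref{RS_RC} between the even and odd coefficients: multiplying the expression for $q_{2k}^{(\mathrm{pre})}(z)$ by $(z-a)^3$ and invoking the identities already derived for $(z-a)^3 q_{2k+1}^{(\mathrm{pre})}$ and $(z-a)^3 q_{2k-1}^{(\mathrm{pre})}$, the even case follows by combining two consecutive odd identities, which accounts for the appearance of both $\mathcal{L}_{k+2}$ and $\mathcal{L}_{k+1}$ in \eqref{qeven3}. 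Concretely, one uses the fact that a shift $q_{2k+1}\mapsto q_{2k+1}+a\,q_{2k}$ leaves skew-inner products invariant (as noted after Theorem~\ref{thm_Recurrence}), so the even polynomial can be written as a linear combination of two odd polynomials modulo the factor $(z-a)^3$, and the claim reduces to the odd case.

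The main obstacle will be the bookkeeping of the nested double sums in \eqref{Coe Alpha Even}--\eqref{Coe Beta Odd} and the identification of the three separate pieces making up $\mathcal{L}_k$ in \eqref{Lfunc}; in particular, the cancellation of the leading $z^{2k+3}$ term in $\mathcal{L}_{k+1}$ is essential for the degrees to match, and verifying this cancellation together with the correct subleading coefficients requires careful use of \eqref{fsum} and the identity $(2k+1)!!=(2k+1)(2k-1)!!$ at each step.
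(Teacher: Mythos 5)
Your plan for \eqref{qodd3} — substituting the coefficients from Theorem~\ref{Thm_OverlapCoe}, checking that $\mathcal{L}_{k+1}(z,a)$ vanishes to order two at $z=a$ via \eqref{RGinK}, and reorganising the resulting double sums into the three pieces of \eqref{Lfunc} with the help of $f_j(a^2)=(j+1)e_j(a^2)-a^2e_{j-1}(a^2)$ — is essentially the paper's method. (In fact the paper writes out only \eqref{qeven3} in detail and states \eqref{qodd3} follows the same way, but the mechanics are identical: geometric-series rearrangements of the multi-sums followed by recognition of the $\bfkappa^{(\mathrm{g})}$ structure.)

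The gap is in your preferred route for \eqref{qeven3}. You claim $q_{2k}^{(\mathrm{pre})}$ can be obtained by combining the odd identities for $(z-a)^3q_{2k+1}^{(\mathrm{pre})}$ and $(z-a)^3q_{2k-1}^{(\mathrm{pre})}$, and that this accounts for $\mathcal{L}_{k+2}$ and $\mathcal{L}_{k+1}$. Neither part holds. First, by \eqref{qodd3}, $q_{2k+1}^{(\mathrm{pre})}$ and $q_{2k-1}^{(\mathrm{pre})}$ bring in $\mathcal{L}_{k+1}$ and $\mathcal{L}_{k}$ respectively — never $\mathcal{L}_{k+2}$; producing $\mathcal{L}_{k+2}$ would require $q_{2k+3}^{(\mathrm{pre})}$, which has degree $2k+3$ and cannot be combined with $q_{2k+1}^{(\mathrm{pre})}$ to yield a polynomial of degree $2k$. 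Second, the shift invariance $q_{2k+1}\mapsto q_{2k+1}+a\,q_{2k}$ does not give an independent formula for $q_{2k}$ in terms of odd SOPs: it relates $q_{2k}$ to the difference of two odd polynomials that differ precisely by $a\,q_{2k}$, which is circular. The recurrences \eqref{RS_RC} are likewise relations among the coefficients of a single $q_{2n}$ or $q_{2n+1}$ in the internal degree index $j$, not relations expressing $q_{2k}$ through $q_{2k\pm1}$. Your ``first route'' — a parallel direct verification from \eqref{Coe Alpha Even}--\eqref{Coe Alpha Odd} — is the one that works and is what the paper actually carries out, splitting $2^k(z+a)(z-a)^3 f_k(a^2)q_{2k}^{(\mathrm{pre})}(z)$ into $(2k+3)!!\mathcal{M}_1+(2k+1)!!\mathcal{M}_2$ and rewriting each $\mathcal{M}_i$ in terms of $\bfkappa_{k+1}^{(\mathrm{g})}(z,a)$ and truncated exponentials.
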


\begin{proof}
We present only the proof of \eqref{qeven3}, as that of \eqref{qodd3} follows in a similar manner with minor modifications.
First, note that 
\begin{align*}
\begin{split}
&\quad f_k(a^2)q_{2k}^{(\mathrm{pre})}(z)
\\
&= \sum_{\ell=0}^{k}\sum_{p=0}^{\ell}
\frac{(2k+3)!!}{(2\ell+3)!!}\frac{2^{\ell-k}a^{2\ell+2p}}{p!}
\bigg( \sum_{j=p}^{\ell}(\ell+1-j)(j+1-p)\left(\frac{z}{a}\right)^{2j} + \sum_{j=p}^{\ell-1}(\ell-j)(j+1-p)\left(\frac{z}{a}\right)^{2j+1} \bigg) 
\\
&\quad -\sum_{\ell=0}^{k}\sum_{p=0}^{\ell}
\frac{(2k+1)!!}{(2\ell+1)!!}\frac{2^{\ell-k}a^{2\ell+2p}}{p!}
\bigg( \sum_{j=p}^{\ell}(\ell-j)(j+1-p)\left(\frac{z}{a}\right)^{2j} + \sum_{j=p}^{\ell-1}(\ell-1-j)(j+1-p)\left(\frac{z}{a}\right)^{2j+1}
\bigg).
\end{split}
\end{align*} 
By elementary geometric series identities, the above can be rewritten as 
\begin{equation}
\label{q2k}
2^k(z+a)(z-a)^3f_k(a^2)q_{2k}^{(\mathrm{pre})}(z)
=
(2k+3)!!\mathcal{M}_1
+
(2k+1)!!\mathcal{M}_2,
\end{equation}
where 
\begin{align*}
\mathcal{M}_1
=&
\sum_{\ell=0}^{k}\frac{2^{\ell}a^{2\ell+3}}{(2\ell+3)!!}
\sum_{p=0}^{\ell}\frac{(\ell+2-p)z^{2p+1}}{p!}
-
\sum_{\ell=0}^{k}\frac{2^{\ell}z^{2\ell+3}}{(2\ell+3)!!}
\sum_{p=0}^{\ell}\frac{(\ell+2-p)a^{2p+1}}{p!}
\\
&
-\sum_{\ell=0}^{k}\frac{2^{\ell}a^{2\ell+4}}{(2\ell+3)!!}
\sum_{p=0}^{\ell}\frac{(\ell+1-p)z^{2p}}{p!}
+
\sum_{\ell=0}^{k}\frac{2^{\ell}z^{2\ell+4}}{(2\ell+3)!!}
\sum_{p=0}^{\ell}\frac{(\ell+1-p)a^{2p}}{p!},
\\
\mathcal{M}_2
=&
-
\sum_{\ell=0}^{k}\frac{2^{\ell}a^{2\ell+3}}{(2\ell+1)!!}
\sum_{p=0}^{\ell}\frac{(\ell+1-p)z^{2p+1}}{p!}
+
\sum_{\ell=0}^{k}\frac{2^{\ell}z^{2\ell+1}}{(2\ell+1)!!}
\sum_{p=0}^{\ell}\frac{(\ell+1-p)a^{2p+3}}{p!}
\\
&
+
\sum_{\ell=0}^{k}\frac{2^{\ell}a^{2\ell+4}}{(2\ell+1)!!}
\sum_{p=0}^{\ell}\frac{(\ell-p)z^{2p}}{p!}
-
\sum_{\ell=0}^{k}\frac{2^{\ell}z^{2\ell+2}}{(2\ell+1)!!}
\sum_{p=0}^{\ell}\frac{(\ell-p)a^{2p+2}}{p!}.
\end{align*}
After some straightforward manipulations, $\mathcal{M}_1$ can be expressed in terms of $ \bfkappa_{k+1}^{(\rm g )}(z,a)$ as 
\begin{align*}
\mathcal{M}_1
&= \frac{(2(z-a)^2-1)(z+a)}{8} \bfkappa_{k+1}^{(\rm g )}(z,a) -(z-a)z^2\frac{2^{k}a^{2k+3}}{(2k+3)!!} e_k(z^2) -(z-a)a^2\frac{2^{k}z^{2k+3}}{(2k+3)!!} e_k(a^2) 
\\
& \quad + \frac{(z+a)}{2} \Big(\frac{2^{k}a^{2k+3}}{(2k+3)!!} e_k(z^2) -\frac{2^{k}z^{2k+3}}{(2k+3)!!} e_k(a^2) \Big)  - \frac{(z-a)(z+a)}{4} \Big(\frac{2^{2k+2}(za)^{2k+2}}{(2k+3)!} - e_{2k+2}(2za) \Big).
\end{align*}
Similarly, we have 
\begin{align*}
\mathcal{M}_2
&= -\frac{a^2(2(z-a)^2-1)(z+a)}{4}\bfkappa_{k+1}^{(\rm g)}(z,a) - \frac{a^2(z-a)(z+a)}{2} e_{2k+1}(2za)
\\
& \quad + (z-a)\frac{2^{k}a^{2k+5}}{(2k+1)!!} e_k(z^2) + (z-a)\frac{2^{k}z^{2k+3}a^2}{(2k+1)!!} e_k(a^2).
\end{align*}
Combining all of the above, we obtain \eqref{qeven3}. 
\end{proof}

By using \eqref{qeven3} and \eqref{qodd3}, we write 
\begin{align}
\begin{split}
\label{Skeq Odd Hat}
\widehat{q}^{\,(\mathrm{pre})}_{2k+1}(z)
&:= 2^{k+2} e^{-2za} (z-a)^3 f_{k}(a^2) q_{2k+1}^{\,(\mathrm{pre})}(z)
\\
&\,= \Big( 
2^{k+2}f_k(a^2)(z-a)^2z^{2k+2}+a(2k+1)!!\cL_{k+1}(z,a)
\Big)e^{-2za},
\end{split}
\\
\begin{split}
\label{Skeq Even Hat}
\widehat{q}^{\,(\mathrm{pre})}_{2k}(z)
&:= 2^{k+3} e^{-2za} (z-a)^3 f_{k}(a^2) q_{2k}^{\,(\mathrm{pre})}(z)
\\
&\, = \Big( 2^{k+3}ae_{k+1}(a^2)(z-a)^2z^{2k+2}+(2k+3)!!\cL_{k+2}(z,a)-2a^2(2k+1)!!\cL_{k+1}(z,a) \Big) e^{-2za}.
\end{split}
\end{align}
Let us also denote 
\begin{equation}
\label{ExpLk0}
\widehat{\cL}_{k}(z,a) := e^{-2za}\cL_{k}(z,a).
\end{equation}
Note that by \cite[Eq (10.29)]{BF24}, we have 
\begin{equation}
\label{RGinK}
\partial_z\widehat{\bfkappa}_{k}^{(\rm{g})}(z,a) = 2(z-a)\widehat{\bfkappa}_{k}^{(\rm{g})}(z,a) + 2e_{2k-1}(2za)e^{-2za} -2\frac{(2z^2)^k}{(2k-1)!!}e_{k-1}(a^2)e^{-2za}.
\end{equation}
By using \eqref{RGinK}, one can check that \eqref{ExpLk0} is equivalent to \eqref{ExpLk2}. 
We further define 
\[
\widehat{\bfkappa}_N^{(\mathrm{pre})}(z,w)
:=\widehat{G}_N^{\,(\mathrm{pre})}(z,w)-\widehat{G}_N^{\,(\mathrm{pre})}(w,z),
\]
where we set 
\begin{align*} 
\widehat{G}_{N}^{\,(\mathrm{pre})}(z,w)
&:= ((z-a)(w-a))^3 e^{-2(z+w)a} \sum_{k=0}^{N-1}\frac{q_{2k+1}^{(\mathrm{pre})}(z)q_{2k}^{\,(\mathrm{pre})}(w)}{r_k^{(\mathrm{pre})}} = \frac{1}{8} \sum_{k=0}^{N-1} \frac{\widehat{q}_{2k+1}^{(\mathrm{pre})}(z) \widehat{q}_{2k}^{\,(\mathrm{pre})}(w)}{(2k+2)!f_{k+1}(a^2)f_{k}(a^2)}. 
\end{align*}

For the proof of Theorem~\ref{Thm_ODE_PreKernel}, we first prove the following lemma. Recall that $\mathfrak{D}_{z,a}$ is given by \eqref{OP_DG}. 

\begin{lem}\label{Lem_Part1}
We have  
\begin{align*}
\begin{split}
&\quad \frac{8\,  e^{2wa} }{(z-a)^3(w-a)^3} \mathfrak{D}_{z,a}\widehat{G}_N^{(\rm pre)}(z,w)
\\
&=\partial_z
\bigg[ e^{-2za} \sum_{k=0}^{N-1}\frac{2^{2k+5}q_{2k}^{(\rm pre)}(w)}{(2k+2)!f_{k+1}(a^2)} \Bigl( 4f_k(a^2)\frac{(2k+2)!e_{2k+2}(2za)}{(2a)^{2k+3}} + (2k+2)f_{k+1}(a^2)z^{2k+1} - 2f_k(a^2)z^{2k+3} \Bigr) \bigg] 
\end{split}
\end{align*}
and
\begin{align*}
\begin{split}
&\quad \frac{8\,  e^{2wa} }{(z-a)^3(w-a)^3} \mathfrak{D}_{z,a}\widehat{G}_N^{(\rm pre)}(w,z)
\\
&= \partial_z \bigg[
e^{-2za}
\sum_{k=0}^{N-1}\frac{2^{2k+4}q_{2k+1}^{(\rm pre)}(w)}{(2k+2)!f_{k+1}(a^2)}
\Bigl( 8\frac{(2k+2)!ae_{k+1}(a^2)}{(2a)^{2k+3}}e_{2k+2}(2za) - 4f_{k+1}(a^2)z^{2k+2}
\Bigr) \bigg].
    \end{split}
\end{align*}
\end{lem}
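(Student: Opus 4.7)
The plan is to evaluate $\mathfrak{D}_{z,a}\widehat{G}_N^{(\rm pre)}(z,w)$ term-by-term in the representation $\widehat{G}_N^{(\rm pre)}(z,w) = \frac{1}{8}\sum_{k=0}^{N-1} \widehat{q}_{2k+1}^{(\rm pre)}(z)\widehat{q}_{2k}^{(\rm pre)}(w)/\bigl((2k+2)!\,f_{k+1}(a^2)f_k(a^2)\bigr)$. Since $\mathfrak{D}_{z,a}$ acts only on $z$, the factor $\widehat{q}_{2k}^{(\rm pre)}(w)$ is inert and the computation reduces to evaluating $\mathfrak{D}_{z,a}\widehat{q}_{2k+1}^{(\rm pre)}(z)$ for each $k$. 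I would then use \eqref{Skeq Odd Hat} to split $\widehat{q}_{2k+1}^{(\rm pre)}(z)$ into a ``polynomial piece'' $2^{k+2}f_k(a^2)\,e^{-2za}(z-a)^2 z^{2k+2}$ and a ``kernel piece'' $a(2k+1)!!\,\widehat{\mathcal{L}}_{k+1}(z,a)$, and handle each separately.

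For the polynomial piece, a direct Leibniz-rule computation produces $e^{-2za}(z-a)^3$ times an explicit polynomial in $z$. For the kernel piece, I would substitute $\widehat{\mathcal{L}}_{k+1} = (z-a)^2\partial_z\bigl[\widehat{\bm\kappa}_{k+1}^{(\rm g)}(z,a)/(z-a)\bigr]$ from \eqref{ExpLk2} and use \eqref{RGinK} iteratively to rewrite $\partial_z\widehat{\mathcal{L}}_{k+1}$ and $\partial_z^2 \widehat{\mathcal{L}}_{k+1}$ in terms of $\widehat{\bm\kappa}_{k+1}^{(\rm g)}(z,a)$ together with the elementary functions $e^{-2za}e_{2k+1}(2za)$ and $e^{-2za}z^{2k+2}e_k(a^2)$. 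The key design feature of $\mathfrak{D}_{z,a}$ is that, after collecting terms, all coefficients of $\widehat{\bm\kappa}_{k+1}^{(\rm g)}$ vanish, leaving only elementary functions. The remaining elementary contribution can then be rearranged as $(z-a)^3\partial_z[\cdots]$ once one recognises $\partial_z[e^{-2za}e_{2k+2}(2za)] = -2a\,e^{-2za}(2za)^{2k+2}/(2k+2)!$ as the antiderivative providing the $e_{2k+2}(2za)$ summands in the stated output. Summing over $k$ and multiplying by the prefactor $8\,e^{2wa}/((z-a)^3(w-a)^3)$ yields the first identity. The second identity is obtained by the same argument with the roles of $\widehat{q}_{2k+1}^{(\rm pre)}$ and $\widehat{q}_{2k}^{(\rm pre)}$ swapped, using \eqref{Skeq Even Hat}; the extra $\mathcal{L}_{k+2}$ contribution to $\widehat{q}_{2k}^{(\rm pre)}$ accounts for the different constants that appear inside the output bracket.

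The hard part will be the cancellation of the $\widehat{\bm\kappa}_{k+1}^{(\rm g)}$ terms inside $\mathfrak{D}_{z,a}[\widehat{\mathcal{L}}_{k+1}]$. Since the polynomial piece of $\widehat{q}_{2k+1}^{(\rm pre)}$ contributes no $\widehat{\bm\kappa}^{(\rm g)}$ terms, the full cancellation must occur within the action of $\mathfrak{D}_{z,a}$ on $a\,\widehat{\mathcal{L}}_{k+1}$ alone. Verifying this amounts to checking that a specific polynomial combination of $(z-a)$, $a$, $(z-a)^2+2a(z-a)=z^2-a^2$, and related monomials, produced by the three summands $(z-a)\partial_z^2$, $-(2(z-a)^2+2)\partial_z$, $-2(z-a)$, is identically zero. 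Once this vanishing is confirmed, matching the residual elementary expression against the claimed right-hand side is a straightforward $k$-by-$k$ comparison, which I would verify by differentiating the proposed bracket expressions and checking the coefficient of each monomial $z^{2k+1}$, $z^{2k+3}$, and $z^j e_{2k+2}(2za)e^{-2za}$.
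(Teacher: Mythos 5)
Your proposal is correct and follows essentially the same route as the paper: decompose $\widehat{q}_{2k+1}^{(\rm pre)}$ and $\widehat{q}_{2k}^{(\rm pre)}$ via \eqref{Skeq Odd Hat}--\eqref{Skeq Even Hat} into a polynomial piece and an $\widehat{\mathcal{L}}$ piece, apply $\mathfrak{D}_{z,a}$, use the Gaussian kernel ODE \eqref{RGinK} to observe the cancellation of all $\widehat{\bfkappa}^{(\rm g)}$ terms (the coefficient polynomial $s(8s^4+16s^2+2)-(2s^2+2)(4s^3+2s)-2s(2s^2-1)$ with $s=z-a$ is identically zero, which you implicitly rely on and which indeed checks out), and then recognise the residual via \eqref{epoly identity}. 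The only difference is book-keeping: the paper packages the kernel-piece information as the integral identity \eqref{q odd integrate} and the first-order relation \eqref{q odd identity 1}, which sidesteps explicitly computing $\partial_z^2\widehat{\mathcal{L}}_{k+1}$, whereas you would differentiate \eqref{RGinK} twice directly.
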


\begin{proof}%[Proof of Lemma~\ref{Lem_Part1}]
%Let us denote $\widetilde{\mathfrak{D}}_{z,a}:=2^{-3}\mathfrak{D}_{z,a}$. 
First, note that by \eqref{ExpLk2} and \eqref{Skeq Odd Hat}, 
we have
\begin{equation}
\label{q odd integrate}
    a(2k+1)!!\frac{\widehat{\bfkappa}_{k+1}^{(\rm g)}(z,a)}{z-a}
    =  \int_a^z\Big(
\frac{\widehat{q}_{2k+1}^{\,(\rm pre)}(t)}{(t-a)^2} - 2^{k+2}f_k(a^2)t^{2k+2}e^{-2ta}
    \Big)\,dt.
\end{equation}
By differentiating \eqref{Skeq Odd Hat} with respect to $z$ and by some computations using \eqref{q odd integrate}, 
we have
\begin{align}
\begin{split}
\label{q odd identity 1}
   \partial_z\widehat{q}_{2k+1}^{\,(\rm pre)}(z)
   &=
   2(z-a)\widehat{q}_{2k+1}^{\,(\rm pre)}(z)
   +
   4(z-a)^2\int_a^z\Bigl( \frac{\widehat{q}_{2k+1}^{\,(\rm pre)}(t)}{(t-a)^2} -
2^{k+2}f_k(a^2)t^{2k+2}e^{-2ta} \Bigr)\,dt
    \\
    &\quad
    +(z-a)^2\Big( 2^{k+3}(k+1)f_{k+1}(a^2)z^{2k+1}e^{-2za} - 2^{k+3}f_k(a^2)z^{2k+3}e^{-2za}  \Big).
\end{split}    
\end{align}
Note also that 
\begin{equation}
\label{epoly identity}
%4\cdot 2^{k+2}f_k(a^2)z^{2k+2}e^{-2za}
%=
%-2^{k+4}f_k(a^2)\frac{(2k+1)!}{(2a)^{2k+3}}
%\partial_z\bigl(e^{-2za}e_{2k+2}(2za)\bigr). 
z^{2k+2}e^{-2za}
=
-\frac{(2k+2)!}{(2a)^{2k+3}}
\partial_z\bigl(e^{-2za}e_{2k+2}(2za)\bigr). 
\end{equation}
By \eqref{Skeq Odd Hat}, \eqref{q odd identity 1}, and \eqref{epoly identity}, we have 
\begin{align}
\begin{split}
\label{Tilde D Odd q}
&\quad
\mathfrak{D}_{z,a}\widehat{q}_{2k+1}^{\,(\rm pre)}(z)
\\
&=
(z-a)^3\partial_z
\Bigl(
2^{k+4}f_k(a^2)\frac{(2k+2)!}{(2a)^{2k+3}}e_{2k+2}(2za)
+2^{k+3}(k+1)f_{k+1}(a^2)z^{2k+1}
-2^{k+3}f_k(a^2)z^{2k+3}
\Bigr)e^{-2za}.
\end{split}
\end{align}
Hence, we obtain the first assertion of Lemma~\ref{Lem_Part1}. 

Next, we show the second assertion of the lemma. 
First, note that by \eqref{Skeq Even Hat} and \eqref{ExpLk2}, 
we have 
\begin{equation}
\label{q even integrate 1}
       (2k+3)!!\frac{\widehat{\bfkappa}_{k+2}^{(\rm g)}(z,a)}{z-a}
        -
        2a^2(2k+1)!!\frac{\widehat{\bfkappa}_{k+1}^{(\rm g)}(z,a)}{z-a}
        =        \int_a^z
        \Big( \frac{\widehat{q}_{2k}^{\,(\rm pre)}(t)}{(t-a)^2}-2^{k+3}a e_{k+1}(a^2)t^{2k+2}e^{-2ta}
        \Big)\, dt.
\end{equation}
By differentiating \eqref{Skeq Even Hat} with respect to $z$ and by some computations using \eqref{q even integrate 1}, 
we have 
\begin{align}
    \begin{split}\label{q even hat identity 1}
       \partial_z\widehat{q}_{2k}^{\,(\rm pre)}(z)
       &=2(z-a)\widehat{q}_{2k}^{\,(\rm pre)}(z)
       +
       4(z-a)^2\int_a^{z}\Big(
\frac{\widehat{q}_{2k}^{\,(\rm pre)}(t)}{(t-a)^2}
       -
       2^{k+3}ae_{k+1}(a^2)t^{2k+2}e^{-2ta}
       \Bigr)\, dt
       \\
       &\quad
       -2^{k+4}(z-a)^2f_{k+1}(a^2)z^{2k+2}e^{-2za}.
    \end{split}
\end{align}
By \eqref{Skeq Even Hat}, \eqref{q even hat identity 1}, and \eqref{epoly identity}, we have 
\begin{equation}
\label{Tilde D Even q}
\mathfrak{D}_{z,a}\widehat{q}_{2k}^{\,(\rm pre)}(z)
    =
  (z-a)^3\partial_z
  \Bigl(
2^{k+5}ae_{k+1}(a^2)\frac{(2k+2)!}{(2a)^{2k+3}}e_{2k+2}(2za)
-
2^{k+4}f_{k+1}(a^2)z^{2k+2}
  \Bigr)e^{-2za}. 
\end{equation}
This completes the proof. 
\end{proof}

We now prove Theorem~\ref{Thm_ODE_PreKernel}. 

\begin{proof}[Proof of Theorem~\ref{Thm_ODE_PreKernel}]

We claim that 
\begin{align}
    \begin{split} \label{Claim Lem Part 3}
\frac{e^{2(z+w)a}}{(z-a)^3} \mathfrak{D}_{z,a}\widehat{\bfkappa}_N^{(\rm pre)}(z,w)
&=  4(w-a)^3e^{2zw}Q(2N,2zw) 
\\
&\quad - \frac{2^{2N+1}a(w-a)^2(zw)^{2N}}{(2N)!f_N(a^2)}
\Big((2N+1-2za)e_N(a^2)+2f_N(a^2)\Big)
\\
&\quad -2^{N}
\frac{(2N+3)!!\cL_{N+1}(w,a)-2a^2(2N+1)!!\cL_{N}(w,a)}{2(2N)!f_N(a^2)}z^{2N}
\\
&\quad
+2^{N}a \frac{(2N+1)!!\cL_{N+1}(w,a)-2a^2(2N-1)!!\cL_{N}(w,a)}{(2N)!f_N(a^2)}z^{2N+1}. 
\end{split}
\end{align}
Then Theorem~\ref{Thm_ODE_PreKernel} immediately follows after straightforward transformations.   

For the proof of \eqref{Claim Lem Part 3}, we use Lemma~\ref{Lem_Part1} and consider the decomposition  
\begin{align}  \label{Claim Lem Part 30}
\begin{split}
\mathfrak{D}_{z,a}\widehat{\bfkappa}_N^{(\rm pre)}(z,w)
&= \mathfrak{D}_{z,a}\widehat{G}_N^{(\rm pre)}(z,w) - \mathfrak{D}_{z,a}\widehat{G}_N^{(\rm pre)}(w,z) 
\\
&= \frac{1}{8}(z-a)^3e^{-2wa}
\Big[  \mathrm{I}+\mathrm{II}+\mathrm{III} - \bigl(\mathrm{IV}+\mathrm{V}-\mathrm{VI}\bigr) \Big],
\end{split}
\end{align}
where 
\begin{align*}
\mathrm{I}
&:= \partial_z \bigg[ \sum_{k=0}^{N-1}\frac{2^{2k+5}ae_{k+1}(a^2)(w-a)^2w^{2k+2}}{(2k+1)!f_{k+1}(a^2)f_{k}(a^2)} \Bigl( f_{k+1}(a^2)-\frac{f_k(a^2)}{k+1}z^{2}\Bigr)z^{2k+1}e^{-2za}  \bigg],
\\
\mathrm{II}
&:= \partial_z \bigg[ \sum_{k=0}^{N-1}\frac{2^{k+4} ((2k+3)!!\cL_{k+2}(w,a)-2a^2(2k+1)!!\cL_{k+1}(w,a)) }{ (2a)^{2k+3}\, f_{k+1}(a^2) } e_{2k+2}(2az)e^{-2az} \bigg],
\\
\mathrm{III}
&:=\partial_z\bigg[\sum_{k=0}^{N-1}\frac{2^{k+2} ((2k+3)!!\cL_{k+2}(w,a)-2a^2(2k+1)!!\cL_{k+1}(w,a) )}{(2k+1)!f_{k+1}(a^2)f_{k}(a^2)}
\Bigl(f_{k+1}(a^2)z^{2k+1}-\frac{f_k(a^2)}{k+1}z^{2k+3}
\Bigr)e^{-2za} \bigg],
\\
\mathrm{IV} &:= \partial_z \bigg[ \sum_{k=0}^{N-1}
\frac{2^{2k+6}(w-a)^2}{(2k+2)!}(zw)^{2k+2}e^{-2za}
\bigg],
\\
%%%%%%%%%%%%
\mathrm{V}
&:= \partial_z \bigg[ \sum_{k=0}^{N-1}\frac{ e_{k+1}(a^2)(2k+1)!!\cL_{k+1}(w,a)}{2^{k-2}\,a^{2k+1}\, f_{k+1}(a^2)f_k(a^2)}
 e_{2k+2}(2za)e^{-2za} \bigg],
\\
%%%%%%%%%%%%
\mathrm{VI} &:= \partial_z \bigg[ \sum_{k=0}^{N-1}\frac{2^{k+4}a\,f_{k+1}(a^2)(2k+1)!!\cL_{k+1}(w,a)}{(2k+2)!f_{k+1}(a^2)f_k(a^2)} z^{2k+2}e^{-2za} \bigg].
\end{align*}
%%%%%%%%%
By differentiating the expression for $\mathrm{I}$, we have 
\begin{align*}
e^{2za}\mathrm{I}
&= a(w-a)^2\sum_{k=0}^{N-1}\frac{2^{2k+5}z^{2k}w^{2k+2}}{(2k)!f_k(a^2)}e_{k+1}(a^2)
-a^2(w-a)^2\sum_{k=0}^{N-1}\frac{2^{2k+6}z^{2k+1}w^{2k+2}}{(2k+1)!}e_{k+1}(a^2)
+2^4a(w-a)^2
\\
& \quad -a(w-a)^2\sum_{k=0}^{N-1}\frac{2^{2k+4}(2k+1)z^{2k}w^{2k}}{(2k)!f_k(a^2)}e_k(a^2)
+a^2(w-a)^2\sum_{k=0}^{N-1}\frac{2^{2k+5}z^{2k+1}w^{2k}}{(2k)!f_k(a^2)}e_k(a^2)
\\
& \quad -a^2(w-a)^22^5z -a\frac{(w-a)^22^{2N+4}(2N+1)e_N(a^2)z^{2N}w^{2N}}{(2N)!f_N(a^2)} +a^2\frac{(w-a)^22^{2N+5}e_N(a^2)z^{2N+1}w^{2N}}{(2N)!f_N(a^2)}.
\end{align*}
Similarly, we have 
\begin{align*}
e^{2za}\mathrm{II}
&= a^2\sum_{k=0}^{N-1}\frac{2^{k+5}\cL_{k+1}(w,a)}{(2k+2)!!f_{k+1}(a^2)}z^{2k+2}
-\sum_{k=0}^{N-1}\frac{2^{k+3}(2k+1)\cL_{k+1}(w,a)}{(2k)!!f_{k}(a^2)}z^{2k}
\\
& \quad -\frac{2^{N+3}(2N+1)\cL_{N+1}(w,a)}{(2N)!!f_{N}(a^2)}z^{2N}
-2^{5}a(w-a)^2
\end{align*}
%%%%%%%%%
and
\begin{align*}
e^{2za} \mathrm{III}
&=
\mathrm{III}'
-\frac{2^{N+2}(2N+1) ((2N+1)!!\cL_{N+1}(w,a)-2a^2(2N-1)!!\cL_N(w,a))}{(2N)!f_N(a^2)}z^{2N}
\\
&\quad
+a\frac{2^{N+3}(2N+1)\cL_{N+1}(w,a)}{(2N)!!f_N(a^2)}z^{2N+1}
-2a^3\frac{2^{N+3}\cL_{N}(w,a)}{(2N)!!f_N(a^2)}z^{2N+1}
+2^5a^2(w-a)^2z
-2^4a(w-a)^2,
\end{align*}
where 
\begin{align*}
\mathrm{III}'&:=
\sum_{k=0}^{N-1}\frac{2^{k+2}(2k+1)((2k+3)\cL_{k+2}(w,a)-(2k+1)\cL_{k+1}(w,a) )}{(2k)!!f_k(a^2)}z^{2k}
\\
&
\quad
-a\sum_{k=0}^{N-1}\frac{2^{k+3} ((2k+3)\cL_{k+2}(w,a)-(2k+1)\cL_{k+1}(w,a))}{(2k)!!f_k(a^2)}z^{2k+1}
\\
&
\quad
-a^2\sum_{k=0}^{N-1}\frac{2^{k+3}(2k+1)\bigl(\cL_{k+1}(w,a)-\cL_{k}(w,a)\bigr)}{(2k)!!f_k(a^2)}z^{2k}
+a^3\sum_{k=0}^{N-1}\frac{2^{k+4}(\cL_{k+1}(w,a)-\cL_{k}(w,a))}{(2k)!!f_k(a^2)}z^{2k+1}. 
\end{align*}
%%%%%%%%%%%%%%%%%%%
Note the following two identities: 
\begin{align*}
\cL_{k+1}(w,a)-\cL_k(w,a)
&=
-\bigl(2(w-a)^2-1\bigr)\frac{2^{k+1}a^{2k+1}e_{k}(w^2)}{(2k+1)!!}
-a(w-a)\frac{2^{k+2}e_{k-1}(a^2)}{(2k+1)!!}w^{2k+1}
\\
&\quad
-\frac{2^{k+1}e_k(a^2)}{(2k+1)!!}w^{2k+1}
+(w-a)\frac{2^{k+1}e_k(a^2)}{(2k-1)!!}w^{2k},
\end{align*}
and
\begin{align*}
&\quad
(2k+3)\cL_{k+2}(w,a)-(2k+1)\cL_{k+1}(w,a)
\\
&=
2\cL_{k+1}(w,a)
-(2(w-a)^2-1)\frac{2^{k+2}a^{2k+3}e_{k+1}(w^2)}{(2k+1)!!}
-a(w-a)\frac{2^{k+3}e_k(a^2)w^{2k+3}}{(2k+1)!!}
\\
&\quad
-\frac{2^{k+2}e_{k+1}(a^2)}{(2k+1)!!}w^{2k+3}
+(w-a)(2k+3)\frac{2^{k+2}e_{k+1}(a^2)}{(2k+1)!!}w^{2k+2}.
\end{align*}
%%%%%%%%%%%%%
Using these identities and after long but straightforward computations, it follows that 
\begin{align*}
\mathrm{III}'&=
\sum_{k=0}^{N-1}\frac{2^{k+3}(2k+1)\cL_{k+1}(w,a)}{(2k)!!f_k(a^2)}z^{2k}
-a\sum_{k=0}^{N-1}\frac{2^{k+4}\cL_{k+1}(w,a)}{(2k)!!f_k(a^2)}z^{2k+1}
\\
&
\quad
-a(w-a)^2\sum_{k=0}^{N-1}\frac{2^{2k+5}z^{2k}w^{2k+2}}{(2k)!f_k(a^2)}e_{k+1}(a^2)
+a(w-a)^2\sum_{k=0}^{N-1}\frac{2^{2k+4}z^{2k}(2k+1)w^{2k}}{(2k)!f_k(a^2)}e_k(a^2)
\\
&
\quad
+w(w-a)^2 2^5\sum_{k=0}^{N-1}\frac{(2zw)^{2k}}{(2k)!}
+a^2(w-a)^2\sum_{k=0}^{N-1}\frac{2^{2k+6}z^{2k+1}w^{2k+2}}{(2k+1)!f_k(a^2)}e_{k+1}(a^2)
\\
&\quad
-a^2(w-a)^2\sum_{k=0}^{N-1}\frac{2^{2k+5}z^{2k+1}w^{2k}}{(2k)!f_k(a^2)}e_k(a^2)
-a(w-a)^22^5\sum_{k=0}^{N-1}\frac{(2zw)^{2k+1}}{(2k+1)!}.
\end{align*}   
Similarly, we have
\begin{align*}
e^{2za}\bigl(\mathrm{IV}-\mathrm{V}-\mathrm{VI}\bigr)
&=
a^2\sum_{k=0}^{N-1}\frac{2^{k+5}\cL_{k+1}(w,a)}{(2k+2)!!f_{k+1}(a^2)}z^{2k+2}
-a\sum_{k=0}^{N-1}\frac{2^{k+4}\cL_{k+1}(w,a)}{(2k)!!f_k(a^2)}z^{2k+1}
\\
&\quad
-w(w-a)^22^5\sum_{k=0}^{N-1}\frac{(2zw)^{2k+1}}{(2k+1)!}
+a(w-a)^22^5\sum_{k=0}^{N-1}\frac{(2zw)^{2k+2}}{(2k+2)!}.
\end{align*}
By combining all of the above identities with \eqref{Claim Lem Part 30}, we obtain the desired equation \eqref{Claim Lem Part 3}. 
This completes the proof. 
\end{proof}

\section{Large-$N$ asymptotic analysis}    
\label{S_Scling limits}

In this section, we prove Theorems~\ref{Thm_CEO} and ~\ref{Thm_scaling limits}. 

\subsection{Bulk and edge scaling limits for the mean diagonal overlap}\label{Subsection_CE scaling limits}

\begin{proof}[Proof of Theorem~\ref{Thm_CEO}]
By \eqref{Conditional Expectation Diagonal Overlap Real}, for $a\in\R$, one can write
\begin{equation}
\label{widehat D11 N1a}
    \widehat{D}_{1,1}^{(N,1)}(a)
= \bigl(\partial_x\widehat{\bfkappa}_{N}^{(\rm g)}(x,a)\bigr|_{x=a} \bigr)^{-1} \frac{N!2^{N-2}}{(2N)!} \lim_{u\to a}
\frac{\widehat{q}_{2N-2}^{\,(\rm pre)}(u)}{(u-a)^3}.
\end{equation}
For $p \in [-1,1]$, we set 
$$
z=\sqrt{N}p+\zeta, \qquad u=\sqrt{N}p+\xi,\qquad a=\sqrt{N}p+\chi, 
$$
where $\zeta\in\C$ and $\xi,\chi\in\R$. 

We first show the bulk case in Theorem~\ref{Thm_CEO}, i.e. $p\in (-1,1).$
By using \eqref{ExpLk0} and \cite[Eq.(2.6)]{ABK22}, there exists a small positive $\epsilon>0$ such that for a sufficiently large $N$,  
\begin{equation}
\label{Hat L function Bulk}
\widehat{\cL}_N(z,a) = \widehat{\cL}_{\rm b}(\zeta,\chi)(1+O(e^{-\epsilon N})),
\end{equation}
where 
\begin{equation}
\label{Hat cL bulk}
\widehat{\cL}_{\rm b}(\zeta,\chi) = (2(\zeta-\chi)^2-1)\sqrt{\pi}e^{(\zeta-\chi)^2}\erf(\zeta-\chi) + 2(\zeta-\chi). 
\end{equation}
On the other hand, by \eqref{fsum}, we have 
\begin{equation}
\label{exp fsum bulk}
    e^{-a^2}f_N(a^2)=N(1-p^2)(1+O(e^{-\epsilon N})),
\end{equation}
see e.g. \cite[Proposition 2.4]{BF24}.
By combining \eqref{Hat L function Bulk} with \eqref{exp fsum bulk} and Stirling's formula, we have 
\begin{equation}
    \label{Difference Lfunction}
(2N+3)!!\widehat{\cL}_{N+2}(z,a)
-
2a^2(2N+1)!!\widehat{\cL}_{N+1}(z,a)
=
(1-p^2)
\widehat{\cL}_{\rm b}(\zeta,\chi)2^{N+5/2}N^{N+2}e^{-N}(1+O(N^{-1/2})).
\end{equation}
Similarly, by combining \eqref{qeven3} with \eqref{Hat L function Bulk}, \eqref{exp fsum bulk}, \eqref{Difference Lfunction}, and Stirling's formula, we have 
\begin{equation}
\label{Tilde q even bulk}
    \widehat{q}_{2N}^{\,(\rm pre)}(z)
    =
    (1-p^2)
    2^{N+5/2}N^{N+2}e^{-N}
\widehat{\cL}_{\rm b}(\zeta,\chi)(1+O(N^{-1/2})).
\end{equation}
It also follows from \cite[Proposition 3.1]{BE23} that 
\[
\bigl(\partial_x
\widehat{\bfkappa}_{N}^{(\rm g)}(x,a)\bigr|_{x=a} \bigr)^{-1}
=
\frac{1}{2}+O(e^{-\epsilon N}). 
\]
Therefore, by \eqref{widehat D11 N1a}, \eqref{exp fsum bulk}, \eqref{Tilde q even bulk}, Stirling's formula, and Taylor expansion for \eqref{Hat cL bulk}, as $N\to\infty$, 
we have 
\[
\bigl(\partial_x\widehat{\bfkappa}_{N}^{(\rm g)}(x,a)\bigr|_{x=a} \bigr)^{-1}e^{-2a^2}
\frac{N Z_{N-1}^{(\rm over)}(a)}{Z_{N}^{(\rm g)}}
=
\frac{2}{3}N(1-p^2)(1+o(1)),
\]
uniformly for $\chi$ in a compact subset of $\R$. 
This completes the proof of the bulk case.

\medskip 

Next, we complete the proof of the edge case in Theorem~\ref{Thm_CEO}. Due to the symmetry, it suffices to consider the case $p=1$.
Recall that by \cite[Eq.(8.8.9)]{NIST}, the incomplete gamma function satisfies the asymptotic behaviour 
\begin{equation}
\label{Incomplete Gamma Asymptotics}
Q(s+1,s+\sqrt{s}z)
=
\frac{1}{2}\erfc\bigl(\frac{z}{\sqrt{2}}\bigr)
+
\frac{e^{-\frac{z^2}{2}}}{\sqrt{2\pi}}
\frac{z^2+2}{3}
\frac{1}{\sqrt{s}}+O\Bigl(\frac{1}{s}\Bigr), \qquad
s\to\infty,
\end{equation}
uniformly for $z$ in a compact subset of $\C$.
From \cite[Proof of Theorem 2.1]{ABK22} or \cite[Proposition 3.1]{BE23}, as $N\to\infty$, we have 
\begin{equation}
\label{GinSE Edge 1}
\widehat{\bfkappa}_N^{(\rm g)}(z,a)
=
\widehat{\kappa}_{\rm e}^{(\rm g)}(\zeta,\chi)+O(N^{-1/2}),
\end{equation}
where 
\begin{align}
 \begin{split}
\label{GinSE Edge 2}
  \widehat{\kappa}_{\rm e}^{(\rm g)}(\zeta,\chi)   = \frac{e^{(\zeta-\chi)^2}}{\sqrt{2}} \int_{-\infty}^{0}e^{-2(\zeta-u)^2}\erfc(\sqrt{2}(\chi-u))
-e^{-2(\chi-u)^2}\erfc(\sqrt{2}(\zeta-u)) \, du.
   \end{split}
\end{align}
Note here that \eqref{GinSE Edge 2} satisfies 
\begin{equation}
\label{GinSE Edge ODE}
\partial_{\zeta}\widehat{\kappa}_{\rm e}^{(\rm g)}(\zeta,\chi)
=
2(\zeta-\chi)\widehat{\kappa}_{\rm e}^{(\rm g)}(\zeta,\chi)
+
\erfc(\zeta+\chi)-\frac{e^{(\zeta-\chi)^2-2\zeta^2}}{\sqrt{2}}\erfc(\sqrt{2}\chi).
\end{equation}

We write 
\begin{equation}
\label{def of difference between LN+1 LN}
(2N+1)\widehat{\cL}_{N+1}(w,a)-2a^2\widehat{\cL}_{N}(w,a)
=
\widehat{\mathcal{L}}_N^{\,(1)}(w,a)
+
\widehat{\mathcal{L}}_N^{\,(2)}(w,a),
\end{equation}
where 
\begin{align}
\label{def of LN1}
\widehat{\mathcal{L}}_N^{\,(1)}(w,a)&:=
2N(\widehat{\cL}_{N+1}(w,a)-\widehat{\cL}_{N}(w,a))
-4\chi\sqrt{N}\widehat{\cL}_{N}(w,a),
\\
\label{def of LN2}
\widehat{\mathcal{L}}_N^{\,(2)}(w,a)&:=
\widehat{\cL}_{N+1}(w,a)-2\chi^2\widehat{\cL}_{N}(w,a). 
\end{align}
By \eqref{ExpLk2} and \eqref{GinSE Edge 1}, as $N\to\infty$, we have
\[
\widehat{\cL}_N(w,a)
=
(\eta-\chi)^2\partial_{\eta} \Big[\frac{\widehat{\kappa}_{\rm e}^{(\rm g)}(\eta,\chi)}{\eta-\chi}\Big]+O(N^{-1/2}).
\]
Furthermore, it follows from \eqref{GinSEKernel} and \eqref{ExpLk2} that
\begin{align}
\begin{split}
&\quad 2N(\widehat{\cL}_{N+1}(w,a)-\widehat{\cL}_{N}(w,a))
=
 2N(w-a)^2\partial_w \frac{\widehat{\bfkappa}_{N+1}^{(\rm g)}(w,a)-\widehat{\bfkappa}_{N}^{(\rm g)}(w,a)}{w-a}
\\
&=
\sqrt{N}
(\eta-\chi)^2
\partial_{\eta} \frac{e^{(\eta-\chi)^2}(e^{-2\eta^2}\erfc(\sqrt{2}\chi)-e^{-2\chi^2}\erfc(\sqrt{2}\eta))}{\sqrt{2}(\eta-\chi)}
(1+O(N^{-1/2})). 
\end{split}
\end{align}
Therefore, by \eqref{def of difference between LN+1 LN}, we have 
\begin{align}
    \begin{split}
\label{Aymptotic Edge 1}
&\quad
2^{-N} N^{-(N+1/2)}e^{N}(2N-1)!!\bigl( (2N+1)\widehat{\cL}_{N+1}(z,a)-2a^2\widehat{\cL}_N(z,a)\bigr)
\\
&=
(\zeta-\chi)^2\partial_{\zeta}\Bigl[
\frac{e^{(\zeta-\chi)^2}\bigl(e^{-2\zeta^2}\erfc(\sqrt{2}\chi)-e^{-2\chi^2}\erfc(\sqrt{2}\zeta)\bigr)-4\sqrt{2}\chi\widehat{\kappa}_{\rm e}^{(\rm g)}(\zeta,\chi)}{\zeta-\chi}
\Bigr]
(1+O(N^{-1/2})),
    \end{split}
\end{align}
as $N\to\infty$.

On the other hand, by \eqref{Incomplete Gamma Asymptotics}, we have
\begin{equation}
    \label{Aymptotic Edge 2}
    2^{N+2}ae^{a^2}Q(N+1,a^2)z^{2N}e^{-2za}
    =
    2^{N+1}N^{N+1/2}e^{-N}e^{-2\zeta^2+(\zeta-\chi)^2}\erfc(\sqrt{2}\chi)
    (1+O(N^{-1/2})).
\end{equation}
Let us choose $u=\sqrt{N}+\xi$ for $\xi\in\R$ so that $u\to a$, when $\xi\to\chi$.
Combining \eqref{Aymptotic Edge 1} and \eqref{Aymptotic Edge 2} with \eqref{Skeq Even Hat}, we have
\begin{align}
\begin{split}
\label{Hat q even edge limit}
\widehat{q}_{2N-2}^{\,(\rm pre)}(u)
\sim
2^{N}
N^{N+1/2}e^{-N}
\frac{4}{3}\Bigl(
\sqrt{\frac{2}{\pi}}e^{-4\chi^2}
+
2\chi e^{-2\chi^2}\erfc(\sqrt{2}\chi)-4\sqrt{2}\chi \erfc(2\chi)
\Bigr)
(\xi-\chi)^3.
\end{split}    
\end{align}
Hence, combining \eqref{widehat D11 N1a}, \eqref{GinSE Edge ODE}, \eqref{Aymptotic Edge 1}, \eqref{Aymptotic Edge 2}, \eqref{Hat q even edge limit}, together with Stirling's formula, the proof is complete. 
\end{proof}

\subsection{Bulk and edge scaling limits for the eigenvalue correlation functions of the overlap weight}

In this subsection, we prove Theorem~\ref{Thm_scaling limits}. 
According to \eqref{def of skew kernel overlap} and \eqref{WbfK}, we first define 
\begin{equation}
\label{wbfK over}
\widetilde{\bfkappa}_N^{\,(\rm over)}(z,w)
:=
e^{2a^2-2za-2wa}
(z-a)^4(w-a)^4
\bfkappa_N^{(\rm over)}(z,w). 
\end{equation}
Then by \eqref{def of Pfaff Struc}, we have 
\begin{align}
\begin{split}
\label{finite N overlap k point correlation function}
&\quad
\bfR_{N,k}^{(\rm over)}(z_1,z_2,\dots,z_k) = \Pf \bigg[\begin{pmatrix}
\bfkappa_{N}^{(\rm over)}(z_j,z_{\ell}) & \bfkappa_{N}^{(\rm over)}(z_j,\zbar_{\ell})  \\ 
\bfkappa_{N}^{(\rm over)}(\zbar_j,z_{\ell}) & \bfkappa_{N}^{(\rm over)}(\zbar_j,\zbar_{\ell})
\end{pmatrix}
\bigg]_{j,\ell=1}^k
\prod_{j=1}^k (\zbar_j-z_j) 
\omega^{({\rm over})}(z_j)
\\
&= \Pf \bigg[\begin{pmatrix}
\dfrac{\phi_a(z_j)\widetilde{\bfkappa}_{N}^{(\rm over)}(z_j,z_{\ell})\phi_a(z_{\ell})}{((\zeta_j-\chi)(\zeta_{\ell}-\chi))^4} &
\dfrac{\phi_a(z_j)\widetilde{\bfkappa}_{N}^{(\rm over)}(z_j,\zbar_{\ell})\phi_a^{-1}(z_{\ell})}{((\zeta_j-\chi)(\overline{\zeta}_{\ell}-\chi))^4}  
\smallskip 
\\ 
\dfrac{\phi_a^{-1}(z_j)\widetilde{\bfkappa}_{N}^{(\rm over)}(\zbar_j,z_{\ell})\phi_a(z_\ell)}{((\overline{\zeta}_j-\chi)(\zeta_{\ell}-\chi))^4} & 
\dfrac{\phi_a^{-1}(z_j)\widetilde{\bfkappa}_{N}^{(\rm over)}(\zbar_j,\zbar_{\ell})\phi_a^{-1}(z_{\ell})}{((\overline{\zeta}_j-\chi)(\overline{\zeta}_{\ell}-\chi))^4}
\end{pmatrix}
\bigg]_{j,\ell=1}^k 
\prod_{j=1}^k (\overline{\zeta}_j-\zeta_j) 
\omega_s(\zeta_j),
\end{split}
\end{align}
where $\omega_s(\zeta)$ is given by \eqref{Omega_B}, and $\phi_a(z)=e^{2ia\im(z)}$ for $z\in\C$ and $a\in\R$. 
For $j=1,\dots,k$, let 
$$
z_j=\sqrt{N}p+\zeta_j, \qquad p \in [-1,1]. 
$$
We also write 
\begin{equation}
z=\sqrt{N}p+\zeta, \qquad w=\sqrt{N}p+\eta, \qquad a=\sqrt{N}p+\chi
\end{equation}
for $\zeta,\eta \in \C$ and $\chi\in\R$. 
We first prove the bulk case of Theorem~\ref{Thm_scaling limits}.

\begin{proof}[Proof of Theorem~\ref{Thm_scaling limits}: bulk case]
By \eqref{SkewKernelOverlap}, let us denote 
\begin{equation}
    \label{Limit pre kernel bulk}
    \widetilde{\kappa}_{\rm b}^{(\rm pre)}(\zeta,\eta)
:=\lim_{N\to\infty}\widetilde{\bfkappa}_N^{(\rm pre)}(z,w).
\end{equation}
Then, by Theorem~\ref{Thm_ODE_PreKernel}, \eqref{Hat L function Bulk} and Stirling's formula, we have 
\begin{equation}
    \label{Bulk ODE pre}
\Bigl[\partial_\zeta^2-2\frac{(\zeta-\chi)^2+1}{\zeta-\chi}\partial_{\zeta}-2\Bigr]
\widetilde{\kappa}_{\rm b}^{(\rm pre)}(\zeta,\eta|\chi) = 4(\zeta-\chi)^2(\eta-\chi)^3e^{2(\zeta-\chi)(\eta-\chi)}.
\end{equation}
Since the differential operator is translation invariant under a shift with respect to the horizontal direction along the real line, and the inhomogeneous term in \eqref{Bulk ODE pre} is same as \eqref{Origin_Limit_ODE} shifted by $\chi$, we find that the unique solution for \eqref{Bulk ODE pre} with the initial conditions 
$$
\widetilde{\kappa}_{\rm b}^{(\rm pre)}(\zeta,\zeta)=0, \qquad \partial_{\zeta}\widetilde{\kappa}_{\rm b}^{(\rm pre)}(\zeta,\eta)|_{\zeta=\chi}=0
$$ 
is given by
\begin{equation} \label{tilde pre kernel bulk}
\widetilde{\kappa}_{\rm b}^{(\rm pre)}(\zeta,\eta) =  (\zeta-\chi)^3 (\eta-\chi)^3\,
\varkappa_{\,\rm o}^{(\mathrm{pre})}(\zeta-\chi,\eta-\chi). 
\end{equation}
Here, $\varkappa_{\,\rm o}^{(\mathrm{pre})}(z,w)$ is given by \eqref{Limit Pre Bulk Origin}.

For $z,w\in\C$ and $u,a\in\R$, according to Proposition~\ref{Prop_Christoffel pertubation}, we define 
\begin{equation}
\label{u over pre kernel}
\bfkappa_{N-1,u}^{({\rm over})}(z,w)
:= \frac{1}{(z-a)(w-a)}
\bigg( 
\bfkappa_{N}^{({\rm pre})}(z,w)
-\bfkappa_{N}^{({\rm pre})}(z,u)\frac{q_{2N}^{({\rm pre})}(w)}{q_{2N}^{({\rm pre})}(u)}
+\bfkappa_{N}^{({\rm pre})}(w,u)\frac{q_{2N}^{({\rm pre})}(z)}{q_{2N}^{({\rm pre})}(u)} \bigg). 
\end{equation}
Let us denote 
\begin{equation}
\label{wtilde u kappa over}
    \widetilde{\bfkappa}_{N-1,u}^{(\rm over)}(z,w)
    :=
    ((z-a)(w-a))^4e^{2a^2-2za-2wa}\bfkappa_{N-1,u}^{({\rm over})}(z,w).
\end{equation}
Clearly, we have 
$$\widetilde{\bfkappa}_{N-1,u}^{(\rm over)}(z,w)\to \widetilde{\bfkappa}_{N-1}^{(\rm over)}(z,w), \qquad u \to a. $$   
Letting $u=\sqrt{N}p+\xi\to a=\sqrt{N}p+\chi$ with $p\in(-1,1)$ and $\xi,\chi\in\R$, it follows from \eqref{Tilde q even bulk}, \eqref{Limit pre kernel bulk}, and \eqref{tilde pre kernel bulk} that as $N \to \infty,$ 
\begin{equation}
\label{u limit bulk over}
\widetilde{\bfkappa}_{N-1,u}^{(\rm over)}(z,w)
\sim
\widetilde{\kappa}_{\rm b}^{(\rm pre)}(\zeta,\eta)
-
\widetilde{\kappa}_{\rm b}^{(\rm pre)}(\zeta,\xi)
\frac{\widehat{\cL}_{\rm b}(\eta,\chi)}{\widehat{\cL}_{\rm b}(\xi,\chi)}
+
\widetilde{\kappa}_{\rm b}^{(\rm pre)}(\eta,\xi)
\frac{\widehat{\cL}_{\rm b}(\zeta,\chi)}{\widehat{\cL}_{\rm b}(\xi,\chi)},
\end{equation}
uniformly for $\zeta,\eta$ in compact subsets of $\C$ and $\xi,\chi$ in a compact subset of $\R$. 
Note that by Taylor expansion, as $\xi\to\chi$, we have 
\[
\frac{\widetilde{\kappa}_{\rm b}^{(\rm pre)}(\zeta,\xi)}{\widehat{\cL}_{\rm b}(\xi,\chi)}
=
\frac{1}{4}\bigl((2(\zeta-\chi)^2-1)e^{(\zeta-\chi)^2}-(\zeta-\chi)^2+1 \bigr)
+O(\zeta-\chi).
\]
Therefore, combining \eqref{u limit bulk over} with the above and dividing it by $(\zeta-\chi)^4(\eta-\chi)^4$, 
we obtain \eqref{Thm Overlap Bulk Kernel}. 
This completes the proof of the bulk case.
\end{proof}

We now prove the edge case of Theorem~\ref{Thm_scaling limits}. For this purpose, we need some preparations. First, let us denote 
\begin{align}
\label{FFEdge}
F(\chi)&:=e^{-2\chi^2}-\sqrt{2\pi}\chi\erfc(\sqrt{2}\chi), 
\\
\label{cS1}
\mathcal{S}_1(\zeta,\chi)
&:=e^{(\zeta-\chi)^2}(2(\zeta-\chi)^2-1),
\\
\label{cS2}
\mathcal{S}_2(\zeta,\chi)
&:=\sqrt{\pi}e^{(\zeta-\chi)^2}\erf(\zeta-\chi)(2(\zeta-\chi)^2-1)+2(\zeta-\chi),
\\
\label{cW_wronskian}
\mathcal{W}(\zeta,\chi)
&:=-8e^{(\zeta-\chi)^2}(\zeta-\chi)^2,
\\
\label{frakI}
\mathcal{E}(\eta,\chi)
&:= \partial_{\eta}
\bigg[ \frac{e^{(\eta-\chi)^2}(e^{-2\eta^2}\erfc(\sqrt{2}\chi)-e^{-2\chi^2}\erfc(\sqrt{2}\eta))-4\sqrt{2}\chi \widehat{\kappa}_{\rm e}^{(\rm g)}(\eta,\chi)}{\eta-\chi} \bigg],
\\
    \begin{split}
    \label{inhomogeneous cF}
\mathcal{F}(\zeta,\eta|\chi)  
&:=
2(\zeta-\chi)^3(\eta-\chi)^3e^{2(\zeta-\chi)(\eta-\chi)}\erfc(\zeta+\eta)
\\
&\quad
-
\frac{2(\zeta-\chi)^3(\eta-\chi)^2e^{-2\zeta^2-2\eta^2+(\zeta-\chi)^2+(\eta-\chi)^2}}{\sqrt{\pi}}
\Bigl(
1
-
\sqrt{\frac{\pi}{2}}(\zeta+\chi)
\frac{\erfc(\sqrt{2}\chi)}{F(\chi)}
\Bigr)
\\
&\quad
-\frac{2(\zeta-\chi)^3(\eta-\chi)^2e^{-2\zeta^2+(\zeta-\chi)^2}}{F(\chi)}
\partial_{\eta}\Big[\frac{\widehat{\kappa}_{\rm e}^{(\rm g)}(\eta,\chi)}{\eta-\chi}\Big] 
\\
&\quad + \frac{(\zeta+\chi)(\zeta-\chi)^3(\eta-\chi)^2
e^{-2\zeta^2+(\zeta-\chi)^2}}{\sqrt{2}F(\chi)}
\mathcal{E}(\eta,\chi).
    \end{split}
\end{align}
These are building blocks to define 
\begin{align} 
\begin{split}
\label{cA_Edge}
\mathcal{A}(\zeta,\chi)
&:=
(\zeta-\chi)^2
\Bigl(
2e^{-2\zeta^2+(\zeta-\chi)^2}
\erfc(\sqrt{2}\chi)
+
\mathcal{E}(\zeta,\chi)
\Bigr)
,
\end{split}
\\
\label{cB_Edge}
\mathcal{B}(\chi)
&:=
\frac{4}{3}
\Bigl[
\sqrt{\frac{2}{\pi}}
e^{-4\chi^2}
+
2\chi e^{-2\chi^2}\erfc(\sqrt{2}\chi)
-
4\sqrt{2}\chi \erfc(2\chi)
\Bigr],
\\
\label{cC_Edge}
\mathcal{C}(\zeta,\chi)
&:=
\frac{8}{3\,\mathcal{S}_2(\zeta,\chi)}
\int_{\zeta}^{\chi}
\frac{
\mathcal{S}_1(\zeta,\chi)\mathcal{S}_2(t,\chi)
-\mathcal{S}_2(\zeta,\chi)\mathcal{S}_1(t,\chi)}{(t-\chi)\mathcal{W}(t,\chi)}\mathcal{F}(t,\zeta|\chi)\, dt,
\\
\begin{split}
\label{cK_Edge}
\mathcal{K}(\zeta,\eta|\chi)
&:=
\frac{\mathcal{S}_1(\zeta,\chi)\mathcal{S}_2(\eta,\chi)-\mathcal{S}_1(\eta,\chi)\mathcal{S}_2(\zeta,\chi)}{\mathcal{S}_2(\eta,\chi)}   \int_{\eta}^{\chi}
\frac{\mathcal{S}_2(t,\chi)\mathcal{F}(t,\eta|\chi)}{(t-\chi)\mathcal{W}(t,\chi)}\, dt   
\\
&\quad
+
\int_{\eta}^{\zeta}
\frac{\mathcal{S}_1(t,\chi)\mathcal{S}_2(\zeta,\chi)-\mathcal{S}_1(\zeta,\chi)\mathcal{S}_2(t,\chi)}{(t-\chi)\mathcal{W}(t,\chi)}
\mathcal{F}(t,\eta|\chi)\,dt.
    \end{split}
\end{align}

Indeed, \eqref{cS1} and \eqref{cS2} are fundamental solutions to the differential equation \eqref{Pre Limit Edge ODE 2} below, and \eqref{cW_wronskian} is the Wronskian of \eqref{cS1} and \eqref{cS2}.  

\begin{proof}[Proof of Theorem~\ref{Thm_scaling limits}: edge case]
Let $z=\sqrt{N}+\zeta,u=\sqrt{N}+\xi,a=\sqrt{N}+\chi$ with $u\to a$, i.e., as $\xi\to\chi$. 
As before, we first compute the asymptotic behaviour of $\mathrm{I}_N$, $\mathrm{II}_N$, $\mathrm{III}_N$ and $\mathrm{IV}_N$ given in \eqref{def of IN}, \eqref{def of IIN}, \eqref{def of IIIN}, and \eqref{def of IVN}, respectively. After that, we apply Proposition~\ref{Prop_Christoffel pertubation}. 

Recall that $p=1$. 
For \eqref{def of IN} and by using \eqref{Incomplete Gamma Asymptotics}, as $N\to\infty$, we have
\begin{equation}
  \label{Edge Asym 1}
  \mathrm{I}_N(z,w) = 2(\zeta-\chi)^3(\eta-\chi)^3e^{2(\zeta-\chi)(\eta-\chi)}\erfc(\zeta+\eta)(1+O(N^{-1/2})).
\end{equation}
Similarly, for \eqref{def of IIN} and by \eqref{Incomplete Gamma Asymptotics}, as $N\to\infty$, we have 
\begin{equation}
\mathrm{II}_N(z,w)
=
\frac{2(\zeta-\chi)^3(\eta-\chi)^2e^{-2\zeta^2-2\eta^2+(\zeta-\chi)^2+(\eta-\chi)^2}}{\sqrt{\pi}}
\Big( 1 - \sqrt{\frac{\pi}{2}}(\zeta+\chi)
\frac{\erfc(\sqrt{2}\chi)}{F(\chi)}
\Big)
(1+O(N^{-1/2})).
\end{equation}
By \eqref{def of IIIN}, \eqref{def of IVN} \eqref{def of difference between LN+1 LN}, \eqref{def of LN1}, \eqref{def of LN2}, and \eqref{Aymptotic Edge 1}, as $N\to\infty$, we have
\begin{align}
\begin{split}
&\quad
-\mathrm{III}_N(z,w)+\mathrm{IV}_N(z,w)
\\
&\sim
-\frac{2(\zeta-\chi)^3(\eta-\chi)^2e^{-2\zeta^2+(\zeta-\chi)^2}}{F(\chi)}
\partial_{\eta} \Big[ \frac{\widehat{\kappa}_{\rm e}^{(\rm g)}(\eta,\chi)}{\eta-\chi} \Big] 
+
\frac{(\zeta+\chi)(\zeta-\chi)^3(\eta-\chi)^2
e^{-2\zeta^2+(\zeta-\chi)^2}}{\sqrt{2}F(\chi)}
\mathcal{E}(\eta,\chi). 
\end{split}
\end{align}
Combining all of the above with \eqref{def of IN}, \eqref{def of IIN}, \eqref{def of IIIN}, and \eqref{def of IVN}, as $N\to\infty$, we have 
\begin{equation}
\label{IN INN IIIN IVN sim cFN}
\mathrm{I}_N(z,w)
-
\mathrm{II}_N(z,w)
-
\mathrm{III}_N(z,w)
+
\mathrm{IV}_N(z,w)
\sim
\mathcal{F}(\zeta,\eta|\chi),
\end{equation}
uniformly for $\zeta,\eta$ in compact subsets of $\C$ and $\chi$ in a compact subset of $\R$, where $\mathcal{F}(\zeta,\eta|\chi)$ is given by \eqref{inhomogeneous cF}.

Now, we are ready to complete the proof. 
By \eqref{u over pre kernel} and \eqref{wtilde u kappa over}, 
let us denote 
\begin{equation}
\widetilde{\kappa}_{\mathrm{e}}^{(\mathrm{pre})}(\zeta,\eta)
=
    \lim_{N\to\infty}   \widetilde{\bfkappa}_N^{(\mathrm{pre})}(z,w).
\end{equation}
Then, by Theorem~\ref{Thm_ODE_PreKernel} and \eqref{IN INN IIIN IVN sim cFN}, we have the following limiting differential equation:
\begin{equation}
\label{Pre Limit Edge ODE 2}
\Bigl[
\partial_{\zeta}^2-\frac{2((\zeta-\chi)^2+1)}{\zeta-\chi}\partial_{\zeta}-2
\Bigr]
\widetilde{\kappa}_{\rm e}^{(\rm pre)}(\zeta,\eta)
=
\frac{\mathcal{F}(\zeta,\eta|\chi)}{\zeta-\chi}.
\end{equation}
As previously mentioned, the fundamental solutions for the homogeneous part of \eqref{Pre Limit Edge ODE 2} are given by \eqref{cS1} and \eqref{cS2}, and their Wronskian is given by \eqref{cW_wronskian}. 
Therefore, the general solution to \eqref{Pre Limit Edge ODE 2} is given by 
\begin{equation}
\label{widetilde kappa edge}
\widetilde{\kappa}_{\rm e}^{(\rm pre)}(\zeta,\eta|\chi)
=
c_1\mathcal{S}_1(\zeta,\chi)
+
c_2\mathcal{S}_2(\zeta,\chi)
+
\mathcal{T}(\zeta,\eta|\chi), 
\end{equation}
where
\begin{equation}
\label{cT special solution}
\mathcal{T}(\zeta,\eta|\chi)
:=
    \int_{\eta}^{\zeta}
\frac{\mathcal{S}_1(t,\chi)\mathcal{S}_2(\zeta,\chi)-\mathcal{S}_1(\zeta,\chi)\mathcal{S}_2(t,\chi)}{(t-\chi)\mathcal{W}(t,\chi)}
\mathcal{F}(t,\eta|\chi)\, dt.
\end{equation}
We shall determine constants $c_1,c_2$, which may depend on $\eta,\chi$. 
Due to skew-symmetry of the kernel, we have 
\begin{equation}
\label{solution skew-symmetry}
c_1\mathcal{S}_1(\eta,\chi)+c_2\mathcal{S}_2(\eta,\chi)=0.
\end{equation}
By combining the asymptotic behaviour
\begin{equation}
\label{def of partial cS1cS2}
\partial_{\zeta}\mathcal{S}_1(\zeta,\chi)=2(\zeta-\chi)+6(\zeta-\chi)^3+O((\zeta-\chi)^4),\qquad
\partial_{\zeta}\mathcal{S}_2(\zeta,\chi)=8(\zeta-\chi)^2+O((\zeta-\chi)^4),
\end{equation}
as $\zeta \to \chi$, and 
\[
\partial_{\zeta}\mathcal{T}(\zeta,\eta|\chi)
= \int_{\eta}^{\zeta}
\frac{\partial_{\zeta}\mathcal{S}_2(\zeta,\chi)\mathcal{S}_1(t,\chi)}{(t-\chi)\mathcal{W}(t,\chi)}
\mathcal{F}(t,\eta|\chi) \, dt
-
\int_{\eta}^{\zeta}
\frac{\partial_{\zeta}\mathcal{S}_1(\zeta,\chi)\mathcal{S}_2(t,\chi)\mathcal{F}(t,\eta|\chi)}{(t-\chi)\mathcal{W}(t,\chi)} \, dt,
\]
with \eqref{solution skew-symmetry}, 
we have
\[
c_1=
\int_{\eta}^{\chi}
\frac{\mathcal{S}_2(t,\chi)\mathcal{F}(t,\eta|\chi)}{(t-\chi)\mathcal{W}(t,\chi)} \, dt,\qquad
c_2=
-\frac{\mathcal{S}_1(\eta,\chi)}{\mathcal{S}_2(\eta,\chi)}
\int_{\eta}^{\chi}
\frac{\mathcal{S}_2(t,\chi)\mathcal{F}(t,\eta|\chi)}{(t-\chi)\mathcal{W}(t,\chi)} \, dt.
\]
Therefore, we obtain
\begin{equation}
\label{pre kernel edge solution}
\widetilde{\kappa}_{\rm e}^{(\rm pre)}(\zeta,\eta)
=
\frac{\mathcal{S}_1(\zeta,\chi)\mathcal{S}_2(\eta,\chi)-\mathcal{S}_1(\eta,\chi)\mathcal{S}_2(\zeta,\chi)}{\mathcal{S}_2(\eta,\chi)}   \int_{\eta}^{\chi}
\frac{\mathcal{S}_2(t,\chi)\mathcal{F}(t,\eta|\chi)}{(t-\chi)\mathcal{W}(t,\chi)}dt     
+
\mathcal{T}(\zeta,\eta|\chi)
=
\mathcal{K}(\zeta,\eta|\chi),
\end{equation}
where $\mathcal{K}(\zeta,\eta|\chi)$ is given by \eqref{cK_Edge}. 
By skew-symmetry of the kernel \eqref{cK_Edge}, we have 
\begin{align*}
\mathcal{K}(\zeta,u|\chi)
&= - \mathcal{K}(u,\zeta|\chi) =
-
\frac{\mathcal{S}_1(u,\chi)\mathcal{S}_2(\zeta,\chi)-\mathcal{S}_1(\zeta,\chi)\mathcal{S}_2(u,\chi)}{\mathcal{S}_2(\zeta,\chi)}   \int_{\zeta}^{\chi}
\frac{\mathcal{S}_2(t,\chi)}{(t-\chi)\mathcal{W}(t,\chi)}\mathcal{F}(t,\zeta|\chi)dt     
-
\mathcal{T}(u,\zeta|\chi). 
\end{align*}
To compute the Taylor expansion of \eqref{cK_Edge} at $u=\chi$, note that 
\begin{align*}
\bigl(
\mathcal{S}_1(u,\chi)\mathcal{S}_2(\zeta,\chi)-\mathcal{S}_1(\zeta,\chi)\mathcal{S}_2(u,\chi)
\bigr)|_{u=\chi}
&=
-\mathcal{S}_2(\zeta,\chi),
\\
\partial_{u}\bigl(
\mathcal{S}_1(u,\chi)\mathcal{S}_2(\zeta,\chi)-\mathcal{S}_1(\zeta,\chi)\mathcal{S}_2(u,\chi)
\bigr)|_{u=\chi}
&=0,
\\
\partial_{u}^2\bigl(
\mathcal{S}_1(u,\chi)\mathcal{S}_2(\zeta,\chi)-\mathcal{S}_1(\zeta,\chi)\mathcal{S}_2(u,\chi)
\bigr)|_{u=\chi}
&=2\mathcal{S}_2(\zeta,\chi),
\\
\partial_{u}^3\bigl(
\mathcal{S}_1(u,\chi)\mathcal{S}_2(\zeta,\chi)-\mathcal{S}_1(\zeta,\chi)\mathcal{S}_2(u,\chi)
\bigr)|_{u=\chi}
&=-16\mathcal{S}_1(\zeta,\chi),
\end{align*}
and 
\begin{align*}
(\mathcal{S}_1(t,\chi)\mathcal{S}_2(u,\chi)-\mathcal{S}_1(u,\chi)\mathcal{S}_2(t,\chi))|_{t=u}
&=0,
\\
\partial_{u}(\mathcal{S}_1(t,\chi)\mathcal{S}_2(u,\chi)-\mathcal{S}_1(u,\chi)\mathcal{S}_2(t,\chi))|_{t=u}
&=-8e^{(u-\chi)^2}(u-\chi)^2
,
\\
\partial_{u}^2
(\mathcal{S}_1(t,\chi)\mathcal{S}_2(u,\chi)-\mathcal{S}_1(u,\chi)\mathcal{S}_2(t,\chi))|_{t=u}
&=-16e^{(u-\chi)^2}(u-\chi)((u-\chi)^2+1).
\end{align*}
By Leibniz integral rule, we have 
\begin{align*}
\partial_u^2\mathcal{T}(u,\zeta|\chi) 
&=
\frac{\mathcal{F}(u,\zeta|\chi)}{u-\chi}
+
\int_{\zeta}^u
\frac{\partial_u^2\bigl(\mathcal{S}_1(t,\chi)\mathcal{S}_2(u,\chi)-\mathcal{S}_1(u,\chi)\mathcal{S}_2(t,\chi)\bigr)}{(t-\chi)\mathcal{W}(t,\chi)}
\mathcal{F}(t,\zeta|\chi) \, dt,
\\
\partial_u^3\mathcal{T}(u,\zeta|\chi) 
&=
\partial_u
\frac{\mathcal{F}(u,\zeta|\chi)}{u-\chi}
+
2\frac{(u-\chi)^2+1}{(u-\chi)^2}
\mathcal{F}(u,\zeta|\chi)
+
\int_{\zeta}^u
\frac{\partial_u^3\bigl(\mathcal{S}_1(t,\chi)\mathcal{S}_2(u,\chi)-\mathcal{S}_1(u,\chi)\mathcal{S}_2(t,\chi)\bigr)}{(t-\chi)\mathcal{W}(t,\chi)}
\mathcal{F}(t,\zeta|\chi) \, dt. 
\end{align*}
Combining the above with \eqref{inhomogeneous cF} and \eqref{def of partial cS1cS2}, we have 
\[
\partial_{u}^{j}
\mathcal{K}(\zeta,u|\chi)
|_{u=\chi}
=0,\qquad
j=0,1,2,
\]
and 
\[
\partial_{u}^3
\mathcal{K}(\zeta,u|\chi)
|_{u=\chi}
=
-\partial_{u}^3
\mathcal{K}(u,\zeta|\chi)
|_{u=\chi}
=
\frac{16}{\mathcal{S}_2(\zeta,\chi)}
\int_{\zeta}^{\chi}
\frac{
\mathcal{S}_1(\zeta,\chi)\mathcal{S}_2(t,\chi)
-\mathcal{S}_2(\zeta,\chi)\mathcal{S}_1(t,\chi)}{(t-\chi)\mathcal{W}(t,\chi)}\mathcal{F}(t,\zeta|\chi) \, dt.
\]
Therefore, as $u\to\chi$, we obtain
\[
\widetilde{\kappa}_{\mathrm{{e}}}^{(\mathrm{pre})}(\zeta,u)
=
\mathcal{C}(\zeta,\chi)
(u-\chi)^3
+O\bigl((u-\chi)^4 \bigr),
\]
where $\mathcal{C}(\zeta,\chi)$ is given by \eqref{cC_Edge}.
\begin{comment}
\begin{equation}
\mathcal{C}(\zeta,\chi)
:=
\frac{1}{3\,\mathcal{S}_2(\zeta,\chi)}
\int_{\zeta}^{\chi}
\frac{
\mathcal{S}_1(\zeta,\chi)\mathcal{S}_2(t,\chi)
-\mathcal{S}_2(\zeta,\chi)\mathcal{S}_1(t,\chi)}{(t-\chi)\mathcal{W}(s_1,s_2)(t,\chi)}\mathcal{F}(t,\zeta|\chi)dt
\end{equation}
\end{comment}
 %\eqref{Hat q even 2},   
By \eqref{Skeq Even Hat}, \eqref{Aymptotic Edge 1}, \eqref{Aymptotic Edge 2}, and \eqref{Hat q even edge limit}, 
as $N\to\infty$, we have 
\begin{equation}
    \frac{\widehat{q}_{2N}^{\,(\rm pre)}(z)}{\widehat{q}_{2N}^{\,(\rm pre)}(u)}
    \sim
    \frac{\mathcal{A}(\zeta,\chi)}{(\xi-\chi)^3\mathcal{B}(\chi)},
\end{equation}
uniformly for $\zeta$ in a compact subset of $\C$ and $\chi$ in a compact subset of $\R$, 
where $\mathcal{A}(\zeta,\chi)$ and $\mathcal{B}(\chi)$ are given by \eqref{cA_Edge} and \eqref{cB_Edge}.
Hence, combining all of the above with \eqref{u over pre kernel} and \eqref{wtilde u kappa over}, we have 
\begin{equation}
 \lim_{N\to\infty}
 \widetilde{\bfkappa}_{N-1}^{(\rm over)}(z,w)
 =
 \widetilde{\kappa}_{\mathrm{e}}^{(\rm over)}(\zeta,\eta)
\end{equation}
uniformly for $\zeta,\eta$ in compact subsets of $\C$, and for $\chi$ in a compact subset of $\R$, where  
\begin{align}
\begin{split}
\label{overlap edge kernel widetilde}
\widetilde{\kappa}_{\mathrm{e}}^{(\rm over)}(\zeta,\eta)
&=
\mathcal{K}(\zeta,\eta|\chi)
-
\frac{\mathcal{A}(\eta,\chi)\mathcal{C}(\zeta,\chi)}{\mathcal{B}(\chi)}
+
\frac{
\mathcal{A}(\zeta,\chi)\mathcal{C}(\eta,\chi)}{\mathcal{B}(\chi)}.
\end{split}    
\end{align}
By dividing \eqref{overlap edge kernel widetilde} by $(\zeta-\chi)^4(\eta-\chi)^4$ and combining it with \eqref{finite N overlap k point correlation function}, we obtain \eqref{Lim overlap k point correlation}. This completes the proof.

\end{proof}

\appendix
%%%%%%%%%%%%%%%%%%%%%%%%%%%%%%%%%%%%%%%%%%%%%%%%%%%%%%%%%%%%%%%%%

\section{Planar orthogonal polynomials and a pre-overlap weight function of different variance} \label{Appendix_Planar OP}

In this appendix we explore the effect of changing the variance upon the construction of planar orthogonal polynomials. Due the nature of the point insertion this does not lead to a simple rescaling of the argument of the polynomials, but rather affects their coefficients in a non-trivial way.

Let $\mu$ be a positive Borel measure on $\C$ with an infinite number of points in its support on a domain $D$. 
We define the 
%(skew-)
inner product with respect to $\mu$ on $D$ by  
\begin{equation}
\label{OPinner}
\langle f, g\rangle :=\int_{D} f(z)\overline{g(z)} \, d\mu(z).
\end{equation}
A family of polynomials $(p_k)_{ k \in \mathbb{Z} }$ is called planar orthogonal polynomials (OP) associated with $d\mu(z)=w(z)dA(z)$ if 
    \begin{equation}
    \label{OP norm}
    \langle p_k, p_\ell \rangle = h_k \,\delta_{k,\ell},
    \end{equation}
    where $h_k$ is the squared norm and $\delta$ is the Kronecker delta.

Below we construct the planar OPs $p_k^{ \rm (pre) }(z)$ associated with the parameter dependent weight function on $D=\mathbb{C}$,
\begin{equation}
\label{pre overlap weight c-version}
    \omega_{\sigma}^{(\rm pre)}(z)
    :=
    \left(1+|z-a|^2\right)
    \exp\left(-\sigma|z|^2\right),\qquad
    z,a\in\C,\quad \sigma>0,
\end{equation}
generalising the weight \eqref{PreOverlapWeight} with $\sigma=2$. Notice that when changing variables $z\to z'=\sqrt{\sigma}\,z$ and denoting $a'=\sqrt{\sigma}\,a$, the weight changes to 
\begin{equation}
  \omega_{\sigma}^{(\rm pre)}(z') \, dA(z')
  =\frac{1}{\sigma^2}
    \left(\sigma+|z'-a'|^2\right)\exp\left(-|z'|^2\right) \,dA(z'),
\end{equation}
changing the prefactor and thus the content of the point charge insertion.

%Because we are following the method proposed in \cite{ATTZ20} closely, where $\sigma=1$, we can be brief. 
%Note that the case studied in this article is $\sigma=2$ by \eqref{PreOverlapWeight}. 

\begin{prop}[\textbf{Planar orthogonal polynomial associated with weight of changing variance}] \label{Prop_OP pre}
For any given integer $p\in\mathbb{N}\cup\{0\}$ and $\sigma>0$, let 
\begin{equation}
    \label{Fpc}
    F_{p,\sigma}(x):=
\sum_{j=0}^p
\frac{\Gamma(p-j+\sigma+1)}{\Gamma(j+1)\Gamma(p-j+1)}x^j.
\end{equation}
The monic planar OPs $p_{k,\sigma}^{ \rm (pre) }$ associated with \eqref{pre overlap weight c-version} are then obtained as 
\begin{equation}
\label{OPs c-version}
p_{k,\sigma}^{ \rm (pre) }(z) :=\sum_{j=0}^{k}{a}^{k-j}\frac{F_{j,\sigma}(\sigma|a|^2)}{F_{k,\sigma}(\sigma|a|^2)}z^j,
\end{equation} 
with squared norms
\begin{equation}
\label{OPcnorm}
h_{k,\sigma}^{\rm (pre)}= \frac{(k+1)!}{\sigma^{k+2}}\frac{F_{k+1,\sigma}(\sigma|a|^2)}{F_{k,\sigma}(\sigma|a|^2)}.
\end{equation}
Furthermore, they satisfy the following non-standard three term recurrence relation 
for $k\geq 1$:
\begin{equation}  \label{non-stand 3 term c}
z\, p_{k,\sigma}^{ \rm (pre) }(z) =p_{k+1,\sigma}^{ \rm (pre) }(z)+b_k\, p_{k,\sigma}^{ \rm (pre) }(z) +z \, c_k \,p_{k-1,\sigma}^{ \rm (pre) }(z),
\end{equation} 
where 
\begin{equation}
b_k=-a\, \frac{ F_{k,\sigma}(\sigma|a|^2) }{ F_{k+1,\sigma}(\sigma|a|^2) } ,
\qquad  c_k=a\, \frac{ F_{k-1,\sigma}(\sigma|a|^2 ) }{ F_{k,\sigma}(\sigma|a|^2 ) }. 
\end{equation}
\end{prop}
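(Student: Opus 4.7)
The plan is to reduce the proposition to a single algebraic identity for the functions $F_{p,\sigma}$ and then verify that identity via a generating function. First I would expand $\omega_{\sigma}^{(\rm pre)}(z)=\bigl((1+|a|^2)+|z|^2-z\bar a-\bar z a\bigr)e^{-\sigma|z|^2}$ and use the basic Gaussian moments $\int_{\mathbb{C}} z^k\bar z^{\ell}e^{-\sigma|z|^2}\,dA(z)=\delta_{k,\ell}\,k!/\sigma^{k+1}$ to compute $m_{k,\ell}=\int z^k\bar z^\ell\,\omega_{\sigma}^{(\rm pre)}\,dA$. A short calculation gives $m_{k,k}=(k!/\sigma^{k+2})\bigl(\sigma(1+|a|^2)+k+1\bigr)$, $m_{k-1,k}=-\bar a\,k!/\sigma^{k+1}$, and $m_{k+1,k}=-a(k+1)!/\sigma^{k+2}$, with all remaining entries vanishing. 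Hence the moment matrix is Hermitian and tri-diagonal.

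Writing $p_{k,\sigma}^{(\rm pre)}(z)=\sum_{j=0}^{k}\alpha_{k,j}\,z^j$ with $\alpha_{k,k}=1$, tri-diagonality causes the orthogonality conditions $\langle p_{k,\sigma}^{(\rm pre)},z^j\rangle=0$ for $0\le j\le k-1$ to collapse to the three-term recursion $a(j+1)\alpha_{k,j+1}=\bigl(\sigma(1+|a|^2)+j+1\bigr)\alpha_{k,j}-\sigma\bar a\,\alpha_{k,j-1}$. Substituting the ansatz $\alpha_{k,j}=a^{k-j}F_{j,\sigma}(\sigma|a|^2)/F_{k,\sigma}(\sigma|a|^2)$ and setting $x=\sigma|a|^2$, the recursion reduces, after cancelling the common factor $a^{k-j}/F_{k,\sigma}(x)$, to the key identity
$(j+1)F_{j+1,\sigma}(x)=(\sigma+x+j+1)F_{j,\sigma}(x)-x\,F_{j-1,\sigma}(x)$
for all $j\ge 0$, with the convention $F_{-1,\sigma}\equiv 0$. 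I would prove this identity by introducing the generating function $G_\sigma(x,t):=\sum_{p\ge 0}F_{p,\sigma}(x)\,t^p$. Swapping the order of summation in \eqref{Fpc} and invoking the classical binomial series $\sum_q\frac{\Gamma(q+\sigma+1)}{q!}t^q=\Gamma(\sigma+1)(1-t)^{-(\sigma+1)}$ yields the closed form $G_\sigma(x,t)=\Gamma(\sigma+1)\,e^{xt}(1-t)^{-(\sigma+1)}$, which plainly satisfies the first-order ODE $(1-t)\partial_t G_\sigma=(\sigma+x+1-xt)\,G_\sigma$; extracting coefficients of $t^j$ from this ODE is exactly the key identity.

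With the explicit form in hand, the squared norm follows from $h_{k,\sigma}^{(\rm pre)}=\langle p_{k,\sigma}^{(\rm pre)},z^k\rangle=\alpha_{k,k-1}\,m_{k-1,k}+m_{k,k}$; substituting $\alpha_{k,k-1}=a\,F_{k-1,\sigma}(x)/F_{k,\sigma}(x)$ and invoking the key identity once more collapses the bracket to $(k+1)F_{k+1,\sigma}(x)$, giving \eqref{OPcnorm}. Finally, the non-standard three-term recurrence \eqref{non-stand 3 term c} is checked by matching coefficients of $z^j$ on both sides using \eqref{OPs c-version}: the constant term forces $b_k=-a\,F_{k,\sigma}/F_{k+1,\sigma}$, a general coefficient with $1\le j\le k$ forces $c_k=a\,F_{k-1,\sigma}/F_{k,\sigma}$, and the top two coefficients match automatically since $\alpha_{k+1,k+1}=\alpha_{k,k}=1$ and $\alpha_{k-1,k}=0$. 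The whole argument hinges on the key identity for $F_{p,\sigma}$, and producing the generating function $G_\sigma$ is the main obstacle to navigate cleanly; once that is in place, every remaining step is a short bookkeeping calculation.
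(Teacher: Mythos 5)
Your proof is correct, and the route through the problem differs from the paper's in two organizational respects. The paper first splits off a scalar factor from the moment matrix $\mathfrak{m}_{i,j}$ to get a reduced tri-diagonal matrix $\mu$, performs an $LDU$ decomposition, linearises the resulting scalar recursion for $d_p$ by setting $d_p=r_{p+1}/r_p$ to arrive at $r_{p+1}+xp\,r_{p-1}=(x+p+1+\sigma)r_p$, asserts by induction that $r_p=\tfrac{\Gamma(p+1)}{\Gamma(\sigma+1)}F_{p,\sigma}(x)$, and then reads the OPs off from $L^{-1}$. You bypass the $LDU$ machinery: you impose $\langle p_k,z^j\rangle=0$ directly, use tri-diagonality to reduce this to a three-term recursion for $\alpha_{k,j}$, and plug in the ansatz, which collapses to $(j+1)F_{j+1,\sigma}=(\sigma+x+j+1)F_{j,\sigma}-xF_{j-1,\sigma}$. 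This is exactly the paper's linearised recurrence (divide by $p!/\Gamma(\sigma+1)$), but you prove it via the closed-form generating function $G_\sigma(x,t)=\Gamma(\sigma+1)e^{xt}(1-t)^{-(\sigma+1)}$ and its first-order ODE, which is a genuine improvement over the paper's ``not difficult to show by induction.'' The norm and recurrence computations then follow the same pattern in both. In short: same key identity, but your approach is more self-contained and supplies the verification the paper omits, at the cost of not exhibiting the $LDU$ structure the paper later reuses in the skew-orthogonal setting (Theorem~\ref{thm_Recurrence}).
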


We also mention that the corresponding planar OPs $p_k^{ \rm (over) }$ with respect to the perturbed weight (now depending on $\sigma$) could also be constructed,  using Proposition~\ref{Prop_OP pre} and \cite[Eq.(3.6)]{AV03}.
\begin{proof}
We  denote the moment matrix associated with \eqref{pre overlap weight c-version} by 
\begin{equation}
\label{moment-def}
\mathfrak{m}_{i,j}:=\int_{\C}z^i\zbar^j\omega_{\sigma}^{(\rm pre)}(z) \, dA(z).
\end{equation}
It is well known that the kernel of OP can be obtained form the inverse moment matrix. In turn, if we derive an LDU decomposition of $\mathfrak{m}_{i,j}$, where $L$ and $U$ are lower and upper triangular matrices, respectively, the latter two can be easily inverted, and the planar OPs follow in turn as well. This strategy was employed in \cite{ATTZ20} which we shall follow closely.

In a first step we split off a common factor to define a reduced moment matrix $\mu_{i,j}$
\begin{equation}
\label{mudef}
\mathfrak{m}_{i,j}
=i!\frac{1}{\sigma^{i+2}}\mu_{i,j},\qquad
\mu_{i,j}:=\Bigl(a\overline{a}\sigma+i+1+\sigma\Bigr)\delta_{i,j}-\overline{a}(i+1)\delta_{i+1,j}-a\sigma\delta_{i,j+1}.
\end{equation}
If we decompose the matrix $\mu=(\mu_{i,j})$ as a LDU decomposition $\mu=LDU$, where $D=(D_{p,q})$, $L=(L_{p,q})$, and $U=(U_{p,q})$ with 
\begin{equation}
D_{p,q}:=d_p\delta_{p,q},\quad L_{p,q}:=\delta_{p,q}+l_{p}\delta_{p,q+1},
\quad U_{p,q}=\delta_{p,q}+u_q\delta_{q,p+1}\  \mbox{for}\ p,q\in\mathbb{Z}_{\geq0}
\label{LDUdef}
\end{equation}
Then, multiplying out $LDU$ and comparing with \eqref{mudef}, we can read off the following
\begin{equation}
d_pu_{p+1}=-\overline{a}(p+1),\qquad
d_pl_{p+1}=-a\sigma,\qquad
d_{p-1}l_pu_p\mathbf{1}_{p\geq1}+d_p
=a\overline{a}\sigma+p+1+\sigma. 
\end{equation}
It follows from the above relationships that 
\begin{equation}
\label{dkrec}
\frac{xp}{d_{p-1}}+d_p=x+p+1+\sigma,
\qquad
d_0=x+1+\sigma, 
\end{equation}
after defining 
\begin{equation}
\label{x-def}
x:=\sigma|a|^2.
\end{equation}
Let us denote $d_p=r_{p+1}/r_{p}$, setting $r_0=1$ and thus $r_1=d_0$. Then, we obtain the following linearised recurrence for the $r_k$ from \eqref{dkrec}
\begin{equation}
r_{p+1}+xpr_{p-1}=(x+p+1+\sigma)r_p.
\label{rkrec}
\end{equation}
It is not difficult to show by induction that the unique solution for the above recurrence equation is given by 
\begin{equation}
r_p
%\frac{\Gamma(p+1)}{\Gamma(\sigma+1)}\sum_{j=0}^p\frac{\Gamma(p-j+\sigma+1)}{\Gamma(j+1)\Gamma(p-j+1)}x^j
=\frac{\Gamma(p+1)}{\Gamma(\sigma+1)}
F_{p,\sigma}(x),
\label{rksol}
\end{equation}
with $F_{p,\sigma}(x)$ defined in \eqref{Fpc}.
We can now include the additional factor in \eqref{mudef} to convert the LDU decomposition of $\mu$ into a LDU decomposition of the matrix $\mathfrak{m}=(\mathfrak{m}_{i,j})$. Updating the notations, and defining $F_{-1,\sigma}(x):=0$, we find that $\mathfrak{m}=LDU$, where for $p,q=0,1,2,\dots$,
\begin{equation}
    L_{p,q}=\delta_{p,q}-a\frac{F_{p-1,\sigma}(x)}{F_{p,\sigma}(x)}\delta_{p,q+1},
    \qquad
    D_{p,p}=\frac{(p+1)!}{\sigma^{p+2}}\frac{F_{p+1,\sigma}(x)}{F_{p,\sigma}(x)},
    \qquad
    U_{p.q}=\delta_{p,q}-\overline{a}\frac{F_{p-1,\sigma}(x)}{F_{p,\sigma}(x)}\delta_{q,p+1}.
    \label{LDUsol}
\end{equation}
The inverse matrices of $L$ and $U$ are given by 
\[
(L^{-1})_{p,q}
=
\begin{cases}
    0, & q>p,
    \smallskip 
    \\
    1, & q=p,
    \smallskip 
    \\
    a^{p-q}\frac{F_{q,\sigma}(x)}{F_{p,\sigma}(x)}, & q<p,
\end{cases}
\qquad
(U^{-1})_{p,q}
=
\begin{cases}
    \overline{a}^{q-p}\frac{F_{p,\sigma}(x)}{F_{q,\sigma}(x)}, & q>p,
    \smallskip 
    \\
    1, & q=p,
    \smallskip 
    \\
    0, & q<p.
\end{cases}
\]
Since the planar orthogonal polynomials $\{p_{k,\sigma}^{(\rm pre)}\}_k(z)$ associated with \eqref{pre overlap weight c-version} are given by
\begin{equation}
p_{k,\sigma}^{(\rm pre)}(z)=\sum_{j=0}^k({L}^{-1}_{k,j})z^j, 
\end{equation}
we obtain \eqref{OPs c-version}. (We mention that compared to \cite{ATTZ20}, our inner product \eqref{OPinner} has complex conjugation on the right factor, thus there is no conjugation of $L^{-1}$ here.)
The squared norms \eqref{OPcnorm} follow from 
\begin{equation}
 \langle p_k, p_\ell \rangle = D_{k,k} \,\delta_{k,\ell},
\end{equation}
with \eqref{LDUsol}.
Furthermore, it is straightforward to check that the polynomials \eqref{OPs c-version} satisfy the non-standard three-term recurrence \eqref{non-stand 3 term c}. It follows when inserting the definition \eqref{OPs c-version} and comparing coefficients, using the recurrence \eqref{rkrec} in terms of the $r_p$ from \eqref{rksol}.
\end{proof}

\begin{ex}[$\sigma=1,2$]
For $\sigma=1$, 
\begin{equation}
\label{Fpc=1}
F_{p,1}(x)=\sum_{j=0}^p
(p+1-j)\frac{x^j}{j!}
= (p+1)e_p(x)-xe_{p-1}(x)=
f_p(x),
\end{equation}
where $f_p(x)$ is given by \eqref{fsum}. 
Here, $e_p(x)$ is the truncated exponential given by \eqref{exponential_sum}, and we set $e_{-1}(x)=e_{-2}(x)\equiv 0$.
Then, the OPs $p_{k,1}^{(\rm pre)}$ associated with \eqref{pre overlap weight c-version} and the squared norms \eqref{OPcnorm} can be written as 
\begin{equation}
 p_{k,1}^{(\rm pre)}(z) = \sum_{j=0}^{k}a^{k-j}\frac{f_j(|a|^2)}{f_k(|a|^2)}z^j,\qquad
 h_{k,1}^{(\rm pre)}:=(k+1)!\frac{f_{k+1}(|a|^2)}{f_k(|a|^2)}.
\end{equation}
This was derived in \cite[Subsection 3.4]{ATTZ20}. 

For $\sigma=2$ we have 
\begin{equation} \label{def of F_p}
%F_p(x)\equiv 
F_{p,2}(x)=
(p+2)(p+1)e_p(x)-2(p+1)xe_{p-1}(x)+x^2e_{p-2}(x).
\end{equation}
Then, \eqref{pre overlap weight c-version} becomes \eqref{PreOverlapWeight}, and hence, the OPs $p_{k,2}^{ \rm (pre) }$ associated with \eqref{PreOverlapWeight} and the squared norms \eqref{OPcnorm} can be written as 
\begin{equation}
\label{OPs}
p_{k,2}^{ \rm (pre) }(z) =\sum_{j=0}^{k}a^{k-j}\frac{F_{j,2}(2|a|^2)}{F_{k,2}(2|a|^2)}z^j,\qquad
h_{k,2}^{(\rm pre)} = \frac{(k+1)!}{2^{k+2}}\frac{F_{k+1,2}(2|a|^2)}{F_{k,2}(2|a|^2)}.
\end{equation} 
Both examples satisfy the non-standard three term recurrence relation \eqref{non-stand 3 term c}.
%Furthermore, they satisfy a non-standard three term recurrence relation \begin{equation}  \label{non-stand 3 term}
%z\, p_k^{ \rm (pre) }(z) =p_{k+1}^{ \rm (pre) }(z)+b_k\, p_k^{ \rm (pre) }(z) +z \, c_k \,p_{k-1}^{ \rm (pre) }(z),
%\end{equation} 
%where 
%\begin{equation}
%b_k=-a\, \frac{ F_k(2|a|^2) }{ F_{k+1}(2|a|^2) } ,
%\qquad  c_k=a\, \frac{ F_{k-1}(2|a|^2 ) }{ F_{k}(2|a|^2 ) }. 
%\end{equation}
%From comparing \eqref{Fpc=1} and \eqref{def of F_p} at $\sigma=1$ and 2 we readily see that the corresponding polynomials \eqref{OPs c-version}  and norms are sensitively affected by changing the variance through $\sigma$. 
\end{ex}

\subsection*{Acknowledgments}
We thank Mark Crumpton, Yan Fyodorov, Hiroyuki Ochiai, Tomoyuki Shirai, Roger Tribe, and Oleg Zaboronski for their interest and helpful discussions.
Gernot Akemann was partly was supported the Deutsche Forschungsgemeinschaft (DFG) grant SFB 1283/2 2021-317210226 and a Leverhulme Visiting Professorship, grant VP1-2023-007. He is indebted to the School of Mathematics, University of Bristol, where part of this research was completed.
Sung-Soo Byun was supported by the POSCO TJ Park Foundation (POSCO Science Fellowship), by the New Faculty Startup Fund at Seoul National University and by the National Research Foundation of Korea funded by the Korea government (NRF-2016K2A9A2A13003815, RS-2023-00301976, RS-2025-00516909).
Kohei Noda was partially supported by WISE program (JSPS) at Kyushu University, JSPS KAKENHI Grant Numbers 18H01124, 23H01077, JP22H05105, and JP23K25774, and the Deutsche Forschungsgemeinschaft (DFG) grant SFB 1283/2 2021-317210226.

%%%%%%%%%%bibliography%%%%%%%%%%%%%%%%%%%%%%%%%%%%%%%%%%%
\bibliographystyle{abbrv}

\end{document}